\newcommand{\bigO}{\mathcal{O}}
\newcommand{\red}[1]{\textcolor{red}{#1}}
\theoremstyle{definition}
\newtheorem{lemma}{Lemma}
\newtheorem{proposition}{Proposition}
\newtheorem{theorem}{Theorem}
\newtheorem{corollary}{Corollary}
\newtheorem{definition}{Definition}
\newtheorem{fact}{Fact}
\newtheorem*{problem}{Problem}
\newcommand{\onenorm}[1]{\left\| {#1} \right\|_1}
\newcommand{\inftynorm}[1]{\left\| {#1} \right\|_{\infty}}
\newcommand{\id}{\mathbbm{1}}
\newcommand{\density}{\mathcal{D}(\mathcal{H})}
\newcommand{\cD}{\mathcal{D}}
\newcommand{\cB}{\mathcal{B}}
\newcommand{\cH}{\mathcal{H}}
\newcommand{\cP}{\mathcal{P}}
\newcommand{\zero}{\dyad{0^n}{0^n}}
\newcommand{\ketzero}{\ket{0^n}}
\newcommand{\ketexcited}{\ket{\psi^{\epsilon}}}
\newcommand{\bounded}{\mathcal{B}(\mathcal{H})}
\newcommand{\params}{(\delta,r,t)}
\newcommand{\triple}{(P,\,\,\Pi, \,\rho_{\Lambda})}
\newcommand{\purdist}{D_{p}}
\newcommand{\fid}{F}
\newcommand{\purifieddistance}{D_{p}}
\newcommand{\locProj}[1]{\frac{\Pi_{R}^{0} {#1} \Pi_{R}^{0}}{\text{Tr}\left[\Pi_{R}^{0} {#1} \Pi_{R}^{0}\right]}}
\title{Circuit depth versus energy in topologically ordered systems}
\date{}
\author{Arkin Tikku\thanks{Centre for Engineered Quantum Systems,
University of Sydney, Sydney, New South Wales 2006, Australia. \href{mailto:atik8650@uni.sydney.edu.au}{atik8650@uni.sydney.edu.au}}  \and Isaac H. Kim\thanks{Department of Computer Science, University of California, Davis, CA 95616, USA.\newline \href{mailto:ikekim@ucdavis.edu}{ikekim@ucdavis.edu}}}
\begin{document}
\maketitle

\begin{abstract}
We prove a nontrivial circuit-depth lower bound for preparing a low-energy state of a locally interacting quantum many-body system in two dimensions, assuming the circuit is geometrically local. For preparing any state which has an energy density of at most $\epsilon$ with respect to Kitaev's toric code Hamiltonian on a two dimensional lattice $\Lambda$, we prove a lower bound of $\Omega\left(\min\left(1/\epsilon^{\frac{1-\alpha}{2}}, \sqrt{\abs{\Lambda}}\right)\right)$ for any $\alpha >0$. We discuss two implications. First, our bound implies that the lowest energy density obtainable from a large class of existing variational circuits (e.g., Hamiltonian variational ansatz) cannot, in general, decay exponentially with the circuit depth. Second, if long-range entanglement is present in the ground state, this can lead to a nontrivial circuit-depth lower bound even at nonzero energy density. Unlike previous approaches to prove circuit-depth lower bounds for preparing low energy states, our proof technique does not rely on the ground state to be degenerate. 
\end{abstract}


\section{Introduction}
\label{sec:introduction}

Understanding the properties of ground- and excited states of local Hamiltonians and calculating their energies is pivotal to advancing condensed matter physics and quantum chemistry \cite{McArdle_2020}. Tackling these problems using classical methods is in general challenging due to the exponential nature of the state space. Even on a quantum computer, it is generally not possible to efficiently estimate ground state energies of local Hamiltonians up to polynomial precision, as established by the quantum Cook-Levin theorem \cite{kitaev2002classical}. 
Finding ways to solve this problem efficiently for restricted classes of Hamiltonians is therefore an active area of research \cite{Lin_20}.

Recent experimental advances have moreover spurred the study of this problem using small-scale, noisy quantum devices \cite{Preskill_2018}.
Given the limitations of current hardware, such as short coherence times, very few qubits and limited connectivity, it is therefore important to understand which lattice models of interest allow for resource-efficient preparation in this constrained setting. A reasonable metric for this assessment is circuit depth, since the most common state preparation methods on near-term devices rely on preparing a state using some variational ansatz quantum circuit  \cite{VQE_original}.

The hope is then that there exist interesting models that require only modest circuit depth, to give good approximations to some target state or energy. If, for example, circuits of polynomial depth are required, one may worry that the short coherence times of such devices would prohibit a good quality of approximation.

In fact, for many commonly studied lattice models in 2D \cite{Kitaev2003, Levin_2005}, preparing exact ground states is known to be a task that in fact does  require circuits of depth polynomial in the system size if the gates are nearest-neighbor only \cite{Bravyi_hastings_verstraete, Haah_2016}. One may however hope that preparing \emph{some} low-energy state as opposed to the exact ground state might be significantly easier. 

In this work, we show that preparing such states still requires circuits of depth polynomial in their inverse energy density. Our main technical contribution is to give a rigorous proof technique for circuit lower bounds of low-energy states of Kiteav's toric code with no assumption on the dimension of the ground space.

We do so by generalizing a formalism  introduced in \cite{Haah_2016}. This formalism allows to prove circuit depth lower bounds for states for which one can find non-trivial pairs of operators that act trivially on sufficiently small subsystems. Finding such operators for an arbitrary state is in general hard.

To prove our circuit depth lower bound for low-energy states, we extend this formalism to allow for operators that slightly perturb the state on the subsystem. We show that operators that act trivially on ground states satisfy this criterion when acting on sufficiently small subsystems of low-energy states and use this fact to arrive at our main result. 

Interestingly, this proof technique does not rely on the dimension of the ground space of the model. It can furthermore be applied to low-energy states of other frustration-free, local commuting models in 2D such as the quantum double or Levin-Wen models \cite{Kitaev2003,Levin_2005}. 

Circuit depth lower bounds for low-energy states of local Hamiltonians have meanwhile played an important role in the quest for a quantum analogue of the PCP theorem \cite{PCP_review,anshu_nirkhe, NLTS}. These results established circuit lower bounds for stabilizer Hamiltonians that are not geometrically local in any fixed spatial dimension and considered their preparation assuming quantum circuits with all-to-all connectivity. An additional key assumption of these works is that dimension of the ground space grows with system size. Our work lifts this assumption and gives a clean and rigorous way to prove lower bounds on the depth of geometrically local circuits that prepare low-energy states of frustration-free, local commuting models with arbitrary ground space dimension. We believe it should be possible to generalize our techniques to local Hamiltonians and circuits on arbitrary interaction graphs and leave this for future work.
\\

Our paper is structured as follows: in Section \ref{sec:prelims} we cover notation and some basic definitions. In Section \ref{sec:main_results}, we state our main result and discuss it in a broader context. In Section \ref{sec:local_invisibility} we introduce the notions of exact and approximate local invisibility of operators, which will form the backbone of our circuit lower bound technique. In Section \ref{sec:LMPs} we will introduce a formal aid that we call a local marginal projector. This type of projection will be key to proving lemmas en route to our main result. In Section \ref{sec:twist_product}, we introduce the twist product, which gives us a way to detect long-range entanglement in a state when combined with locally invisible operators. In Section \ref{sec:low_energy_states} we finally discuss how to adapt our proof technique to low-energy states and prove the main result. We conclude with some discussion on potential future applications of our technique in Section \ref{sec:conclusion}.

\section{Preliminaries and notation}
\label{sec:prelims}
 
Here we cover some basic definitions and notation about quantum spin systems on lattices and distance measures used throughout the paper. 

We consider a quantum spin system on a 2D lattice $\Lambda$ with arbitrary boundary conditions. 
For any pair of sites $v,u\in \Lambda$, their distance $d(v,u)$ is the minimal number of edges that connect $v$ and $u$. For $A, B\subset \Lambda$, their distance is defined as
\begin{equation}
    d(A,B) := \min_{u\in A, v\in B} d(u,v).
\end{equation}
For any site $v\in \Lambda$, define a $r$-disk at $v$ as 
\begin{equation}
    v(r) := \{u: d(v,u)\leq r \}.
\end{equation}
More generally, for any set of sites $A\subset \Lambda$, define 
\begin{equation}
    A(r) := \bigcup_{v\in A} v(r)
\end{equation}
as a $r$-thickening of the region $A$. Given a disc $A$ we will  sometimes refer to $A(-D)$ as the $D$-interior of $A$, which defines all points inside of $A$ that are a distance $D$ away from the boundary of $A$, i.e. $A(-D):=\{v\in A| d(v,x)>D, \forall x\in A^c\}$. We will denote the radius of a disc $A$ by $\text{rad}(A)$.

The global Hilbert space is assumed to have a tensor product structure:
\begin{equation}
    \mathcal{H}_{\Lambda} := \bigotimes_{v\in\Lambda} \mathcal{H}_v,
\end{equation}
where $\dim(\mathcal{H}_v) < \infty$. While we will refer to lattice sites as qubits throughout the text, we shall note that all of our results equally apply to qudit systems. We shall define the Hilbert space associated with subsystem $A\subset \Lambda$ as $\mathcal{H}_A := \bigotimes_{v\in A} \mathcal{H}_v$. The supports of the operators and density matrices shall be denoted in their subscripts by capital roman letters, unless this is obvious from the context. For instance, $O_A$ and $\rho_A$ is an operator and a density matrix, respectively, acting on $\mathcal{H}_A$. 
Given a pure state vector $\ket{\psi}_{\Lambda}$ we shall denote its associated density matrix by $\rho_{\Lambda}$ and it's reduced state (marginal) onto region $A\subset \Lambda$ by $\rho_{A}$, i.e. $\rho_{A}=\Tr_{A^c}[\rho_{\Lambda}]$, where $A^{c}:=\Lambda\backslash A$, i.e. it denotes the complement of $A$ in $\Lambda$. Given an operator $O$, we shall denote its support as $\text{supp}(O)$.

The fidelity between two quantum states $\rho, \sigma \in \mathcal{D}(\mathcal{H})$ is denoted by $\fid(\rho, \sigma) = \|\rho^{\frac{1}{2}}\sigma^{\frac{1}{2}} \|_{1}^{2}$, where $\lVert.\rVert_{1}$ denotes the Schatten 1-norm. A particularly convenient distance measure for our purposes will be the \emph{purified distance} \cite{marco_textbook}, defined as
\begin{equation}
    \purifieddistance(\rho, \sigma) := \sqrt{1-\fid(\rho, \sigma)}.
\end{equation}
From the Fuchs-van-de-Graaf inequality \cite{fuchs1998} it follows that the purified distance upper bounds the trace distance, i.e.  
\begin{align}\label{eq:purdist_tracedist_relations}\frac{1}{2}\onenorm{\rho-\sigma}\leq  \purdist{(\rho, \sigma)} .
\end{align}
For the special case of pure states, the purified distance equals the trace distance, i.e. the above inequality is saturated. 
The purified distance satisfies several desirable properties \cite{marco_textbook}. Let $\Phi: \mathcal{B}(\mathcal{H}) \to \mathcal{B}(\mathcal{H}')$ be a completely positive trace-preserving map and $U$ be a unitary. Then
\begin{align}
    &\purifieddistance(\rho, \tau) \leq \purifieddistance(\rho, \sigma) + \purifieddistance(\sigma, \tau), &\text{(Triangle inequality)} \label{eq:triangle} \\
    &\purifieddistance(\Phi(\rho), \Phi(\tau)) \leq \purifieddistance(\rho, \sigma), &\text{(Monotonicity)} \label{eq:monotonicity} \\
    &\purifieddistance(U\rho U^{\dagger}, U\sigma U^{\dagger}) = \purifieddistance(\rho,\sigma ) &\text{(Unitary invariance)} \label{eq:unitary_invariance}
\end{align}

We will consider geometrically local Hamiltonians $H_{\Lambda}$ on a regular lattice that without loss of generality can be written in the form:
\begin{equation}
    H_{\Lambda} = \sum_{X\subset \Lambda} h_X,\label{eq:ham_form}
\end{equation}
where each $h_X$ is a hermitian operator supported on a set of sites contained in a $2\times 2$ square. Let us remark that one can always coarse-grain a locally interacting quantum many-body systems in two dimensions to ensure such a form. We shall choose a normalization condition such that each $h_X$ satisfies $\inftynorm{h_{X}}= 1$, where $\inftynorm{ \cdot }$ is the operator norm. We shall also assume that the ground state energy is chosen to be $0$; this can be always achieved by shifting and reweighting the terms of the Hamiltonian. 
For our purposes, it will suffice to restrict ourselves to  Hamiltonians $H_{\Lambda}$ for which the global ground state is also a ground state of all the local terms, i.e. if  $\Pi^{0}_{\Lambda}$ is the projection onto the ground space of $H_{\Lambda}$ and  $\Pi^{0}_{X}$ is the projection onto the ground space of a local term $h_{X}$, then  $\Pi^{0}_{X}\Pi^{0}_{\Lambda}=0$.
This property is known as \textit{frustration-freeness}. 

Apart from ground states, we will also be interested in low-energy states $\rho^{\epsilon}$ of a Hamiltonian $H$, where $\Tr[H\rho^{\epsilon}]\leq \epsilon\inftynorm{H}$ for some small $\epsilon\in [0,1]$, which we will refer to as the \textit{energy density}. For lattice Hamiltonians $H_{\Lambda}$ we will assume that $\inftynorm{H_{\Lambda}}=\bigO(\abs{\Lambda})$. 

\begin{figure}[ht]
    \centering
    
\begin{tikzpicture}[scale=0.6]
\draw[step=2, black, thick] (-1,-1) grid (11,11); 
\foreach \x in {0,1,...,5}{
    \foreach \y in {0,1,...,5}{
        \node[draw,circle,fill,minimum size=0.5] at (2*\x,2*\y) {};
      }
    }
\foreach \x in {0,...,5}{
        \node[draw,circle,minimum size= 1] at (2*\x,2*0) {};}

\foreach \y in {1,...,5}{
        \node[draw,circle,minimum size= 1] at (2*5,2*\y) {};

      } 

\foreach \x in {0,4,8}{
    \foreach \y in {0,4,8}{
        \fill[] (\x +0.15,0 + \y +0.15) rectangle (2+\x-0.15,2+\y-0.15);          
      }
    }
\fill[] (2 +0.15,2 +0.15) rectangle (4-0.15,4-0.15);
\fill[] (2 +0.15,6 +0.1) rectangle (4-0.15,8-0.15);
\fill[] (6 +0.15,6 +0.15) rectangle (8-0.15,8-0.15);
\fill[] (6 +0.15,2 +0.15) rectangle (8-0.15,4-0.15); 
        
    \fill[] (-1,-1) rectangle (0-0.15,0-0.15);
    \fill[] (2 +0.15, 0-0.15) rectangle (4-0.15,-1);
    \fill[] (2 +0.15, 0-0.15) rectangle (4-0.15,-1);
    \fill[] (6 +0.15, 0-0.15) rectangle (8-0.15,-1);
    \fill[] (10+0.15,-1) rectangle (11,0-0.15);
    
    \fill[] (-1,10 +0.15) rectangle (0-0.15,11);
    \fill[] (2 +0.15,10 +0.15) rectangle (4-0.15,11);
    \fill[] (6 +0.15,10 +0.15) rectangle (8-0.15,11);        \fill[] (10 +0.15, 10 +0.15) rectangle (11,11);
    \fill[] (10+0.15,-1) rectangle (11,0-0.15);
    
    \fill[] (-1,2 +0.15) rectangle (0-0.15, 4 -0.15);
    \fill[] (-1,6 +0.15) rectangle (0-0.15, 8 -0.15);
    
    \fill[] (10 +0.15,2 +0.15) rectangle (11, 4 -0.15);
    \fill[] (10 +0.15,6 +0.15) rectangle (11, 8 -0.15);
    \node[text=black] at (3, 1) {\Large $\bm{s^{X}_{i}}$};
    \node[text=white] at (5, 5) {\Large $\bm{s^{Z}_{i}}$};
\end{tikzpicture}
    \caption{A $L=6$ instance of the square toric code lattice with qubits placed on vertices and periodic boundary conditions. Black faces denote $Z$-type stabilizers and white faces denote $X$-type stablizers.} 
    \label{fig:toric_lattice}
\end{figure}
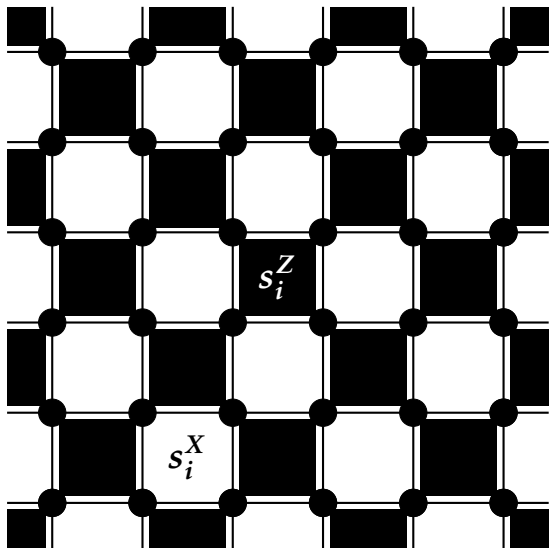

While our results apply to more general classes of models such as Kitaev's quantum double \cite{Kitaev2003} or the Levin-Wen model \cite{Levin_2005}, we focus on Kitaev's toric code~\cite{Kitaev2003} for concreteness. We will sometimes refer to the ground spaces of these models containing states with \textit{long-range entanglement} or \textit{topological order}, by which we mean that they cannot be prepared using geometrically local circuits of constant depth \cite{Haah_2016, Bravyi_hastings_verstraete, Koenig_Pastawski}. This notion can be made more mathematically precise, as we will do in Section \ref{sec:twist_product}.

The toric code Hamiltonian on a lattice $\Lambda$ can be defined in terms of the generators of the stabilizer group $S^{\text{TC}}_{\Lambda}$ of the toric code (see~\cite{GottesmanThesis, Terhal_2015_review} for reviews on stabilizer codes and the toric code). Specifically, one may write it as the sum of projections onto the $(-1)$-eigenspaces of the generators, so that the toric code forms its ground space i.e. 
\begin{align}\label{eq:toric_hamiltonian}
H_{\Lambda}=\sum_{i}\frac{\id-s^{X}_{i}}{2} + \sum_{i}\frac{\id-s^{Z}_{i}}{2}, 
\end{align}
where $\{ s_{i}^{X}\}$ and $\{s_{i}^{Z}\}$ denote the $X$-type and $Z$-type generators respectively, which are geometrically local (see Figure \ref{fig:toric_lattice}). Sometimes we will be interested in considering the \textit{restriction} of the Hamiltonian $H_{\Lambda}$ defined on the full lattice to a region $R\subset\Lambda$. Whe shall denote this restricted Hamiltonian as $H_{R}$, which can be obtained from $H_{\Lambda}$ by simply omitting all the terms in Equation (\ref{eq:toric_hamiltonian}), whose support is not contained in $R$. Sometimes we will be interested in the projection onto the ground space of $H_R$, which we shall denote as $\Pi_R^{0}$,

\begin{equation}\label{eq:loc_proj}
    \Pi_R^{0}:= \prod_{i} \frac{\id-s^{Z}_{i}}{2}\prod_{i} \frac{\id-s^{X}_{i}}{2},
\end{equation}
where once again we only consider $X$-type and $Z$-type generators whose support is contained in $R$.

\section{Main Result}
\label{sec:main_results} 

\noindent Our work considers the following question: 

\begin{problem}(Low-energy state preparation)
Let $\Lambda$ be a $2D$ lattice and let $H_{\Lambda}$ be a frustration-free, commuting Hamiltonian whose ground space is topologically ordered (this notion will be made mathematically precise later in Section \ref{sec:twist_product}). Let $\rho^{\epsilon}_{\Lambda}$ be a state of global energy density at most $ \epsilon$, i.e. $\Tr[H_{\Lambda}\rho^{\epsilon}_{\Lambda}]\leq|\Lambda| \epsilon$ and let $\ketexcited_{\Lambda'}= U\ket{0}^{\otimes \abs{\Lambda'}}$ denote its purification on an extended lattice $\Lambda'=\Lambda \cup E $, where $E$ denotes a register of purifying ancillas. What is the smallest depth of any quantum circuit implementing $U$ only using geometrically local gates in 2D ?  
\end{problem}

Our main result gives an answer to this question for the case in which the Hamiltonian is given by the toric code Hamiltonian with arbitrary ground space degeneracy and arbitrary boundary conditions:

\begin{restatable}{theorem}{main}
Let $\rho^{\epsilon}_{\Lambda}$ be a state with energy density at most $\epsilon$ with respect to the toric code Hamiltonian $H_{\Lambda}$ on a 2D  lattice $\Lambda$ with arbitrary boundary conditions.
Then any local quantum circuit that prepares $\rho^{\epsilon}_{\Lambda}$ from a product state must be of depth
$ D=\Omega\left(\min\left(1/\epsilon^{\frac{1-\alpha}{2}}, \sqrt{\abs{\Lambda}}\right)\right)$ for any $\alpha>0$.
\end{restatable}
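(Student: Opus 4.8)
The plan is to assume that a geometrically local depth-$D$ circuit $U$ prepares the purification $\ket{\psi^{\epsilon}}_{\Lambda'}=U\ket{0}^{\otimes|\Lambda'|}$ of $\rho^{\epsilon}_{\Lambda}$, and to play the \emph{lightcone} of a shallow circuit against the long-range entanglement certified by the twist product. The only feature of $U$ I use is that conjugation by a depth-$D$ geometrically local circuit spreads supports by $\bigO(D)$: an operator supported on a region $A$ is mapped to one supported on the thickening $A(\bigO(D))$. Because the twist product is trivial (short-range-entangled value) whenever its two input operators can be separated across a tensor-product cut that the lightcone respects, a \emph{non}trivial twist product built from operators at scale $r$ can only survive if $D=\Omega(r)$. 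The strategy is therefore to exhibit, at a scale $r$ as large as the energy density permits, a pair of operators whose twist product is nontrivial on $\ket{\psi^{\epsilon}}$, and to conclude $D=\Omega(r)$.

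The operators come from the toric code. On an annular region $A$ at scale $r$ I place the two canonical string operators $\bar{X}_A$ and $\bar{Z}_A$ — a loop winding the annulus and a dual string crossing it — which anticommute, $\bar{X}_A\bar{Z}_A=-\bar{Z}_A\bar{X}_A$, and commute with every stabilizer term of $H_{\Lambda}$ supported inside $A$. On any state in the ground space these operators are \emph{exactly locally invisible}: since they commute with the local terms they preserve $\Pi_R^{0}$, and by local indistinguishability the marginal on any topologically trivial subregion is unchanged when they are applied. This uses nothing about the global ground-space sector or its dimension — invisibility is a statement about small marginals only — which is exactly what lets the argument dispense with any degeneracy assumption.

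For the low-energy state $\ket{\psi^{\epsilon}}$ these operators are only \emph{approximately} locally invisible, and the heart of the proof is to control the error. Using the local marginal projector $\Pi_A^{0}$, the reduced state of $\ket{\psi^{\epsilon}}$ on $A$ lies within purified distance $\bigO(\sqrt{E_A})$ of a genuine ground state of $H_A$, where $E_A=\sum_{X\subset A}\Tr[h_X\rho^{\epsilon}_{\Lambda}]$ is the energy in $A$; monotonicity \eqref{eq:monotonicity} and the triangle inequality \eqref{eq:triangle} for $\purifieddistance$ then transfer the exact invisibility from that ground state to an approximate invisibility for $\ket{\psi^{\epsilon}}$ with error $\bigO(\sqrt{E_A})$. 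Since the total energy is at most $|\Lambda|\epsilon$, an averaging argument over a family of disjoint candidate annuli at scale $r$ selects one whose energy $E_A$, and hence invisibility error, is small — of order $\epsilon\,r^{2}$ up to the loss incurred by the selection. The twist product of these approximately invisible, anticommuting operators then stays bounded away from its trivial value as long as this error is below a fixed constant, which holds for all $r$ up to $\bigO(\epsilon^{-(1-\alpha)/2})$, the arbitrarily small $\alpha>0$ being the price paid in the selection step. Combining ``nontrivial twist product at scale $r$'' with ``lightcone forces $D=\Omega(r)$'', and maximizing $r$ subject to the geometric constraint $r\le\bigO(\sqrt{|\Lambda|})$, yields $D=\Omega\!\left(\min\!\left(\epsilon^{-(1-\alpha)/2},\sqrt{|\Lambda|}\right)\right)$.

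The main obstacle is the transfer of invisibility from ground states to low-energy states and its interaction with the twist product: I must show that an operator acting \emph{exactly} trivially on ground states degrades only to acting \emph{almost} trivially on a low-energy state, with an error governed by the \emph{local} energy $E_A$ alone rather than by the size of $A$, and then propagate this perturbation through the twist product while tracking every purified-distance error via \eqref{eq:monotonicity} and \eqref{eq:triangle}. Equally delicate is the geometric averaging that locates a low-energy annulus at the correct scale $r\sim\epsilon^{-1/2}$ from only a global energy bound, together with the check that each estimate is uniform in, and hence insensitive to, the ground-space degeneracy and the boundary conditions.
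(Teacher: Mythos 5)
Your proposal follows the paper's skeleton (a gentle-measurement bound on a region of low local energy, an averaging step to find that region, approximate local invisibility, and a twist-product witness played against the circuit's lightcone), but two load-bearing steps are broken. The first is the choice of operators. You take the anticommuting logical pair of an annulus --- a loop $\bar{X}_A$ winding it and a dual string $\bar{Z}_A$ crossing it once. The paper instead takes two \emph{commuting} stabilizer loops $s^{X}_{R}, s^{Z}_{R}\in S^{\text{TC}}_{R}$ arranged to intersect at \emph{two} well-separated points, and this is essential in two ways. (i) The twist-product machinery (Proposition~\ref{prop3}, Theorem~\ref{thm2}) requires a bipartition $M/M^{c}$ that separates two intersection points, with one anticommuting crossing on each side; the $-1$ twist phase comes from the operator identity $s^{X}_{R}\infty s^{Z}_{R}=-s^{X}_{R}s^{Z}_{R}$, which has no analogue for a single crossing. (ii) Only for the stabilizer pair is the witness lower bound state-independent: both loops stabilize every state in the range of $\Pi_{R}^{0}$, so $\expval{s^{X}_{R}}=\expval{s^{Z}_{R}}=1$ and the witness equals $2$ no matter \emph{which} ground state the projection $\Pi_{R}^{0}\rho^{\epsilon}_{\Lambda}\Pi_{R}^{0}$ happens to be. For your logical pair, $\expval{\bar{X}_A}$ and $\expval{\bar{Z}_A}$ are not fixed by the projection: if the projected state is a $\bar{X}_A$-eigenstate, or maximally mixed over the logical qubit, the correlation you rely on vanishes, and nothing prevents the low-energy state from projecting onto exactly such a state. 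Your degeneracy claim is also inverted: an anticommuting, locally invisible pair exists precisely \emph{because} the local ground space is degenerate (its invisibility is local indistinguishability of degenerate ground states), whereas the paper's commuting-stabilizer choice is what removes the degeneracy assumption.

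The second gap is your central quantitative claim that a nontrivial twist product from operators at scale $r$, with invisibility error below a fixed constant, forces $D=\Omega(r)$. This is asserted, not proved, and it is not what the available error propagation delivers. When invisibility is only approximate, pushing an operator through the local marginal projectors costs $\delta$ per disc, and the discs near the intersections must have radius $\Omega(D)$ because LMPs thicken supports by $D$; accordingly, Theorem~\ref{thm2} bounds the witness by $2\delta(\alpha_P k_Q+\alpha_Q k_P)$ with $k_Q=O(D)$ and $k_P=1$, i.e.\ by $O(\delta D)$, not $O(\delta)$. In your parameter regime ($r\sim\epsilon^{-(1-\alpha)/2}$, hence $\delta\sim\sqrt{\epsilon r^{2}}=\epsilon^{\alpha/2}$), the contradiction ``$\mathrm{const}\leq O(\delta D)$'' only excludes $D\lesssim \epsilon^{-\alpha/2}$, far weaker than the theorem when $\alpha$ is small. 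The paper runs the parameters in the opposite direction: it places the stabilizers on a \emph{small} $\epsilon$-good patch $\abs{R}=O(\epsilon^{-\alpha})$, making $\delta=2\sqrt{\abs{R}\epsilon}=O(\epsilon^{(1-\alpha)/2})$ tiny, and extracts $D=\Omega(1/\delta)$ from $2-3\sqrt{\abs{R}\epsilon}\leq O\bigl(\sqrt{\abs{R}\epsilon}\,D\bigr)$. Note that the resulting depth bound $\epsilon^{-(1-\alpha)/2}$ vastly exceeds the patch size $\epsilon^{-\alpha/2}$: the paper's bound is error amplification, not ``the lightcone must span the operators.'' To salvage your route you would need a witness upper bound whose disc count near the intersections is $D$-independent; that is exactly the delicate error-propagation step your sketch defers to routine bookkeeping, and it is not supplied by the paper's lemmas.
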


This is an unexpected result, for the reasons discussed below. It is well-known that the toric code Hamiltonian at finite temperature has no topological order, as does a topologically trivial Hamiltonian. More precisely, such states can be created by applying a finite number of layers of local quantum channels~\cite{hastings_2011}. Therefore, when it comes down to the properties of these systems at finite energy density, it is natural to expect those states to have similar properties. In particular, it would be natural to conjecture that the complexity of preparing a state with energy density $\epsilon$ is, up to a polylogarithmic factor, independent of whether the underlying Hamiltonian is topologically ordered or not at zero temperature. 

If one were to accept such a conjecture, one would be inclined to conclude that the circuit depth for preparing the low-energy states of a topologically trivial Hamiltonian would be also polynomial in $1/\epsilon$. However, this seems unlikely to be true. For instance, we can consider the trivial Hamiltonian $H=\sum_{i} Z_i$, for which even the exact ground state can be prepared in $O(1)$ depth. Even if a small perturbation is added to such a Hamiltonian, the new ground state can be mapped to the trivial state by a locality-preserving unitary known as quasi-adiabatic continuation~\cite{Hastings_quasi_2005}. Since there is an approach to simulate locally interacting time-dependent Hamiltonian that can approximate local properties up to an error $\delta$, with a circuit depth that scales polylogarithmically with $1/\delta$~\cite{microsoft_2019,Alhambra2021}, it is natural to expect that the quasi-adiabatic continuation can be also simulated with a similar complexity. Thus, our result suggests that, despite being both ``trivial,'' there is a subtle sense in which the finite-temperature state of the toric code Hamiltonian and the trivial Hamiltonian are different. 

On the conceptual side, our primary contribution lies in pointing out this subtle difference by proving a new circuit depth lower bound. While the purported polylogarithmic circuit-depth upper bound for general topologically trivial Hamiltonian has not been proven rigorously yet, we believe this is a reasonable conjecture in light of our plausibility argument above. 

Our second conceptual contribution can be viewed as establishing a fundamental limitation in approximating local properties of ground states using geometrically local circuits. If such a circuit prepares a state that can approximate local expectation values of the ground state up to an error $\epsilon$, then this circuit by definition prepares a state with energy density $O(\epsilon)$. Our lower bound shows that, in the worst case, the circuit depth required to achieve such an error must thus grow polynomially with $1/\epsilon$. 

We note that this limitation is not just of theoretical interest. There is a vast literature of numerical studies of the variational quantum eigensolver (VQE)~\cite{VQE_original}, and many focus on how the error with respect to the ground state energy \emph{per site} decays with the circuit depth. In particular, we note that there are results in the literature which report a decay the of the energy that is exponential in the circuit depth of the variational ansatz ~\cite{Montanaro_Bosse,Montanaro_fermi_hubbard}. Our result shows rigorously that such a scaling can not be achieved in general, if the ground space of the cost Hamiltonian is topologically ordered. Understanding the performance of VQE is in general a difficult task, and our work provides one approach to elucidate its limitations.

On the technical front, our main contribution lies in a novel proof technique for proving robust circuit depth lower bound. Most notably, our lower bound does not rely on the ground state degeneracy and therefore removes one of the key assumptions in recent works on circuit lower bounds for low-energy states of constant-rate stabilizer codes \cite{anshu_nirkhe, NLTS}. Moreover, our proof technique relies only on the existence of certain unitary ``loop operators'' of constant depth that stabilize the ground space. As such, the technique is applicable to a wider class of models such as Kitaev's quantum double~\cite{Kitaev2003} and Levin-Wen string-net models~\cite{Levin_2005}.


We note that our proof strategy is inspired  by the work of Haah \cite{Haah_2016}, which gave a rigorous circuit-depth lower bound for the ground state of the toric code on a sphere, which has a unique ground state. We provide a more robust version of this technique, which is applicable to general mixed states, not just pure states or ground states.

\subsection{Relation to prior work}
Our work is related to Ref.~\cite{anshu_nirkhe}, which proves a $\Omega(\log(1/\epsilon))$ circuit depth lower bound for the preparation of low-energy states of high-rate stabilizer Hamiltonians with $O(1)$-local but geometrically non-local circuits. However, to the best of our knowledge, it is unclear how to prove our main result using their approach. Therefore, we view our work to be complementary to Ref.~\cite{anshu_nirkhe}; we make a stronger assumption on the form of the circuit and derive a stronger lower bound.

\section{Exact and approximate local invisibility}
\label{sec:local_invisibility}

In this Section, we introduce the concept of \emph{approximately locally invisible operators}, which will play a crucial role in the proof of our main result. Local invisibility is a notion first introduced by Haah~\cite{Haah_2016}. To give some intuition for the concept of approximate local invisibility, we first restate Haah's original definition of exact local invisibility.

\begin{definition}(Local invisibility)\label{def:local_invisibility}
Let $A\subset{\Lambda}$ be a disc of radius at most $r$ and let $B:=A(t')\subset{\Lambda}$ for some $t'\geq t \geq 0$. An operator $P\in \mathcal{B} (\mathcal{H}_{\Lambda})$ is $(A,B)$-locally invisible with respect to a state $\rho_{\Lambda}$ if for a disc $A\subset B\subset\Lambda$ and any state $\rho_{\Lambda}'$ such that $\rho_{B}' = \rho_B$, we have
\begin{equation}
    \rho_A = \frac{\text{Tr}_{A^c} \left[ P \rho_{\Lambda}'P^{\dagger} \right]}{\text{Tr}\left[ P\rho_{\Lambda}' P^{\dagger}\right]}.
\end{equation}
An operator is then $(r,t)$-locally invisible with respect to a state $\rho_{\Lambda}$ if it is $(A,B)$-locally invisible with respect to $\Lambda$ with $A$ and $B$ specified as above. 
\end{definition}

One class of operators that satisfy this condition for a state $\rho_{\Lambda}$ is any operator $P$ that stabilizes a state $\rho_{\Lambda}$, i.e. $P\rho_{\Lambda}P^{\dagger}=\rho_{\Lambda}$. If this operator is a unitary of depth $T$, then we may even fully characterize its parameters.

\begin{fact}\label{fact:exact_invisibility}
Let $\rho_{\Lambda}\in \cD(\cH_{\Lambda})$ and let $P\in \cB(\cH_{\Lambda})$ be a unitary operator such that $P\rho_{\Lambda}P^{\dagger}=\rho_{\Lambda}$.
Suppose that $P$ is of circuit depth at most $T$ with respect to nearest-neighbor gates.  Then $P$ is $(r,t)$-locally invisible with respect to $\rho_{\Lambda}$ for $r=\bigO\left(\sqrt{\Lambda}\right)$ and $t=\Omega\left(T\right)$.
\end{fact}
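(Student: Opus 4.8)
The plan is to exploit the exact lightcone of a finite-depth nearest-neighbor circuit together with the stabilization hypothesis $P\rho_{\Lambda}P^{\dagger}=\rho_{\Lambda}$. First I would observe that, because $P$ is unitary, $\text{Tr}[P\rho_{\Lambda}'P^{\dagger}]=\text{Tr}[\rho_{\Lambda}']=1$ for every state $\rho_{\Lambda}'$, so the normalization appearing in Definition \ref{def:local_invisibility} is automatic. It therefore suffices to show the unnormalized identity $\text{Tr}_{A^c}[P\rho_{\Lambda}'P^{\dagger}]=\rho_A$ whenever $\rho_B'=\rho_B$, and to do so for every disc $A$ of radius at most $r=\bigO(\sqrt{\abs{\Lambda}})$ and every $B=A(t')$ with $t'\geq t=\Omega(T)$.

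Next I would reduce this operator identity to a statement about expectation values, using that a reduced state on $A$ is uniquely pinned down by the numbers $\text{Tr}[O_A\,\sigma]$ ranging over all observables $O_A\in\mathcal{B}(\mathcal{H}_A)$. Hence it is enough to prove $\text{Tr}[O_A\,P\rho_{\Lambda}'P^{\dagger}]=\text{Tr}[O_A\,\rho_A]$ for every such $O_A$. Rewriting the left-hand side by cyclicity of the trace as $\text{Tr}[(P^{\dagger}O_A P)\,\rho_{\Lambda}']$ transfers the entire action of the circuit onto the observable, which is exactly where the lightcone structure can be brought to bear.

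The heart of the argument is then the lightcone bound: conjugation by a single layer of nearest-neighbor gates enlarges the support of any operator by one lattice step, so after $T$ layers the support of $P^{\dagger}O_A P$ is contained in the thickening $A(cT)$ for some $\bigO(1)$ constant $c$ determined by the gate geometry. Choosing $t\geq cT$ (hence $t=\Omega(T)$) guarantees $\text{supp}(P^{\dagger}O_A P)\subseteq A(cT)\subseteq A(t)\subseteq A(t')=B$. Consequently $\text{Tr}[(P^{\dagger}O_A P)\rho_{\Lambda}']$ depends on $\rho_{\Lambda}'$ only through its marginal $\rho_B'$; invoking $\rho_B'=\rho_B$ and then the stabilization $P\rho_{\Lambda}P^{\dagger}=\rho_{\Lambda}$ I would chain
\begin{align*}
\text{Tr}[(P^{\dagger}O_A P)\rho_{\Lambda}'] &= \text{Tr}[(P^{\dagger}O_A P)\rho_B] = \text{Tr}[(P^{\dagger}O_A P)\rho_{\Lambda}] \\
&= \text{Tr}[O_A\,P\rho_{\Lambda}P^{\dagger}] = \text{Tr}[O_A\,\rho_{\Lambda}] = \text{Tr}[O_A\,\rho_A],
\end{align*}
which is the desired equality. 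The first step uses that $P^{\dagger}O_A P$ is supported on $B$, so tracing against $\rho_{\Lambda}$ only sees $\rho_B$.

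I expect the only genuine subtlety to lie in the lightcone step: one must fix a precise notion of ``nearest-neighbor gate'' and lattice metric so that the spreading constant $c$ is honestly $\bigO(1)$, and check that the containment $A(cT)\subseteq\Lambda$ is meaningful for discs up to radius $r=\bigO(\sqrt{\abs{\Lambda}})$, so that the light cone does not wrap around the lattice and invalidate the support bound. The constraint on $r$ serves exactly to keep $A$ a genuine disc with a well-defined thickening (and in the degenerate case $B=\Lambda$ the claim is vacuous, since then $\rho_B'=\rho_B$ already forces $\rho_{\Lambda}'=\rho_{\Lambda}$). Everything else is bookkeeping: unitarity and stabilization carry the argument, and notably it never uses any property of $\rho_{\Lambda}$ beyond $P\rho_{\Lambda}P^{\dagger}=\rho_{\Lambda}$, which is precisely why it applies to arbitrary states with no assumption on ground-space degeneracy.
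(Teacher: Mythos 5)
Your proposal is correct and follows exactly the argument the paper intends: the paper's own proof is a one-line assertion that the claim ``follows immediately'' from stabilization once one chooses $t\geq T$, and your write-up supplies precisely the missing details (unitarity killing the normalization, cyclicity moving the circuit onto $O_A$, the lightcone bound $\text{supp}(P^{\dagger}O_A P)\subseteq A(cT)\subseteq B$, and then stabilization closing the chain). Nothing in your reasoning deviates from or exceeds what the paper's sketch presupposes.
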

\begin{proof}
The proof follows immediately from the assumption that $P$ stabilizes $\rho_{\Lambda}$ and requires one to choose $t\geq T$ in order for the definition to hold. 
\end{proof}

Let us give a concrete example of such an operator. 
Define $\rho_{\Lambda}^{0}=\locProj{\rho_{\Lambda}}$, for some $\rho_{\Lambda}$, where $\Pi_R^{0}$ is the projection onto the local toric code ground space as defined in Eq.~\eqref{eq:loc_proj} and assume that $\text{Tr}[\Pi_R^{0} \rho_{\Lambda} \Pi_R^{0}]>0$. Consider a toric code stabilizer $s^{Z}\in S_R^{\text{TC}}$ which is a tensor product of $Z$ that forms a ``loop'' inside $R$; see Fig. \ref{fig:loop_op}. One can show that this loop is $(A, A(2))$-locally invisible with respect to $\rho^{0}_{\Lambda}$ for any disc $A\subset \Lambda$, because all the toric code stabilizers supported strictly on $R$  stabilize $\rho^{0}_{\Lambda}$. Note how the support of $s^{Z}_{R}$ is crucial, i.e. any $Z$-type loop operator with support overlapping with $R^{c}$, is in general not locally invisible with respect to $\rho^{0}_{\Lambda}$.

\begin{figure}[ht]
\centering
 \begin{tikzpicture}[scale=1.25]
\draw[ultra thick] (0,0) rectangle (5,5);
\draw[ultra thick] (0.8,0.8) rectangle (4.2,4.2);
\node at (4.5,0.5) {$\Lambda$};
\node at (3.8,1.2) {$R$};
\draw[thick, fill=cyan, fill opacity=0.3] (1.5,1.5) circle (0.53cm);
\draw[thick,fill=yellow,] (1.5,1.5) circle (0.3cm);
\node at (1.5,1.5) {$A'$};
\node at (1.65, 1.1) {$B'$};
\draw[thick,fill=cyan, fill opacity=0.3] (3.1,3.1) circle (0.5cm);
\draw[thick,fill=yellow ] (3.1,3.1) circle (0.3cm);
\node at (3.1,3.1) {$A$};
\node at (3.25, 2.72) {$B$};
\draw[ultra thick, red] (2.5,2.5) circle (1 cm);
\node at (2.5,3.8) {$\red{s^{Z}}$};

\end{tikzpicture}
\caption{A $Z$-type loop operator $s^{Z}\in \mathcal{S}_{R}^\text{TC}$ that stabilizes the ground space  of $H_{R}$. Its annular support is contained in $R\subset \Lambda$ and it is locally invisible with respect to any state in the ground space of $H_R$. Locally invisible regions $A$ and $A'$ are shown with $B=A(t)$ for some $t=\bigO(1)$ and likewise for $B'$.}
\label{fig:loop_op}
\end{figure}
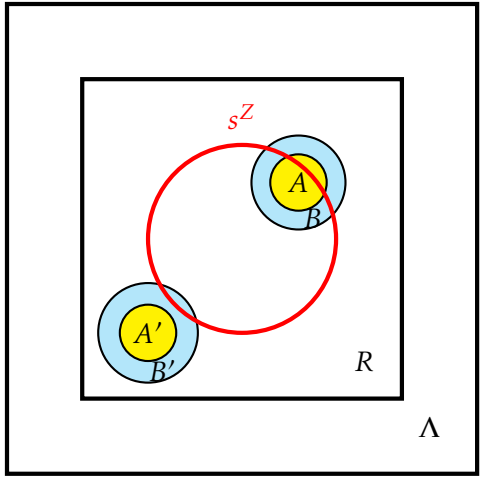


For the state $\rho_{\Lambda}$ prior to the projection, it is generally not clear how to find a nontrivial operator which is locally invisible. However in some cases, as we will see later, one may be able to find an operator that approximately satisfies local invisibility.

\begin{definition}(Approximate local invisibility)
\label{def:approx_loc_invisible}
Let $\delta\in [0,1]$ and let $A\subset B \subset \Lambda$ be as given in Definition \ref{def:local_invisibility} . An operator $P\in \mathcal{B}(\mathcal{H}_{\Lambda})$ is $\params$-locally invisible with respect to a state $\rho_{\Lambda}$ if for any state $\rho_{\Lambda}'$ such that $\rho_B = \rho_B'$, we have
\begin{equation}
    D_{p} \left(\rho_A, \frac{\text{Tr}_{A^c} \left[ P \rho_{\Lambda}'P^{\dagger} \right]}{\text{Tr}\left[ P\rho_{\Lambda}' P^{\dagger}\right]} \right) \leq \delta.
\end{equation}
\end{definition}

In Definition~\ref{def:approx_loc_invisible}, we used the purified distance as our distance measure of choice; see Section \ref{sec:prelims} for details. This is solely for convenience and can directly be translated into a statement using the trace distance via Eq.~\eqref{eq:purdist_tracedist_relations}.

The relationship between exact and approximate local invisibility becomes interesting when considering unitary operators and states that are sufficiently close to each other in purified distance. Namely, if a unitary operator is exactly locally invisible with respect to some state, then it is approximately locally invisible with respect to any sufficiently small perturbation of the state. 

\begin{lemma}\label{lem:robustness_local_invisibility}
Let $\rho_{\Lambda}$ and $\sigma_{\Lambda}$ be two quantum states such that  $\purifieddistance{\left(\rho_{\Lambda},\sigma_{\Lambda}\right)}\leq \epsilon$. Then if $P$ is a $(r,t)$-locally invisible unitary with respect to $\rho_{\Lambda}$ for any disc $A$ of radius at most $r$ and any disc  $B:=A(t)$ for $t\geq \text{depth}(P)$, then $P$ is $(2\epsilon, r,t)$-approximately locally invisible with respect to  $\sigma_{\Lambda}$.
\end{lemma}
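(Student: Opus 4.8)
The plan is to reduce the approximate statement for $\sigma_\Lambda$ to the exact statement for $\rho_\Lambda$ by transporting any competitor state across the $\epsilon$-ball that separates $\rho_\Lambda$ and $\sigma_\Lambda$. Fix a disc $A$ of radius at most $r$ and set $B = A(t)$ with $t \geq \text{depth}(P)$. Since $P$ is unitary, the denominators $\text{Tr}[P\rho_\Lambda' P^{\dagger}]$ and $\text{Tr}[P\sigma_\Lambda' P^{\dagger}]$ appearing in Definitions \ref{def:local_invisibility} and \ref{def:approx_loc_invisible} are both equal to $1$, so I may work directly with the unnormalized expressions. Let $\sigma_\Lambda'$ be an arbitrary state with $\sigma_B' = \sigma_B$; the goal is to bound $D_p\!\left(\sigma_A, \text{Tr}_{A^c}[P\sigma_\Lambda' P^{\dagger}]\right)$ by $2\epsilon$.

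The central step is to manufacture a companion state $\rho_\Lambda'$ that (i) matches $\rho_\Lambda$ on $B$, so that exact invisibility applies, and (ii) stays within purified distance $\epsilon$ of $\sigma_\Lambda'$. By monotonicity of the purified distance under the partial trace $\text{Tr}_{B^c}$ (Eq.~\eqref{eq:monotonicity}) and the hypothesis $D_p(\rho_\Lambda, \sigma_\Lambda)\le \epsilon$, together with $\sigma_B' = \sigma_B$, I obtain $D_p(\rho_B, \sigma_B') \le \epsilon$. The construction of $\rho_\Lambda'$ then proceeds via Uhlmann's theorem: purify $\sigma_\Lambda'$ on $\mathcal{H}_\Lambda \otimes \mathcal{H}_R$ by a vector $|\Sigma\rangle$, regard it as a purification of $\sigma_B'$ with reference space $\mathcal{H}_{B^c}\otimes\mathcal{H}_R$, and invoke Uhlmann to pick a purification $|\Gamma\rangle$ of $\rho_B$ on the same space with $F(\rho_B,\sigma_B') = |\langle\Gamma|\Sigma\rangle|^2$. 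Setting $\rho_\Lambda' := \text{Tr}_R[|\Gamma\rangle\langle\Gamma|]$ yields an extension of $\rho_B$, and combining the Uhlmann overlap bound (fidelity is the maximal overlap of purifications, so $F(\rho_\Lambda',\sigma_\Lambda')\ge|\langle\Gamma|\Sigma\rangle|^2$) with monotonicity of fidelity under $\text{Tr}_{B^c}$ (which gives $F(\rho_\Lambda',\sigma_\Lambda')\le F(\rho_B,\sigma_B')$) forces $F(\rho_\Lambda',\sigma_\Lambda') = F(\rho_B,\sigma_B')$, hence $D_p(\rho_\Lambda',\sigma_\Lambda') = D_p(\rho_B,\sigma_B') \le \epsilon$.

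With $\rho_\Lambda'$ in hand the estimate closes by the triangle inequality \eqref{eq:triangle} together with monotonicity and unitary invariance. Because $\rho_B' = \rho_B$, exact local invisibility of $P$ with respect to $\rho_\Lambda$ gives $\rho_A = \text{Tr}_{A^c}[P\rho_\Lambda' P^{\dagger}]$, so that
\begin{align*}
D_p\!\left(\sigma_A,\, \text{Tr}_{A^c}[P\sigma_\Lambda' P^{\dagger}]\right)
&\le D_p(\sigma_A, \rho_A) + D_p\!\left(\text{Tr}_{A^c}[P\rho_\Lambda' P^{\dagger}],\, \text{Tr}_{A^c}[P\sigma_\Lambda' P^{\dagger}]\right) \\
&\le D_p(\sigma_A, \rho_A) + D_p(\rho_\Lambda', \sigma_\Lambda'),
\end{align*}
where the second line applies monotonicity under $\text{Tr}_{A^c}$ \eqref{eq:monotonicity} followed by unitary invariance \eqref{eq:unitary_invariance}. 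Finally $D_p(\sigma_A,\rho_A) \le \epsilon$ by monotonicity under the partial trace $\text{Tr}_{A^c}$ applied to $\rho_\Lambda$ and $\sigma_\Lambda$, and $D_p(\rho_\Lambda', \sigma_\Lambda') \le \epsilon$ by the preceding paragraph, yielding the claimed bound of $2\epsilon$.

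I expect the main obstacle to be the Uhlmann step: one must produce $\rho_\Lambda'$ as a genuine global state whose $B$-marginal is \emph{exactly} $\rho_B$ while simultaneously controlling its global distance to $\sigma_\Lambda'$, and these two requirements pull in opposite directions. It is precisely Uhlmann's freedom in rotating the purification on the reference system $\mathcal{H}_{B^c}\otimes\mathcal{H}_R$ that reconciles them, certifying that no distance is lost when the fidelity of the $B$-marginals is lifted to the full lattice. The remaining steps are routine bookkeeping with the three stated properties of the purified distance.
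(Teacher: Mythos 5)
Your proof is correct, and it reaches the paper's bound of $2\epsilon$ by a genuinely different route. Both arguments share the same outer skeleton: fix $\sigma'_{\Lambda}$ with $\sigma'_{B}=\sigma_{B}$, pass through $\rho_A$ by the triangle inequality, use exact invisibility to write $\rho_A = \Tr_{A^c}[P\rho'_{\Lambda}P^{\dagger}]$ for an extension $\rho'_{\Lambda}$ of $\rho_B$, and finish with monotonicity and unitary invariance. They diverge on how the remaining term $D_p\left(\Tr_{A^c}[P\rho'_{\Lambda}P^{\dagger}],\,\Tr_{A^c}[P\sigma'_{\Lambda}P^{\dagger}]\right)$ is controlled. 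The paper uses the lightcone of $P$: because $\text{depth}(P)=T\leq t$ and $\sigma'_B=\sigma_B$, the marginal of $P\sigma'_{\Lambda}P^{\dagger}$ on $A\subseteq B(-T)$ coincides with that of $P\sigma_{\Lambda}P^{\dagger}$ (and likewise for $\rho'_{\Lambda}$ versus $\rho_{\Lambda}$), reducing the term to $D_p(\rho_{\Lambda},\sigma_{\Lambda})\leq\epsilon$; this lightcone step is the only place where the hypothesis $t\geq \text{depth}(P)$ enters. You instead ignore the circuit structure entirely and build, via Uhlmann's theorem, a bespoke extension $\rho'_{\Lambda}$ of $\rho_B$ satisfying $D_p(\rho'_{\Lambda},\sigma'_{\Lambda})=D_p(\rho_B,\sigma'_B)\leq\epsilon$, after which the same term is bounded directly. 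The trade-off: your route invokes the standard but heavier extension-lifting property of fidelity, and needs the purifying register $\mathcal{H}_R$ chosen large enough that Uhlmann applies on the reference $\mathcal{H}_{B^c}\otimes\mathcal{H}_R$ (say $\dim\mathcal{H}_R\geq\dim\mathcal{H}_{\Lambda}$, which is a harmless convention), whereas the paper needs only the three listed properties of $D_p$ plus locality of the circuit. In exchange, your proof never uses $t\geq \text{depth}(P)$, so it establishes a slightly stronger statement: any unitary that is exactly $(A,B)$-locally invisible with respect to $\rho_{\Lambda}$ is $2\epsilon$-approximately $(A,B)$-locally invisible with respect to every $\sigma_{\Lambda}$ within purified distance $\epsilon$, with no assumption on the depth of $P$. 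Two minor remarks: only the inequality $F(\rho'_{\Lambda},\sigma'_{\Lambda})\geq F(\rho_B,\sigma'_B)$ is actually needed, so the equality you derive is a pleasant but dispensable byproduct; and your observation that unitarity trivializes the normalizing denominators is the same simplification the paper makes implicitly.
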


\begin{proof}
Our goal here is to show that for any state $\sigma'_{\Lambda}$ s.t. $\sigma'_{B}=\sigma_{B}$, we have that \begin{align}
\purifieddistance{\left(\sigma_{A},\Tr_{A^c}\left[P\sigma'_{\Lambda}P^{\dagger}\right]\right)}\leq 2\epsilon.
\end{align} This immediately follows from triangle inequality and our assumptions. We have 
\begin{align}
\purifieddistance{\left(\sigma_{A},\Tr_{A^c}\left[P\sigma'_{\Lambda}P^{\dagger}\right]\right)}&\leq \purifieddistance{\left(\sigma_{A}, \rho_{A}\right)}+ \purifieddistance{\left(\rho_{A},\Tr_{A^c}\left[P\rho'_{\Lambda}P^{\dagger}\right]\right)}\nonumber\\
&+ \purifieddistance{\left(\Tr_{A^c}\left[P\rho'_{\Lambda}P^{\dagger}\right],\Tr_{A^c}\left[P\sigma'_{\Lambda}P^{\dagger}\right]\right)} \label{eq:robust_ineq}\\
&\leq \epsilon + 0 + \epsilon = 2\epsilon,
\end{align}
where the bound on the first term follows from our closeness assumption and monotonicity of the purified distance and the second term is equal to zero since $P$ is exactly locally invisible with respect to $\rho_{\Lambda}$.

We bound the third term on the RHS of line \ref{eq:robust_ineq} as follows: Suppose $T:=\text{depth}(P)$. Then since $\sigma'_{B}=\sigma_{B}$, the lightcone of the depth $T$ unitary $P$ implies that $\Tr_{B(-T)^c}\left[P\sigma'_{\Lambda}P^{\dagger}\right]=\Tr_{B(-T)^c}\left[P\sigma_{\Lambda}P^{\dagger}\right]$, where $B(-T)$ is the $T$-interior of $B$ as defined in \ref{sec:prelims}. Since $t\geq T$ we have $A\subseteq B(-T)$ from which it follows that $\Tr_{A^c}\left[P\sigma'_{\Lambda}P^{\dagger}\right]=\Tr_{A^c}\left[P\sigma_{\Lambda}P^{\dagger}\right]$. As $P$ is defined to be exactly locally invisible, i.e. $\Tr_{A^c}\left[P\rho'_{\Lambda}P^{\dagger}\right]=\Tr_{A^c}\left[P\rho_{\Lambda}P^{\dagger}\right]$  we therefore have that 
\begin{align}
\purifieddistance{\left(\Tr_{A^c}\left[P\rho'_{\Lambda}P^{\dagger}\right],\Tr_{A^c}\left[P\sigma'_{\Lambda}P^{\dagger}\right]\right)}=\purifieddistance{\left(\Tr_{A^c}\left[P\rho_{\Lambda}P^{\dagger}\right],\Tr_{A^c}\left[P\sigma_{\Lambda}P^{\dagger}\right]\right)} \leq \epsilon,
\end{align}
where the inequality from unitary invariance and monotonicity of the purified distance and our assumption that $\purifieddistance{(\rho_{\Lambda},\sigma_{\Lambda})}\leq \epsilon$.
\end{proof}

Our lemma shows that, assuming that $P$ is a unitary, the notion of local invisibility with respect to a state is in a sense 'robust' to perturbations to the state. 

It is interesting to note that for stabilizer Hamiltonians, the stabilizers themselves yield constant-depth unitaries that are exactly locally invisible with respect to their ground states. Also for more general local commuting projector models, such as the quantum double or string-net models \cite{Kitaev2003, Levin_2005} there exist loop operators that act trivally on any state in the ground space. These are sometimes referred to as Wilson loop operators \cite{Barkeshli_22}. We will use this fact later to find approximately locally invisible operators for our low-energy states.

\section{Local marginal projectors}\label{sec:LMPs}

A particularly useful notion to define in order to prove our main result is the concept of a  \emph{local marginal projector} (LMP for short).
A LMP can be used to fix the marginal of a state $\sigma_{\Lambda}$ on some region $S$ to a desired marginal $\rho_{S}$. The main purpose of this section is to prove properties of  local marginal projectors, since they play an important role en route to our main theorem.

Let us first define local marginal projectors and then explain the key result of this section at a high level.

\begin{definition}(Local marginal projector)
Let $\zero_{S}$ be the projection onto the all-zero state supported on a region $S\subset\Lambda'$ and let $U$ be a local quantum circuit of depth-$D$ that prepares a state $\ket{\psi}_{\Lambda'}$, i.e. $\ket{\psi}_{\Lambda'}=U\ketzero_{\Lambda'}$. Then we say that 
\begin{align}
\Pi_{S(D)}:= U \left( \zero_{S}\otimes\mathbbm{1}_{S^c}\right )U^{\dagger}
\end{align}
is a \textit{local marginal projector} (LMP) on $S(D)$ with respect to the state $\ket{\psi}_{\Lambda'}$.
\end{definition}

Throughout the following sections, we will routinely consider an operator $P$ that is $\params$-locally invisible with respect to a state $\rho_{\Lambda}\in \cD(\cH_{\Lambda})$ and its purification $\ket{\psi}_{\Lambda'}\in \cH_{\Lambda'}$, and LMPs $\Pi_{S(D)}$ with respect to this purification. To signify this relationship between a state $\rho_{\Lambda}$, associated LMPs $\Pi_{S(D)}$ and an operator $P$ succinctly, we will introduce the following definition.

\begin{definition}
Let $\rho_{\Lambda}\in \cD(\cH_{\Lambda})$ be a state with purification $\ket{\psi}_{\Lambda'}$ on an extended lattice $\Lambda'=\Lambda \cup E$, where $E$ is a register of purifying ancillas. Let $P\in \cB(\cH_{\Lambda})$ be a unitary operator of depth at most $t$ that is $\params$-approximately locally invisible with respect to $\rho_{\Lambda}$ and let
$\Pi:=\left\{\Pi_{S(D)}\right\}$ be the set of LMPs $\Pi_{S(D)}$ with respect to $\ket{\psi}_{\Lambda'}$ where $S(D)\subset{\Lambda}$. 
Then we call $\triple$ a $\params$-triple. 
\end{definition}
With slight abuse of notation, we will sometimes refer to an LMP whose support can remain unspecified with $\Pi$, since this will be clear from context. 
We will later also consider states $\rho_{\Lambda}$ for which we can find pairs $P$ and $Q$ of $\params$-locally invisible operators. We will then denote the associated $\params$-triple by $\left(\{P,Q\}, \,\,\Pi,\,
\rho_{\Lambda}\right)$. 
\\
\\
Having introduced these two key concepts, let us now explain the main technical statement that we intend to prove in this section. Roughly speaking, we intend to show that for any $\params$-triple $\triple$, we have that 
\begin{equation}
    P\Pi_{\cP'} \approx \expval{P}{\psi} \Pi_{\cP'},
\end{equation}
where $\cP'$ is some sufficiently large thickening of $p:=\text{supp}(P)$. This will prove to be a useful fact to property of $\params$-triple in order to prove our main theorem. We will make this statement more precise in Proposition \ref{prop2}.  

\subsection{Properties of LMPs}
We will now state some basic facts about LMPs. Note that all the LMPs appearing below will be defined with respect to the same state $\ket{\psi}_{\Lambda}$. For this section we will only consider a lattice $\Lambda$ and no extension $\Lambda'$.

\begin{fact}(Stabilization property)\label{fact:LMP_stab}
$\ket{\psi}_{\Lambda}=\Pi_{S(D)}\ket{\psi}_{\Lambda}$ for any $S(D)\subset \Lambda$. 
\end{fact}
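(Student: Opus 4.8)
The plan is simply to unfold the two relevant definitions; no additional machinery is required. Recall that the local marginal projector is $\Pi_{S(D)} = U\left(\zero_S \otimes \id_{S^c}\right)U^{\dagger}$, where $U$ is the depth-$D$ circuit preparing $\ket{\psi}_{\Lambda} = U\ketzero_{\Lambda}$. The first step is to substitute both of these into the product $\Pi_{S(D)}\ket{\psi}_{\Lambda}$ and cancel the adjacent factor $U^{\dagger}U$ by unitarity of $U$:
\begin{equation}
\Pi_{S(D)}\ket{\psi}_{\Lambda} = U\left(\zero_S \otimes \id_{S^c}\right)U^{\dagger}U\ketzero_{\Lambda} = U\left(\zero_S \otimes \id_{S^c}\right)\ketzero_{\Lambda}.
\end{equation}

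The second step is to exploit the product structure of the reference state. Since the all-zero state factorizes across the bipartition $(S,S^c)$ and $\zero_S$ is precisely the rank-one projector onto the all-zero state on $S$, the local projector acts as the identity on $\ketzero_{\Lambda}$, i.e. $\left(\zero_S \otimes \id_{S^c}\right)\ketzero_{\Lambda} = \ketzero_{\Lambda}$. Substituting back yields $\Pi_{S(D)}\ket{\psi}_{\Lambda} = U\ketzero_{\Lambda} = \ket{\psi}_{\Lambda}$, which is the claim. Note that neither the radius $D$ nor the locality of $U$ enters this argument, so the identity holds for every region of the form $S(D)\subset\Lambda$.

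I do not expect any genuine obstacle here: the fact is an immediate consequence of the LMP being defined as a conjugation, by the very circuit that prepares $\ket{\psi}_{\Lambda}$, of a projector that stabilizes the product reference state $\ketzero_{\Lambda}$. The only point requiring mild care is the bookkeeping of tensor factors — reading $\zero_S$ as the all-zero projector on $S$ alone — and remembering that, although the locality of $U$ is irrelevant for this particular stabilization identity, it is exactly what gives geometric meaning to the thickened region $S(D)$ in the statements that invoke this fact downstream.
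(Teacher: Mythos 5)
Your proof is correct and is precisely the argument the paper leaves implicit: the fact is stated without proof because it follows immediately from unfolding the definition $\Pi_{S(D)} = U\left(\zero_S \otimes \id_{S^c}\right)U^{\dagger}$, cancelling $U^{\dagger}U$, and noting that $\zero_S \otimes \id_{S^c}$ fixes the product reference state $\ketzero_{\Lambda}$. Your remark that neither $D$ nor the locality of $U$ plays any role here is also accurate.
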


\begin{fact}(Union property)\label{fact:union_prop}
Let $S_{1},S_{2}\subset\Lambda$. Then $\Pi_{S_{1}(D)}\Pi_{S_{2}(D)}=\Pi_{(S_{1}\cup S_{2})(D)}$
\end{fact}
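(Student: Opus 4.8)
The plan is to reduce the identity to an elementary fact about the all-zero projector, exploiting that the same circuit $U$ underlies all three LMPs. First I would expand the left-hand side using the definition, writing $\Pi_{S_1(D)} = U(\zero_{S_1}\otimes\mathbbm{1}_{S_1^c})U^\dagger$ and $\Pi_{S_2(D)} = U(\zero_{S_2}\otimes\mathbbm{1}_{S_2^c})U^\dagger$. Upon multiplying these, the inner factor $U^\dagger U$ collapses to the identity by unitarity of the circuit, leaving $U(\zero_{S_1}\otimes\mathbbm{1}_{S_1^c})(\zero_{S_2}\otimes\mathbbm{1}_{S_2^c})U^\dagger$.

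The crux is then to verify that the product of the two bare projectors is the all-zero projector on the union, i.e. $(\zero_{S_1}\otimes\mathbbm{1}_{S_1^c})(\zero_{S_2}\otimes\mathbbm{1}_{S_2^c}) = \zero_{S_1\cup S_2}\otimes\mathbbm{1}_{(S_1\cup S_2)^c}$. I would establish this by appealing to the site-wise factorization $\zero_S = \bigotimes_{v\in S}\dyad{0}{0}_v$ and examining the product site by site: on sites in $S_1\setminus S_2$ or $S_2\setminus S_1$ exactly one factor acts as $\dyad{0}{0}_v$; on the overlap $S_1\cap S_2$ both act as $\dyad{0}{0}_v$ and combine through idempotency, $\dyad{0}{0}_v\dyad{0}{0}_v = \dyad{0}{0}_v$; and on every site outside $S_1\cup S_2$ both factors act as the identity. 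Collecting the factors reproduces $\bigotimes_{v\in S_1\cup S_2}\dyad{0}{0}_v$ on $S_1\cup S_2$ tensored with the identity on its complement.

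Substituting this back and invoking the definition once more yields $U(\zero_{S_1\cup S_2}\otimes\mathbbm{1}_{(S_1\cup S_2)^c})U^\dagger = \Pi_{(S_1\cup S_2)(D)}$, as claimed. There is no substantive obstacle here: the statement is essentially a bookkeeping identity. The only point that needs care is the treatment of the overlap $S_1\cap S_2$, where one must invoke idempotency of the single-site projector rather than naively assume the regions are disjoint; and I would note that the depth label $D$ plays no algebraic role, being inherited identically by all three LMPs since they share the same circuit $U$, so that the label $(S_1\cup S_2)(D)$ appearing on the right-hand side is consistent.
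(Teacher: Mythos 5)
Your proof is correct: unitarity collapses $U^\dagger U$, and the site-wise factorization with idempotency on the overlap $S_1\cap S_2$ gives $(\zero_{S_1}\otimes\mathbbm{1}_{S_1^c})(\zero_{S_2}\otimes\mathbbm{1}_{S_2^c}) = \zero_{S_1\cup S_2}\otimes\mathbbm{1}_{(S_1\cup S_2)^c}$, which is exactly the intended argument; the paper states this as a Fact without proof precisely because it is this elementary consequence of the definition. Your closing remark that the label $(S_1\cup S_2)(D)$ is consistent is also sound, since the thickening operation satisfies $(S_1\cup S_2)(D) = S_1(D)\cup S_2(D)$.
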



\begin{fact}\label{fact3}
Let $\Lambda=\bigcup_{i}S_{i}$ and let $\Pi_{S_{i}(D)}$ the LMP on $S_{i}(D)$ with respect to the state $\ket{\psi}_{\Lambda}$. Then
\begin{align}
\dyad{\psi}{\psi}_{\Lambda}=\prod_{i}\Pi_{S_{i}(D)}.
\end{align}
\end{fact}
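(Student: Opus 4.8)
The plan is to reduce the claim to a short direct computation built on the definition of the LMP together with the Union property (Fact~\ref{fact:union_prop}). Recall that in this section all LMPs are defined with respect to the same depth-$D$ circuit $U$ satisfying $\ket{\psi}_{\Lambda}=U\ketzero_{\Lambda}$, so that each factor reads $\Pi_{S_i(D)} = U\left(\zero_{S_i}\otimes\id_{S_i^c}\right)U^{\dagger}$. Since every factor is a $U$-conjugate of a computational-basis projector, and such projectors onto $\ket{0}$ on (possibly overlapping) regions mutually commute, the product $\prod_i \Pi_{S_i(D)}$ is order-independent, and the pairs $U^{\dagger}U$ between consecutive factors telescope. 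First I would therefore write
\begin{equation}
\prod_i \Pi_{S_i(D)} = U\left(\prod_i \left(\zero_{S_i}\otimes\id_{S_i^c}\right)\right)U^{\dagger}.
\end{equation}

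The heart of the matter is to argue that $\prod_i \left(\zero_{S_i}\otimes\id_{S_i^c}\right) = \zero_{\bigcup_i S_i}\otimes\id$. Each factor projects the qubits in $S_i$ onto $\ket{0}$ and acts trivially elsewhere; as these are commuting idempotents diagonal in the computational basis, their product projects onto $\ket{0}$ on every site covered by some $S_i$, overlaps causing no trouble precisely because each $\zero_{S_i}$ is idempotent. Equivalently, this is nothing but Fact~\ref{fact:union_prop} iterated, yielding $\prod_i \Pi_{S_i(D)} = \Pi_{(\bigcup_i S_i)(D)}$. I would then invoke the covering hypothesis $\bigcup_i S_i = \Lambda$, so that $\zero_{\bigcup_i S_i} = \zero_{\Lambda} = \dyad{0^n}{0^n}$ is the rank-one projector onto the all-zero state of the full lattice, with no complement remaining. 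Conjugating by $U$ and using $\ket{\psi}_{\Lambda}=U\ketzero_{\Lambda}$ gives
\begin{equation}
\prod_i \Pi_{S_i(D)} = U\,\dyad{0^n}{0^n}\,U^{\dagger} = \dyad{\psi}{\psi}_{\Lambda},
\end{equation}
which is the claim.

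I do not expect a serious obstacle here: the statement is essentially a corollary of the Union property. The only point that requires care is the reduction of the product of local all-zero projectors to the single global all-zero projector, which relies on their mutual commutativity and idempotence together with the covering assumption $\bigcup_i S_i = \Lambda$. Once the union exhausts $\Lambda$, the complement vanishes and the resulting projector becomes rank one, which is exactly what collapses the product of LMPs to the pure-state projector $\dyad{\psi}{\psi}_{\Lambda}$.
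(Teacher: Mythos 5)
Your proof is correct: writing each LMP as $U\left(\zero_{S_i}\otimes\id_{S_i^c}\right)U^{\dagger}$, telescoping the $U^{\dagger}U$ factors, collapsing the commuting diagonal projectors to $\zero_{\Lambda}$ via the covering assumption, and conjugating back to get $\dyad{\psi}{\psi}_{\Lambda}$ is exactly the intended argument. The paper states this as a Fact without proof (it is the Union property, Fact~\ref{fact:union_prop}, iterated until the union exhausts $\Lambda$), so your computation supplies precisely the justification the paper leaves implicit.
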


\begin{fact}\label{fact4}
For any $S,S'\subset \Lambda$ we have that $\left[\Pi_{S(D)}, \Pi_{S'(D)}\right ]=0$.
\end{fact}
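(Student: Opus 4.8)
The plan is to exploit the fact that, by the convention fixed at the start of this subsection, every LMP is built from the \emph{same} depth-$D$ circuit $U$ preparing $\ket{\psi}_{\Lambda}=U\ketzero_{\Lambda}$. Writing $P_S := \zero_S \otimes \id_{S^c}$ and $P_{S'} := \zero_{S'}\otimes \id_{(S')^c}$ for the bare computational-basis projectors, we have $\Pi_{S(D)} = U P_S U^{\dagger}$ and $\Pi_{S'(D)} = U P_{S'} U^{\dagger}$. Since $U$ is unitary, forming the product $\Pi_{S(D)}\Pi_{S'(D)}$ collapses the inner $U^{\dagger}U$ to the identity, giving $U P_S P_{S'} U^{\dagger}$; symmetrically $\Pi_{S'(D)}\Pi_{S(D)} = U P_{S'} P_S U^{\dagger}$. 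Thus $[\Pi_{S(D)},\Pi_{S'(D)}]=0$ reduces to showing $[P_S, P_{S'}]=0$, since conjugation by a fixed unitary preserves commutativity.

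The remaining step is to verify that the bare projectors commute. Both $P_S$ and $P_{S'}$ factor as tensor products over sites, where the factor at site $v$ is $\dyad{0}{0}_v$ if $v$ lies in the relevant region and $\id_v$ otherwise. On each single site the only factors that occur, $\dyad{0}{0}_v$ and $\id_v$, commute with one another (and each with itself), and operators supported on distinct tensor factors always commute. Hence $P_S P_{S'}$ and $P_{S'} P_S$ agree site by site and are equal. Equivalently, both are diagonal in the computational basis and therefore simultaneously diagonalizable, so they commute.

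I expect no genuine obstacle here; the statement is a one-line computation once the shared preparation unitary is made explicit. The only point worth flagging is exactly that dependence: the conclusion relies crucially on all LMPs in this section being defined through the common circuit $U$. If two LMPs were instead defined via different circuits, the conjugated projectors would in general fail to commute, so this hypothesis is doing the essential work.
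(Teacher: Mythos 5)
Your proof is correct: the paper states this as a Fact without proof, and your argument---conjugating by the common preparation unitary $U$ so that commutativity reduces to that of the computational-basis projectors $\zero_{S}\otimes\id_{S^c}$ and $\zero_{S'}\otimes\id_{(S')^c}$, which are simultaneously diagonal---is exactly the intended one-line justification. Your closing remark is also apt, since the paper's convention that ``all the LMPs appearing below will be defined with respect to the same state $\ket{\psi}_{\Lambda}$'' (hence the same circuit $U$) is precisely what makes the inner $U^{\dagger}U$ cancel.
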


A local marginal projector $\Pi_{S(D)}$ with respect to a state $\ket{\psi}_{\Lambda}$ locally projects any state $\sigma_{\Lambda}$ onto $\Tr_{S'^{c}}[\dyad{\psi}{\psi}_{\Lambda}]$
for a disc $S'\subset S$ s.t. $S'(D)=S$.
More formally, we have:
\begin{lemma}\label{lemma1}
Let $S\subset \Lambda$ be a disk of radius $r>D$ and let  $\Pi_{S(D)}$ be a LMP with respect to a state $\ket{\psi}_{\Lambda}=U\ketzero_{\Lambda}$. Let  $\sigma_{\Lambda}$ be an arbitrary state such that $\Pi_{S(D)}\sigma_{\Lambda}\Pi_{S(D)}\neq 0$. Then for any concentric disk of $S'\subset S\subset \Lambda$ of radius $r-D$ or less,  
\begin{equation}\label{eq:marginals}
\Tr_{S'^{c}}\left[\dyad{\psi}{\psi}_{\Lambda}\right]=\frac{\Tr_{S'^{C}}\left[\Pi_{S(D)}\sigma_{\Lambda}\Pi_{S(D)}\right] }{\Tr[\Pi_{S(D)}\sigma_{\Lambda}]}.
\end{equation}
\end{lemma}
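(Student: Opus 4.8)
The plan is to rotate everything into the input frame of the circuit $U$, reduce both sides of the claimed identity to $S'$-marginals of outputs of $U$ applied to inputs of the form $\zero_{S}\otimes(\cdot)_{S^c}$, and then finish with a backward-lightcone argument. First I would write the projector explicitly as $\Pi_{S(D)}=U P_S U^{\dagger}$, where $P_S:=\zero_{S}\otimes\mathbbm{1}_{S^c}$ is a genuine orthogonal projector, and set $\tilde\sigma_{\Lambda}:=U^{\dagger}\sigma_{\Lambda}U$. Then $\Pi_{S(D)}\sigma_{\Lambda}\Pi_{S(D)}=U\,(P_S\tilde\sigma_{\Lambda}P_S)\,U^{\dagger}$, and because $P_S$ projects the $S$-register onto $\ket{0^n}$, the operator $P_S\tilde\sigma_{\Lambda}P_S$ factorizes as $\zero_{S}\otimes\tau_{S^c}$ with $\tau_{S^c}:=\bra{0^n}_{S}\tilde\sigma_{\Lambda}\ket{0^n}_{S}$. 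Cyclicity of the trace then gives $\Tr[\Pi_{S(D)}\sigma_{\Lambda}]=\Tr[P_S\tilde\sigma_{\Lambda}]=\Tr[\tau_{S^c}]$, which is nonzero precisely by the hypothesis $\Pi_{S(D)}\sigma_{\Lambda}\Pi_{S(D)}\neq 0$. Hence the right-hand side of the claim equals $\Tr_{S'^{c}}\left[U(\zero_{S}\otimes\hat\tau_{S^c})U^{\dagger}\right]$, where $\hat\tau_{S^c}:=\tau_{S^c}/\Tr[\tau_{S^c}]$ is a normalized state on $S^c$.

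For the left-hand side I would use $\dyad{\psi}{\psi}_{\Lambda}=U\,\zero_{\Lambda}\,U^{\dagger}=U(\zero_{S}\otimes\zero_{S^c})U^{\dagger}$, so that $\Tr_{S'^{c}}[\dyad{\psi}{\psi}_{\Lambda}]=\Tr_{S'^{c}}\left[U(\zero_{S}\otimes\zero_{S^c})U^{\dagger}\right]$. Both quantities are now the $S'$-marginal of the output of $U$ on an input that carries $\ket{0^n}$ on all of $S$ and differs only in what sits on $S^c$. The geometric input I would invoke is that, since $S'$ is concentric with $S$ of radius at most $r-D$ and $U$ has depth $D$, the backward lightcone of $S'$ satisfies $S'(D)\subseteq S$. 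Consequently the two inputs $\zero_{S}\otimes\hat\tau_{S^c}$ and $\zero_{S}\otimes\zero_{S^c}$ have one and the same marginal $\zero_{S'(D)}$ on the lightcone region $S'(D)$, because $S'(D)\subseteq S$ and both are $\ket{0^n}$ throughout $S$.

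The remaining step, which carries the real content, is the backward-lightcone lemma for constant-depth local circuits: the map $\rho\mapsto\Tr_{S'^{c}}[U\rho U^{\dagger}]$ depends only on $\Tr_{S'(D)^{c}}[\rho]$, so inputs agreeing on $S'(D)$ yield identical $S'$-marginals after $U$. This is exactly the lightcone fact already used in the proof of Lemma~\ref{lem:robustness_local_invisibility}; it follows by discarding every gate of $U$ whose support is disjoint from the backward lightcone of $S'$, commuting these past the trace over $S'^{c}$ and cancelling them against their adjoints, and observing that the surviving gates act entirely within $S'(D)$. Applying this to the two inputs above yields $\Tr_{S'^{c}}[U(\zero_{S}\otimes\zero_{S^c})U^{\dagger}]=\Tr_{S'^{c}}[U(\zero_{S}\otimes\hat\tau_{S^c})U^{\dagger}]$, which is precisely Eq.~\eqref{eq:marginals}. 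The main obstacle is nothing conceptual but rather the careful bookkeeping of this lightcone factorization and the verification that $S'(D)\subseteq S$ under the stated radius condition; the rest is algebra in the rotated frame.
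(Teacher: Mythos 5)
Your proof is correct and follows essentially the same route as the paper's: both rotate into the input frame via $U^{\dagger}$, observe that the projector forces $\ket{0^n}$ on $S$, and then use the depth-$D$ lightcone together with the radius condition $\mathrm{rad}(S')\leq r-D$ (so that everything relevant to $S'$ stays inside $S$) to conclude the marginals agree. The only difference is cosmetic: you work directly with density matrices via the factorization $P_S\tilde\sigma_{\Lambda}P_S=\zero_S\otimes\tau_{S^c}$ and the channel form of the lightcone fact, whereas the paper phrases the same argument through purifications and conjugated observables $U^{\dagger}O_{S'}U$.
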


\begin{proof}
We prove the claim by showing that the expectation value of any $O_{S'} \in \mathcal{B}(\mathcal{H}_{S'})$ is identical for $|\psi\rangle_{\Lambda}$ and any purification of $\sigma_{\Lambda}'= \frac{\Pi_{S(D)}\sigma_{\Lambda}\Pi_{S(D)} }{\Tr[\Pi_{S(D)}\sigma_{\Lambda}]}$, denoted as $|\phi\rangle_{\Lambda'}$.\footnote{Here $\Lambda'$ is the purifying space.} The equivalence of the two expectation values can be established by showing that $U^{\dagger}|\psi\rangle_{\Lambda}$ and $U^{\dagger}\Pi_{S(D)}|\phi\rangle_{\Lambda'}$, upon normalizing the states, have identical marginals over $S$. Note that the support of $U$ and $\Pi_{S(D)}$ is still confined to $\Lambda$. Keeping this in mind, note that
\begin{equation}
\begin{aligned}
U^{\dagger}|\psi\rangle_{\Lambda} &= |0^n\rangle_S \otimes |0^n\rangle_{\Lambda \setminus S}, \\
U^{\dagger}\Pi_{S(D)}|\phi\rangle_{\Lambda'} &= |0^n\rangle_S \otimes |\phi'\rangle_{\Lambda' \setminus S}, 
\end{aligned}
\end{equation}
where $|\phi'\rangle$ is some (possibly unnormalized) state with a positive norm. Since $U$ is a depth-$D$ quantum circuit and $\ket{\phi}_{\Lambda'}$ is a purification of $\sigma'_{\Lambda}$,  our claim follows from the standard ``lightcone'' arguments. 

For the interested readers, we explain the details below. Note that \begin{equation}
U^{\dagger}\Pi_{S(D)}\ket{\phi}_{\Lambda'}=\left(\dyad{0^n}{0^n}_{S}\otimes \mathbbm{1}_{S^c}\right)U^{\dagger}\ket{\phi}_{\Lambda'}= \ketzero_{S}\otimes |\phi'\rangle_{\Lambda'\setminus S},
\end{equation}
where $\ket{\phi'}$ is some nonzero vector. 
Also recall that $U^{\dagger}\ket{\psi}_{\Lambda}=\ketzero_{\Lambda}=\ketzero_{S}\otimes \ketzero_{S^c}$.
\begin{align}\label{eq:24}
\expval{O_{S'}}{\psi}-\frac{\expval{\Pi_{S(D)}O_{S'}\Pi_{S(D)}}{\phi}}{\expval{\Pi_{S(D)}}{\phi}}
&=\expval{U^{\dagger}O_{S'}U}{0^n}-\frac{\bra{\phi'}_{\Lambda'\setminus S}\otimes\bra{0^n}_{S}U^{\dagger}O_{S'}U\ket{\phi'}_{\Lambda' \setminus S}\otimes\ket{0^n}_{S}}{\langle \phi'| \phi' \rangle}\\
&=\Tr\left[U^{\dagger}O_{S'}U\dyad{0^n}{0^n}_{S}\right]-\Tr\left[U^{\dagger}O_{S'}U\dyad{0^n}{0^n}_{S}\right] \\
&=0.\label{eq:26}
\end{align}
Here we used the fact that $U^{\dagger}O_{S'}U$ is supported on $S$ to arrive at the second line.
\end{proof}

\begin{lemma}\label{fact5}
Let $O\in\mathcal{B}(\mathcal{H}_{\Lambda})$ and let $\Pi_{S(D)}$ be a LMP with respect to $\ket{\psi}_{\Lambda}=U\ketzero_{\Lambda}$. Then for any region $S'$, s.t. $\text{supp}(O)\subset{S'}$ and $S=S'(D)$ we have that: 
\begin{align}
\Pi_{S(D)}O\Pi_{S(D)}=\expval{O}{\psi}\Pi_{S(D)}
\end{align}
\end{lemma}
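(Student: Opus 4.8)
The plan is to pass to the ``rotated frame'' defined by $U^\dagger$, in which the LMP becomes a bare projector onto the all-zero state on $S$ and the statement reduces to an elementary factorization across the $S/S^c$ cut.

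First I would unpack the definition $\Pi_{S(D)}=U\left(\zero_S\otimes\id_{S^c}\right)U^\dagger$ and substitute it directly, writing
\begin{equation}
\Pi_{S(D)}O\Pi_{S(D)}=U\left(\zero_S\otimes\id_{S^c}\right)\left(U^\dagger O U\right)\left(\zero_S\otimes\id_{S^c}\right)U^\dagger.
\end{equation}
The crux is then to control the conjugated operator $\tilde O:=U^\dagger O U$. Since $U$ is a geometrically local circuit of depth $D$ and $\text{supp}(O)\subset S'$, a standard lightcone argument shows that Heisenberg conjugation by $U$ enlarges the support of $O$ by at most $D$ sites, so that $\text{supp}(\tilde O)\subseteq S'(D)=S$. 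This is the only step carrying any geometric content, and I expect it to be the main (though routine) obstacle, as it is exactly where the hypothesis $S=S'(D)$ is used.

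Granting $\tilde O=\tilde O_S\otimes\id_{S^c}$, the sandwich factorizes across the cut. Writing $\zero_S=\ketzero_S\brazero_S$ and collapsing the inner bra-ket yields $\left(\zero_S\otimes\id_{S^c}\right)\tilde O\left(\zero_S\otimes\id_{S^c}\right)=c\left(\zero_S\otimes\id_{S^c}\right)$, where the scalar is $c=\brazero_S\tilde O_S\ketzero_S$. Because $\tilde O$ acts as the identity on $S^c$, this scalar equals $\expval{\tilde O}{0^n}=\brazero U^\dagger O U\ketzero=\expval{O}{\psi}$, using $\ket{\psi}_\Lambda=U\ketzero_\Lambda$.

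Finally I would conjugate back by $U$, which pulls the scalar out and reassembles the LMP:
\begin{equation}
\Pi_{S(D)}O\Pi_{S(D)}=U\,c\left(\zero_S\otimes\id_{S^c}\right)U^\dagger=\expval{O}{\psi}\,\Pi_{S(D)},
\end{equation}
which is the claim. The whole argument is essentially the same lightcone-plus-product-structure manipulation that appears in the proof of Lemma \ref{lemma1}, specialized to an operator whose support is strictly interior to $S$.
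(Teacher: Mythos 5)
Your proof is correct and is essentially the paper's own argument: the paper also reduces the claim to the ``rotated frame'' computation of Lemma \ref{lemma1} (equations \eqref{eq:24}--\eqref{eq:26}), relying on the same lightcone fact that $U^{\dagger}OU$ is supported on $S=S'(D)$ and the resulting factorization across the $S/S^{c}$ cut. The only cosmetic difference is that the paper verifies the identity via matrix elements $\bra{\phi}\Pi_{S(D)}O\Pi_{S(D)}\ket{\phi'}$ for arbitrary vectors $\ket{\phi},\ket{\phi'}$, whereas you establish the operator identity directly by conjugating with $U$.
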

\begin{proof}
The proof follows in analogy to equations (\ref{eq:24}-\ref{eq:26}), but by showing that
\begin{align}
\expval{O}{\psi}-\frac{\bra{\phi}\Pi_{S(D)}O\Pi_{S(D)}\ket{\phi'}}{\bra{\phi}\Pi_{S(D)}\ket{\phi'}}=0
\end{align}
for any $\ket{\phi}_{\Lambda},\ket{\phi'}_{\Lambda}\in \mathcal{H}_{\Lambda}$

\end{proof}

\subsection{LMPs and local invisibility}
In this section, we will develop some technical lemmas on how LMPs can be used to put bounds on the action of approximately locally invisible operators. 

\begin{lemma}\label{lem4}
Let $\triple$ be a $\params$-triple. Let $S$ be any disc of radius at most $r-D$ and let $B:=S(t')$ for $t'\geq t+D$ (see Figure \ref{fig:disc_arrangement}). Then  for any $\sigma_{\Lambda}$, such that $\sigma_{B}=\rho_{B}$, we have 
\begin{align}
\onenorm{\Pi_{S(D)} P \sigma_{\Lambda}P^{\dagger}\Pi_{S(D)}-P\sigma_{\Lambda}P^{\dagger}}\leq 2\delta.
\end{align}
\end{lemma}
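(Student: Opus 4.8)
The plan is to set $\tau_{\Lambda}:=P\sigma_{\Lambda}P^{\dagger}$ and reduce the claim to a gentle-measurement estimate, where the decisive point is that the relevant ``leakage'' $\Tr[(\id-\Pi_{S(D)})\tau_{\Lambda}]$ is \emph{quadratically} small in $\delta$; this is exactly what converts the naive $2\sqrt{\delta}$ one would get from a trace-distance bound into the stated linear bound $2\delta$.

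First I would extract closeness of the marginal on $S(D)$ from approximate local invisibility, taking the locally invisible disc to be $A:=S(D)$. Since $S$ has radius at most $r-D$, the disc $A=S(D)$ has radius at most $r$, as required by Definition~\ref{def:approx_loc_invisible}; its thickening in that definition is $A(t)=S(D+t)$, and because $t'\geq t+D$ we have $S(D+t)\subseteq S(t')=B$, so the hypothesis $\sigma_{B}=\rho_{B}$ forces $\sigma_{S(D+t)}=\rho_{S(D+t)}$ and the invisibility condition applies to $\sigma_{\Lambda}$. As $P$ is unitary, $\Tr[P\sigma_{\Lambda}P^{\dagger}]=1$ and $\Tr_{A^{c}}[P\sigma_{\Lambda}P^{\dagger}]=\tau_{S(D)}$, so Definition~\ref{def:approx_loc_invisible} yields $\purifieddistance(\rho_{S(D)},\tau_{S(D)})\leq\delta$.

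Next I would use that the LMP stabilizes $\rho_{\Lambda}$ \emph{exactly}. By Fact~\ref{fact:LMP_stab}, $\Pi_{S(D)}\ket{\psi}_{\Lambda'}=\ket{\psi}_{\Lambda'}$, and since $\Pi_{S(D)}$ is supported on $S(D)\subset\Lambda$ it commutes with the partial trace over the ancilla register $E$; tracing out $E$ gives $\Pi_{S(D)}\rho_{\Lambda}\Pi_{S(D)}=\rho_{\Lambda}$, hence $\Tr[\Pi_{S(D)}\rho_{S(D)}]=1$, i.e. $\rho_{S(D)}$ lies in the range of $\Pi_{S(D)}$. I would then invoke the elementary inequality $\fid(\alpha,\beta)\leq\Tr[\Pi\alpha]$, valid whenever $\beta$ lies in the range of a projector $\Pi$ (one line: write $\sqrt{\alpha}\sqrt{\beta}=\sqrt{\alpha}\,\Pi\sqrt{\beta}$ and apply Hölder, using $\|\sqrt{\alpha}\,\Pi\|_{2}=\sqrt{\Tr[\Pi\alpha]}$). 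With $\alpha=\tau_{S(D)}$, $\beta=\rho_{S(D)}$, $\Pi=\Pi_{S(D)}$ this gives
\begin{equation}
\Tr\left[(\id-\Pi_{S(D)})\tau_{\Lambda}\right]=1-\Tr[\Pi_{S(D)}\tau_{S(D)}]\leq 1-\fid(\tau_{S(D)},\rho_{S(D)})=\purifieddistance(\tau_{S(D)},\rho_{S(D)})^{2}\leq\delta^{2}.
\end{equation}
To finish, writing $\Pi:=\Pi_{S(D)}$ and $\Pi^{\perp}:=\id-\Pi$, the identity $\tau_{\Lambda}-\Pi\tau_{\Lambda}\Pi=\Pi^{\perp}\tau_{\Lambda}+\Pi\tau_{\Lambda}\Pi^{\perp}$ with the Cauchy--Schwarz bounds $\onenorm{\Pi^{\perp}\tau_{\Lambda}}\leq\sqrt{\Tr[\Pi^{\perp}\tau_{\Lambda}]}$ and $\onenorm{\Pi\tau_{\Lambda}\Pi^{\perp}}\leq\sqrt{\Tr[\Pi^{\perp}\tau_{\Lambda}]}$ yields $\onenorm{\Pi_{S(D)}P\sigma_{\Lambda}P^{\dagger}\Pi_{S(D)}-P\sigma_{\Lambda}P^{\dagger}}\leq 2\sqrt{\Tr[\Pi^{\perp}\tau_{\Lambda}]}\leq 2\delta$.

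The main obstacle is the middle step rather than the final estimate: feeding only the trace-distance consequence $\tfrac{1}{2}\onenorm{\tau_{S(D)}-\rho_{S(D)}}\leq\delta$ into gentle measurement would give the weaker bound $2\sqrt{\delta}$. The linear bound hinges on the observation that the LMP stabilizes $\rho_{\Lambda}$ \emph{exactly}, so that the leakage onto $\Pi^{\perp}$ is governed by the fidelity and is therefore $O(\delta^{2})$; aligning the regions so that this exact stabilization and the approximate invisibility both refer to the same disc $S(D)$ (which is precisely what the constraints $\mathrm{rad}(S)\leq r-D$ and $t'\geq t+D$ guarantee) is the bookkeeping the whole argument turns on.
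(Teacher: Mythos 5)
Your proof is correct, and it reaches the stated $2\delta$ bound by a genuinely different route than the paper's.

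The paper lifts everything to purifications: it applies Uhlmann's theorem to the invisibility estimate to obtain a unitary $U_{S(D)^c}$ on the purifying system such that $P\dyad{\phi}_{\Lambda'}P^{\dagger}$ (with $\ket{\phi}_{\Lambda'}$ a purification of $\sigma_{\Lambda}$) is close in trace norm to $U_{S(D)^c}\dyad{\psi}_{\Lambda'}U_{S(D)^c}^{\dagger}$, then sandwiches with $\Pi_{S(D)}$ via H\"older's inequality, uses $[\Pi_{S(D)},U_{S(D)^c}]=0$ together with the stabilization property $\Pi_{S(D)}\ket{\psi}_{\Lambda'}=\ket{\psi}_{\Lambda'}$, applies the triangle inequality at the level of pure states, and finally traces back down to $\sigma_{\Lambda}$ by monotonicity. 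You instead stay entirely at the level of density operators on $\Lambda$, routing the same two ingredients — the fidelity consequence $\fid\bigl(\rho_{S(D)},\Tr_{S(D)^c}[P\sigma_{\Lambda}P^{\dagger}]\bigr)\geq 1-\delta^{2}$ of approximate invisibility, and the exact stabilization $\Pi_{S(D)}\rho_{\Lambda}\Pi_{S(D)}=\rho_{\Lambda}$ — through the inequality $\fid(\alpha,\beta)\leq\Tr[\Pi\alpha]$ for $\beta$ in the range of $\Pi$, which gives the leakage bound $\Tr[(\id-\Pi_{S(D)})P\sigma_{\Lambda}P^{\dagger}]\leq\delta^{2}$, and you finish with a gentle-measurement estimate. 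Your region bookkeeping ($A=S(D)$ has radius at most $r$, and $B=A(t'-D)$ with $t'-D\geq t$) is exactly what Definition \ref{def:approx_loc_invisible} requires, and your insistence on feeding the \emph{fidelity} rather than the trace distance into the gentle-measurement step is precisely what keeps the bound linear in $\delta$ instead of degrading to $2\sqrt{\delta}$. As for what each route buys: your argument dispenses with Uhlmann's theorem and purifications altogether, and it also sidesteps a constant-factor slip in the paper's proof — under the paper's own convention \eqref{eq:purdist_tracedist_relations}, the trace norm between two pure states equals $2\purifieddistance$, so the paper's intermediate line $\onenorm{U_{S(D)^c}\dyad{\psi}U_{S(D)^c}^{\dagger}-P\dyad{\phi}P^{\dagger}}\leq\purifieddistance(\cdots)$ loses a factor of two and, taken literally, would propagate to a final bound of $4\delta$; your derivation genuinely delivers $2\delta$. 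What the paper's purified route buys is the intermediate estimate \eqref{eq:lem_2_purified}, i.e.\ the same bound with $\sigma_{\Lambda}$ replaced by its purification $\dyad{\phi}_{\Lambda'}$, which is the version that Proposition \ref{prop1} later invokes; but your method recovers this at no extra cost, since $P$ and $\Pi_{S(D)}$ are supported on $\Lambda$, so $\Tr[(\id-\Pi_{S(D)})P\dyad{\phi}_{\Lambda'}P^{\dagger}]=\Tr[(\id-\Pi_{S(D)})P\sigma_{\Lambda}P^{\dagger}]\leq\delta^{2}$, and the identical gentle-measurement step applied to $P\dyad{\phi}_{\Lambda'}P^{\dagger}$ yields the purified statement with the same constant.
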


\begin{proof}
This claim follows directly from the approximate local invisibility of $P$ and the fact that LMPs with respect to some state $\ket{\psi}_{\Lambda'}$ supported on some region $S$ stabilize any state that looks like the marginal of $\ket{\psi}_{\Lambda'}$ on $S$.
By our local invisibility assumption we have that
\begin{align}
\purdist\left(\rho_{A},\Tr_{A^c}\left[P\sigma_{\Lambda}P^{\dagger}\right]\right)\leq \delta
\end{align}
for any disk $A$ of radius $r$ or less, which is concentric with region $B$. It is immediate that 
\begin{align}
\fid\left(\rho_{A},\Tr_{A^c}\left[P\sigma_{\Lambda}P^{\dagger}\right]\right) \geq 1-\delta^2.
\end{align} 

Let $\Lambda'$ be our extended lattice on which we define purifcations $\ket{\psi}_{\Lambda'}$ and $P\ket{\phi}_{\Lambda'}$ of $\rho_{A}$ and $\Tr_{A^c}\left[P\sigma_{\Lambda}P^{\dagger}\right]$ respectively. 
By Uhlmann's theorem there exists a unitary $U_{A^c}$ acting on the purifiying space $A^c$, such that
$\left|_{\Lambda'}\bra{\psi}U_{A^c}P\ket{\phi}_{\Lambda'}\right|^2=\fid\left(\rho_{A},\Tr_{A^c}\left[P\sigma_{\Lambda}P^{\dagger}\right]\right)$
implying that
\begin{align}
\purdist\left(U_{A^c}\dyad{\psi}{\psi}_{\Lambda'}U_{A^c}^{\dagger},P\dyad{\phi}{\phi}_{\Lambda'}P^{\dagger}\right)\leq \delta
\end{align}

Now let $A=S(D)$, where $\text{rad}(S)\leq r-D$. Then we have that
\begin{align}
\onenorm{U_{S(D)^c}\dyad{\psi}{\psi}_{\Lambda'}U_{S(D)^c}^{\dagger}-P\dyad{\phi}{\phi}_{\Lambda'}P^{\dagger}}\leq \purdist\left(U_{S(D)^c}\dyad{\psi}{\psi}_{\Lambda'}U_{S(D)^c}^{\dagger}, P\dyad{\phi}{\phi}_{\Lambda'}P^{\dagger}\right)\leq \delta
\end{align}

\noindent Since $\onenorm{XYZ}\leq \left\Vert X\right\Vert_{\infty}\onenorm{Y} \left\Vert Z\right\Vert_{\infty}$ for any $X,Y, Z \in \bounded$,
and since $\left\Vert \Pi_{S(D)}\right\Vert_{\infty}=1$, we get
\begin{align}
\onenorm{\Pi_{S(D)}U_{S(D)^c}\dyad{\psi}{\psi}_{\Lambda'}U_{S(D)^c}^{\dagger}\Pi_{S(D)}-\Pi_{S(D)}P\dyad{\phi}{\phi}_{\Lambda'}P^{\dagger}\Pi_{S(D)}}\leq \delta.
\end{align}
Moreover, since $\left[\Pi_{S(D)},U_{S(D)^c}\right]=0$ and $\Pi_{S(D)}\ket{\psi}_{\Lambda'}=\ket{\psi}_{\Lambda'}$, it follows that
\begin{align}
\onenorm{U_{S(D)^c}\dyad{\psi}{\psi}_{\Lambda'}U_{S(D)^c}^{\dagger}-\Pi_{S(D)}P\dyad{\phi}{\phi}_{\Lambda'}P^{\dagger}\Pi_{S(D)}}\leq \delta    
\end{align}
Via triangle inequality it then follows
\begin{align}\label{eq:lem_2_purified}
\onenorm{\Pi_{S(D)} P \dyad{\phi}{\phi}_{\Lambda'}P^{\dagger}\Pi_{S(D)}-P\dyad{\phi}{\phi}_{\Lambda'}P^{\dagger}}\leq 2\delta.
\end{align}
 Noting that  and $\text{supp}(P)\subset \Lambda$ and $S(D)\subset \Lambda$ and using monotonicity of trace distance under partial trace, we thus conclude that $\onenorm{\Pi_{S(D)} P \sigma_{\Lambda}P^{\dagger}\Pi_{S(D)}-P\sigma_{\Lambda}P^{\dagger}}\leq 2\delta$.
\end{proof}
\noindent Lemma \ref{lem4} can then be lifted to a statement in the operator norm about LMPs and approximately locally invisible operators .

\begin{figure}[ht]
\centering
\begin{tikzpicture}[scale=0.8]

\node at (3.9,4.6) {$r-D$};
\node at (4.9, 5.5) {$D$};
\node at (5.6, 6.2) {$t$};
\node at (6.4,6.9) {$D$};

\filldraw [black] (4,4) circle (2pt);
\draw[->,ultra thick] (4.1,4.1)--(4.8,4.8);
\draw[<->,ultra thick] (5.7,5.7)--(6.2,6.2);
\draw[<->,ultra thick] (4.9,4.9)--(5.5,5.5);
\draw[<->,ultra thick] (6.4,6.4)--(7,7);
\draw[ultra thick] (4,4) circle (1.2cm);
\draw[ultra thick] (4,4) circle (2.3cm);
\draw[ultra thick] (4,4) circle (3.3cm);
\draw[ultra thick] (4,4) circle (4.3cm);
\node at (4.7,3.55) {$S$};
\node at (5.7, 3.5) {$S(D)$};
\node at (6.7, 3.5) {$B$};
\node at (7.7,3.5){$S'$};
\end{tikzpicture}
\caption{Arrangements of discs $S, S(D), B$ and $S'$ as described in Proposition \ref{prop1}.}
    \label{fig:disc_arrangement}
\end{figure}
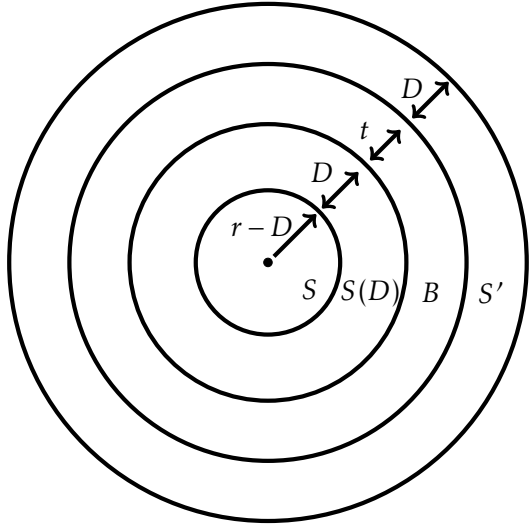

\begin{proposition}\label{prop1}
Let $\triple$ be a $\params$-triple. Let $S$ be any disc of radius at most $r-D$ and let $S':=S(t')$ for $t'\geq t+2D$ (see Fig. \ref{fig:disc_arrangement}). Then 
\begin{align}
\inftynorm{ \Pi_{S(D)}P\Pi_{S'(D)}-P\Pi_{S'(D)}}\leq 2 \delta
\end{align}
\end{proposition}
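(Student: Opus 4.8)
The plan is to lift the trace-norm estimate of Lemma~\ref{lem4} to the operator norm by testing against \emph{pure} states in the range of the projector $\Pi_{S'(D)}$. First I would reduce the operator norm to a supremum over vectors. Using $\Pi_{S(D)}P\Pi_{S'(D)}-P\Pi_{S'(D)}=-(\id-\Pi_{S(D)})P\Pi_{S'(D)}$ together with the fact that $\Pi_{S'(D)}$ is a projector, a homogeneity argument gives
\begin{equation}
\inftynorm{\Pi_{S(D)}P\Pi_{S'(D)}-P\Pi_{S'(D)}}=\sup_{\ket{\phi}}\left\|(\id-\Pi_{S(D)})P\ket{\phi}\right\|,
\end{equation}
where the supremum is over unit vectors $\ket{\phi}$ in the range of $\Pi_{S'(D)}$. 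Since every operator here is supported on $\Lambda$, I may take $\ket{\phi}\in\mathcal{H}_{\Lambda}$ and set $\sigma_{\Lambda}:=\dyad{\phi}{\phi}$.

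Next I would verify the geometric hypotheses needed to feed $\sigma_{\Lambda}$ into Lemma~\ref{lem4}. By the marginal-fixing property of LMPs (Lemma~\ref{lemma1}), any state in the range of $\Pi_{S'(D)}$ agrees with $\rho_{\Lambda}$ on the $D$-interior of $S'=S(t')$, which is $B:=S(t'-D)$; hence $\sigma_{B}=\rho_{B}$. Since $\text{rad}(S)\le r-D$ and $t'-D\ge t+D$ (using $t'\ge t+2D$), this $B$ satisfies the hypotheses of Lemma~\ref{lem4}, which yields
\begin{equation}
\onenorm{\Pi_{S(D)}P\sigma_{\Lambda}P^{\dagger}\Pi_{S(D)}-P\sigma_{\Lambda}P^{\dagger}}\leq 2\delta.
\end{equation}
The extra $D$ beyond the $t+D$ of Lemma~\ref{lem4} is precisely what pays for the lightcone shrinkage of $\Pi_{S'(D)}$, and getting this radius bookkeeping right is the step requiring the most care.

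Finally I would convert this trace-norm bound into the stated operator-norm bound using purity. Because $\sigma_{\Lambda}$ is pure and $P$ is unitary, $P\sigma_{\Lambda}P^{\dagger}=\dyad{b}{b}$ with $\ket{b}:=P\ket{\phi}$ a unit vector, and $\Pi_{S(D)}P\sigma_{\Lambda}P^{\dagger}\Pi_{S(D)}=\dyad{a}{a}$ with $\ket{a}:=\Pi_{S(D)}\ket{b}$. Setting $\cos\theta:=\|a\|$, the vector $(\id-\Pi_{S(D)})\ket{b}$ is orthogonal to $\ket{a}$ with norm $\sin\theta$, so $\dyad{a}{a}-\dyad{b}{b}$ acts on a two-dimensional subspace and a direct diagonalization gives
\begin{equation}
\onenorm{\dyad{a}{a}-\dyad{b}{b}}=\sin\theta\sqrt{1+3\cos^{2}\theta}\geq \sin\theta=\left\|(\id-\Pi_{S(D)})P\ket{\phi}\right\|.
\end{equation}
Combining the last two displays gives $\left\|(\id-\Pi_{S(D)})P\ket{\phi}\right\|\leq 2\delta$ for every admissible $\ket{\phi}$, and taking the supremum proves the claim. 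I expect the passage from trace norm to operator norm to be the main obstacle: restricting to pure test states in the range of $\Pi_{S'(D)}$ is what keeps this conversion lossless and preserves the constant $2\delta$, whereas running the argument through a mixed reduced state would weaken the bound.
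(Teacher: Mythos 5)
Your proof is correct and follows essentially the same route as the paper's: Lemma~\ref{lemma1} to show that the test state agrees with $\rho_{\Lambda}$ on $B$, Lemma~\ref{lem4} for the trace-norm bound, and a pure-state conversion from trace norm to vector norm (your explicit $2\times 2$ diagonalization is exactly the paper's Fact~\ref{fact6}, in slightly sharper form). The only organizational difference is that by restricting to unit vectors in the range of $\Pi_{S'(D)}$ you can invoke Lemma~\ref{lem4} as stated, whereas the paper applies the purified estimate of Eq.~\eqref{eq:lem_2_purified} from inside Lemma~\ref{lem4}'s proof to $\Pi_{S'(D)}\ket{\phi}_{\Lambda'}$ for a purification $\ket{\phi}_{\Lambda'}$ of an arbitrary state; this is a cosmetic, not substantive, distinction.
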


\begin{proof}
The proof is a simple consequence of combining Lemmas \ref{lemma1} and  \ref{lem4} and changing to the operator norm. Using Lemma \ref{lemma1}, we have that for any disc $B$ of radius at least  $\text{rad}(S)+D+t$, such that $B(D)=S'$:
\begin{align}
    \rho_{B}=\frac{\Tr_{B^{c}}\left[\Pi_{S'(D)}\sigma_{\Lambda}\Pi_{S'(D)}\right]}{\Tr[\Pi_{S'(D)}\sigma_{\Lambda}\Pi_{S'(D)}]},
\end{align}
where $\rho_{B}=\Tr_{B^c}[\rho_{\Lambda}]$.
Invoking Eq.~\eqref{eq:lem_2_purified}, which may be understood as a 'purified version' of Lemma \ref{lem4}, we then have that 
\begin{align}
\onenorm{\Pi_{S(D)}P\Pi_{S'(D)}\dyad{\phi}{\phi}_{\Lambda'}\Pi_{S'(D)}P^{\dagger}\Pi_{S(D)}-P\Pi_{S'(D)}\dyad{\phi}{\phi}_{\Lambda'}\Pi_{S'(D)}P^{\dagger}}\leq 2\delta,
\end{align}
where $\dyad{\phi}{\phi}_{\Lambda'}$is a purification of $\sigma_{\Lambda}$. To bound the complex Euclidean norm between these two states by their trace distance, we shall use the following fact:
\begin{fact}\label{fact6}
Let $\Pi\in \bounded$ be a projector. 
If $\onenorm{\dyad{\psi}{\psi}-\Pi\dyad{\psi}{\psi}\Pi}\leq \epsilon$, then $\left\Vert \ket{\psi}-\Pi\ket{\psi} \right\Vert \leq \epsilon$.
\end{fact}
This can be seen as follows. Since $0\leq \expval{\Pi}{\psi}\leq 1$, we have that 
\begin{align}
\norm{\ket{\psi}-\Pi\ket{\psi}}=\sqrt{1-\expval{\Pi}{\psi}} \leq \sqrt{1-\left(\expval{\Pi}{\psi}\right)^{2}}= \norm{\dyad{\psi}{\psi}-\Pi\dyad{\psi}{\psi}\Pi}_{2}
\end{align}
As $\norm{A}_{2}\leq \norm{A}_{1}$, we arrive at Fact \ref{fact6} . 
\\

This implies that for any state $\ket{\phi}_{\Lambda'}$
\begin{align}
\left\Vert\Pi_{S(D)}P\Pi_{S'(D)}\ket{\phi}_{\Lambda'}-P\Pi_{S'(D)}\ket{\phi}_{\Lambda'}\right\Vert\leq 2\delta,
\end{align}
from which it follows that $\inftynorm{\Pi_{S(D)}P\Pi_{S'(D)}-P\Pi_{S'(D)}}\leq 2\delta$ .
\end{proof}

This statement straightforwardly generalizes to an arbitrary number of discs $S_{i}$ that satisfy the conditions in Proposition \ref{prop1}. 

\begin{corollary}\label{corollary1}
Let $\triple$ be a $\params$-triple. 
Let $\{S_{i}\}$ be any set of $k$ discs of radius at most $r-D$ and let  $\{S'_{i}\}$ be any set of corresponding $k$ discs where $S'_{i}:=S_{i}(t+2D)$.  Then
\begin{align}
\inftynorm{\prod_{i\in[k]} \Pi_{S_{i}(D)}P\prod_{i\in[k]} \Pi_{S_{i}'(D)}-P\prod_{i\in[k]} \Pi_{S_{i}'(D)}}\leq 2k \delta.
\end{align}
\end{corollary}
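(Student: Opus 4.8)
The plan is to reduce the $k$-disc statement to the single-disc bound of Proposition \ref{prop1} through a hybrid (telescoping) argument that inserts the inner projectors $\Pi_{S_i(D)}$ one at a time. Introduce the partial products $Q_j := \prod_{i=1}^{j}\Pi_{S_i(D)}$ with the convention $Q_0 = \id$, and write $\Pi_{\mathrm{out}} := \prod_{i\in[k]}\Pi_{S_i'(D)}$ for the product of outer LMPs. The quantity to bound is $\inftynorm{Q_k P \Pi_{\mathrm{out}} - P\Pi_{\mathrm{out}}}$, which I would expand as the telescoping sum $\sum_{j=1}^{k}(Q_j - Q_{j-1})P\Pi_{\mathrm{out}}$; this is valid precisely because $Q_0 = \id$, so the endpoints reproduce the desired difference.

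First I would factor each summand. Since $Q_j - Q_{j-1} = Q_{j-1}\left(\Pi_{S_j(D)} - \id\right)$, the $j$-th term equals $Q_{j-1}\big(\Pi_{S_j(D)} P \Pi_{\mathrm{out}} - P \Pi_{\mathrm{out}}\big)$. The key step is then to control the inner difference $\Pi_{S_j(D)} P \Pi_{\mathrm{out}} - P \Pi_{\mathrm{out}}$ with a single application of Proposition \ref{prop1}. For this I would use the mutual commutativity of all LMPs (Fact \ref{fact4}) to reorder $\Pi_{\mathrm{out}}$ as $\Pi_{S_j'(D)}\,\Pi_{\mathrm{out}}^{(j)}$, where $\Pi_{\mathrm{out}}^{(j)} := \prod_{i\neq j}\Pi_{S_i'(D)}$. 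Pulling the single factor $\Pi_{S_j'(D)}$ next to $P$ lets me rewrite the inner difference as $\big(\Pi_{S_j(D)} P \Pi_{S_j'(D)} - P \Pi_{S_j'(D)}\big)\Pi_{\mathrm{out}}^{(j)}$. Because each $S_j$ has radius at most $r-D$ and $S_j' = S_j(t+2D)$ meets the hypothesis of Proposition \ref{prop1} with $t' = t+2D$, that proposition supplies $\inftynorm{\Pi_{S_j(D)} P \Pi_{S_j'(D)} - P \Pi_{S_j'(D)}} \leq 2\delta$.

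To finish, I would bound each telescoping term by submultiplicativity of the operator norm, using that both $Q_{j-1}$ and $\Pi_{\mathrm{out}}^{(j)}$ are products of mutually commuting projectors (again Fact \ref{fact4}) and are therefore themselves projectors, hence of operator norm at most $1$. Each of the $k$ summands is thus at most $2\delta$, and the triangle inequality yields the claimed bound $2k\delta$.

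The only genuine subtlety — and the step I would be most careful about — is the reordering that isolates $\Pi_{S_j'(D)}$ immediately to the right of $P$: it rests entirely on the commutativity of the outer LMPs (Fact \ref{fact4}), while $P$ itself is \emph{not} assumed to commute with any of them, so the factor $\Pi_{S_j'(D)}$ must be brought adjacent to $P$ before Proposition \ref{prop1} applies verbatim. Everything else is routine norm bookkeeping.
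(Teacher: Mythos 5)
Your proof is correct, and it is essentially the paper's argument: the paper proves this corollary by exactly the combination you describe — repeated application of the triangle inequality (your telescoping sum), Proposition \ref{prop1} for each single-disc term, and H\"older's inequality (submultiplicativity of the operator norm, with each product of commuting LMPs having norm at most $1$). Your careful handling of the reordering via Fact \ref{fact4} just makes explicit the bookkeeping the paper leaves implicit.
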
 

\begin{proof}
The proof follows from Proposition \ref{prop1} and repeated application of triangle inequality and Hoelder's inequality.
\end{proof}

Corollary \ref{corollary1} implies that a version of Proposition \ref{prop1} still holds when $\Pi$ is an LMP with support larger than any disc of radius at most $r-D$, since one can divide the support of $\Pi$ into smaller regions of radius at most $r-D$ and then apply Corollary  \ref{corollary1}.

We will now use this technique in order to prove the main proposition of this section.
Suppose $\triple$ is a $\params$- triple. Roughly speaking, if the support of $\Pi$ fully covers some sufficiently large thickening $\cP'$ of the support of the $\params$-locally invisible operator $P$ (see Figure \ref{fig:prop_2_setup}), then the action of $P$ on the LMP is approximately proportional to the expectation value of $P$ with respect to $\ket{\psi}_{\Lambda'}$.

\begin{proposition}\label{prop2}

Let $\triple$ be a $\params$-triple and let $p:=\text{supp}(P)$. Let $\{S_{i}\}$ be any set of $k_{P}$ discs of radius $D< \text{rad}(S_{i})\leq r-D $,  such that $p(D)\subseteq \bigcup S_{i}$. For every disc $S_{i}$, let $S'_{i}:=S_{i}(t+2D)$ be a disc concentric to $S_{i}$. Suppose $\Pi_{\cP'}=\prod_{i}\Pi_{ S'_{i}(D)}$, where $\cP'=\cP(t+2D)$. Then
\begin{align}
\left\Vert P\Pi_{\cP'}- \expval{P}{\psi}\Pi_{\cP'}\right\Vert_{\infty}\leq 2\delta \, k_{P}
\end{align}
\end{proposition}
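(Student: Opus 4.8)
The plan is to reduce the statement to Corollary \ref{corollary1} and the LMP ``sandwiching'' identity of Lemma \ref{fact5}, gluing the individual projectors together with the union property (Fact \ref{fact:union_prop}). Write $\cP := \bigcup_i S_i$, so that by hypothesis $p(D) \subseteq \cP$ and $\cP' = \cP(t+2D) = \bigcup_i S_i'$. The union property then gives $\prod_i \Pi_{S_i(D)} = \Pi_{\cP(D)}$ and $\Pi_{\cP'} = \prod_i \Pi_{S_i'(D)} = \Pi_{\cP'(D)}$. First I would apply Corollary \ref{corollary1} verbatim to the $k_P$ discs $\{S_i\}$, whose radii satisfy $\text{rad}(S_i) \le r-D$ as required, and collapse both products via the union property to obtain
\begin{align}
\inftynorm{\Pi_{\cP(D)} P \Pi_{\cP'} - P \Pi_{\cP'}} \le 2 k_P \delta. \label{eq:prop2plan_a}
\end{align}
It then remains only to show that the first term equals $\expval{P}{\psi}\Pi_{\cP'}$ \emph{exactly}.

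To do this I would collapse $\Pi_{\cP(D)} P \Pi_{\cP'}$ using Lemma \ref{fact5}. Since $\cP \subseteq \cP'$, the union property gives $\Pi_{\cP(D)}\Pi_{\cP'} = \Pi_{\cP'}$, so $\Pi_{\cP(D)} P \Pi_{\cP'} = \Pi_{\cP(D)} P \Pi_{\cP(D)} \Pi_{\cP'}$. Now I invoke Lemma \ref{fact5} with $O = P$ and the LMP $\Pi_{\cP(D)}$: because $p(D) \subseteq \cP$, the support $p$ of $P$ lies in the $D$-interior $\cP(-D)$, so taking the containing region to be $S' = \cP(-D)$ we have $\text{supp}(P) \subseteq S'$ and $S'(D) = \cP$, which is precisely the hypothesis of the lemma. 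Hence $\Pi_{\cP(D)} P \Pi_{\cP(D)} = \expval{P}{\psi}\Pi_{\cP(D)}$, and applying $\Pi_{\cP(D)}\Pi_{\cP'} = \Pi_{\cP'}$ once more,
\begin{align}
\Pi_{\cP(D)} P \Pi_{\cP'} = \expval{P}{\psi}\Pi_{\cP(D)}\Pi_{\cP'} = \expval{P}{\psi}\Pi_{\cP'}. \label{eq:prop2plan_b}
\end{align}
Substituting \eqref{eq:prop2plan_b} into \eqref{eq:prop2plan_a} yields the claimed bound.

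The genuinely new content is entirely carried by Corollary \ref{corollary1}; everything else is bookkeeping with projectors. The main point to get right is the geometry, and I expect this to be the only real obstacle: one must verify that the single condition $p(D) \subseteq \cP$ is exactly what makes Lemma \ref{fact5} applicable, i.e.\ that the depth-$D$ lightcone forces $U^{\dagger} P U$ to be supported inside the base region $\cP$ of the LMP, and simultaneously that the radii $D < \text{rad}(S_i) \le r-D$ together with the thickenings $S_i' = S_i(t+2D)$ meet the hypotheses of Corollary \ref{corollary1}. The boundary identity $\cP(-D)(D) = \cP$ is the most delicate piece, but it holds up to the inclusions already tolerated by the lemmas, and crucially no extra error is introduced by the collapse step since Lemma \ref{fact5} is an exact operator identity.
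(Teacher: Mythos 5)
Your proposal is correct and takes essentially the same route as the paper's own proof: apply Corollary \ref{corollary1} to the $k_P$ covering discs to obtain $\inftynorm{\Pi_{\cP(D)}P\Pi_{\cP'}-P\Pi_{\cP'}}\leq 2\delta k_P$, then collapse $\Pi_{\cP(D)}P\Pi_{\cP'}$ exactly to $\expval{P}{\psi}\Pi_{\cP'}$ using Lemma \ref{fact5} together with the union property $\Pi_{\cP(D)}\Pi_{\cP'}=\Pi_{\cP'}$. The only difference is cosmetic: you spell out the geometric justification for invoking Lemma \ref{fact5} (that $p(D)\subseteq\cP$ keeps the lightcone of $P$ inside the base region, with $S'=\cP(-D)$), which the paper leaves implicit, and you correctly attribute the collapsing step to the union property (Fact \ref{fact:union_prop}) rather than the stabilization property cited in the paper.
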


\begin{proof}
The proof follows from breaking up the support of the LMP covering the support of $P$ into $k_{P}$ locally invisible regions, such that  we can apply Corollary \ref{corollary1} and then leverage Lemma \ref{fact5} to prove our proposition. Let $\Pi_{\cP}:=\prod_{i} \Pi_{S_{i}(D)}$. Then 
\begin{align}
\inftynorm{\Pi_{\cP} P \Pi_{\cP'}-P \Pi_{\cP'}}= \inftynorm{ \prod_{i=1}^{k_{P}}\Pi_{S_{i}(D)} P \prod_{i=1}^{k_{P}} \Pi_{S'_{i}(D)}-P \prod_{i=1}^{k_{P}} \Pi_{S'_{i}(D)}} \leq 2\delta k_{P},
\end{align}
where we have made use of Corollary \ref{corollary1} in the last step. 
Using Facts \ref{fact:LMP_stab} and \ref{fact5} respectively it follows that $\Pi_{\cP} P \Pi_{\cP'}=\Pi_{\cP} P \Pi_{\cP} \Pi_{\cP'}=\expval{P}{\psi}\Pi_{\cP}  \Pi_{\cP'}=\expval{P}{\psi} \Pi_{\cP'}$ and therefore 
\begin{align}
\inftynorm{\Pi_{\cP} P \Pi_{\cP'}-P \Pi_{\cP'}}=\inftynorm{ \expval{P}{\psi}\Pi_{\cP'}-P \Pi_{\cP'}},
\end{align}
from which follows our claim.
\end{proof}

\begin{figure}[ht]
\centering
\begin{subfigure}{.5\textwidth}
  \centering
  \begin{tikzpicture}[scale=0.8]
\draw[ultra thick] (0,0) rectangle (8,8);
\draw[draw=white,ultra thick,fill=cyan, opacity=0.2] (4,4) circle (2.2cm);
\draw[draw=white,ultra thick,fill=white] (4,4) circle (1.2cm);
\draw[red,ultra thick] (4,4) circle (1.8);

\node at (7,0.5) {$\Lambda$};
\node at (3.5,3.5) {$\textcolor{red}{P}$};
\node at (2.2,2.2) {$\textcolor{blue}{\Pi_{\cP'}}$};
\end{tikzpicture}
\caption{} 
\label{fig:prop_2_setup}
\end{subfigure}%
\begin{subfigure}{.5\textwidth}
  \centering
\begin{tikzpicture}[scale=0.8]
\draw[dashed, thick] (0,4) -- (8,4);
\draw[ultra thick] (0,0) rectangle (8,8);
\draw[red,ultra thick] (3,4) circle (2 cm);
\draw[blue,ultra thick] (5,4) circle (2cm);
\draw[<->,thick] (4,2.7) -- (4,5.3);
\node at (4.5,3.6) {$d_{\text{sep}}$};
\node at (6,4.3) {$M^{c}$};
\node at (5.9,3.7) {$M$};
\node at (7,0.5) {$\Lambda$};
\node at (3,6.3) {$\textcolor{red}{P_{}}$};
\draw [draw=white, fill=cyan, opacity=0.2] (4,2.3) circle (0.85 cm);
\draw [draw=white, fill=violet, opacity=0.2] (4,5.7) circle (0.85 cm);
\draw [draw=white, fill=green, opacity=0.2] (2,5.5) circle (0.85 cm);
\node at (5,6.3) {$\textcolor{blue}{Q_{}}$};
\end{tikzpicture}
\caption{} \label{fig:setup}
  \label{fig:operators}
\end{subfigure}
\caption{
(a) An example of a LMP $\Pi_{\cP'}$ (cyan) covering the support of an approximately locally invisible loop operator $P$ (red). (b) A pair of intersecting $(\delta,A,B)$-locally invisible loop operators $P$ and $Q$ supported on the lattice $\Lambda$ partitioned into two regions $M$ and $M^c$, such that at most one region of intersection of the loop operators is contained in either half. Note that the partition may be  chosen arbitrarily as long as it satisfies this criterion. Let $\cP'$ be a thickening of $\text{supp}(P)$ as shown in Figure \ref{fig:prop_2_setup} and defined in Proposition \ref{prop3}. We may write $\cP'$ as the union of the three sets $V_{1}:=\{S_{i}(D)
\subset \cP': S_{i}(D)\subset M \}$ (cyan shade)  and $V_2:=\{S_{i}(D)\subset \cP': S_{i}(D)\cap \text{supp}(Q)=\emptyset \}$ (green shade) and $V_3=\{S_{i}(D)\subset \cP': (S_{i}(D)\subset M^c)\cap (S_{i}(D) \cap \text{supp}(Q)\neq\emptyset) \}$ (violet shade).} 
\end{figure}
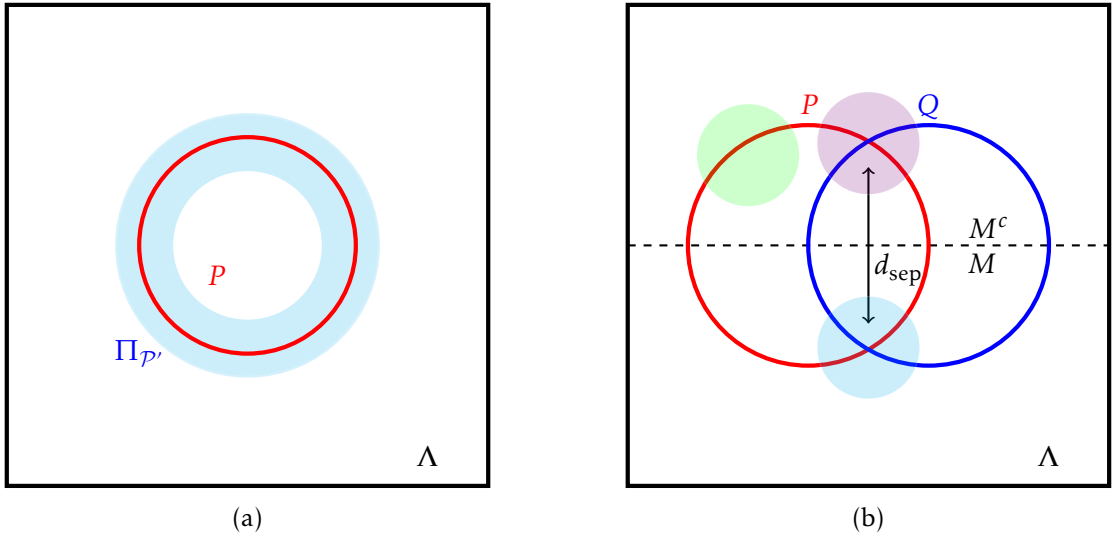

\section{The twist product }\label{sec:twist_product}
We will now introduce a key concept behind our circuit lower bound proof technique. This is the notion of the twist product. 

\begin{definition}(Twist product)
Let $P$ and $Q$ be two bipartite operators that w.l.o.g admit decompositions $P=\sum_{j} P_{M}^{j}\otimes P_{M^{c}}^{j}$ and $Q=\sum_{k}Q_{M}^{k}\otimes Q_{M^{c}}^{k}$.
Then we define their \textit{twist product} as
\begin{align}
P\infty Q:=\sum_{j,k} P_{M^c}^{j}Q_{M^c}^{k}\otimes Q_{M}^{k}P_{M}^{j}
\end{align} 
\end{definition}
\noindent By summing over either of the two indices one may also conveniently write this as:
\begin{align}\label{eq:twist_decomp}
P\infty Q=\sum_{j} P_{M^c}^{j} Q P_{M}^{j}
=\sum_{k} Q_{M}^{k} P Q_{M^c}^{k}
\end{align}
This product was first introduced in \cite{Haah_2016} and can be used to construct a witness function for the presence of long-range entanglement in a state $\ket{\psi}$. Haah shows how one can do so by finding two non-trivial exactly locally invisible operators $P$ and $Q$ with respect to $\ket{\psi}$ that intersect at two points.
He proves that if $\expval{P\infty Q}{\psi}\neq\expval{P }{\psi}\expval{Q}{\psi}$, then the state $\ket{\psi}$ is not connected to a product state via a geometrically local circuit of constant depth and therefore must exhibit long-range entanglement.

As a concrete example, Haah considers the toric code Hamiltonian defined on a sphere. 
For this, he identifies $P$ and $Q$ with an $X$ and $Z$ type stabilizer of the toric code respectively. Both are exactly locally invisible by definition and form closed loops that can be put on the lattice such that they intersect at two points (see Figure \ref{fig:setup}). Their so-called 'twist paring' $\expval{P\infty Q}{\psi}$ evaluates to (-1) and may be understood as the phase acquired by the wavefunction when braiding an $e$ and an $m$ excitation in the toric code around each other. Therefore $\expval{P\infty Q}{\psi}\neq\expval{P }{\psi}\expval{Q}{\psi}$, which implies that the ground state of the toric code on a sphere cannot be prepared with a constant depth circuit. Such states are sometimes referred to as being \textit{topologically ordered}.

Developing this line of argument further then allows for a rigorous proof of a circuit lower bound when the ground state is unique. This setting poses an obstacle to previously established techniques for circuit depth lower bounds \cite{Bravyi_hastings_verstraete,Koenig_Pastawski}, which only deal with the case where the ground space is degenerate.

In this section, we will extend this framework and show that it may still be used when $P$ and $Q$ are only $\params$-approximately locally invisible with respect to $\ket{\psi}$. One may therefore say that the witness function is robust to perturbations to $\ket{\psi}$ in purified distance.

The main purpose of this section is to prove an upper bound on this robust witness function 
\begin{align}
C(P,Q)_{\ket{\psi}}:=\left|\expval{P\infty Q }{\psi}- \expval{P }{\psi} \expval{Q}{\psi} \right| 
\end{align} in terms of the circuit depth $D$ required to prepare $\ket{\psi}$ (see Theorem \ref{thm2}). Later, we will combine this with a lower bound on this function in terms of the local invisibility parameter $\delta$ in order to give a circuit lower bound for low-energy states. 

To prove Theorem \ref{thm2}, we first have to establish some properties of the twist product and how it can be combined with LMPs for our purposes. A key property of the twist product is the following:

\begin{lemma}\label{lem:clean}
Consider a bipartite lattice $\Lambda= M \cup M^{c}$ and let $P,Q,O\in \mathcal{B}(\mathcal{H}_{\Lambda})$. If $\text{supp}(O)\subset M$ or $\text{supp}(O)\cap \text{supp}(Q)=\emptyset $, then $(PO)\infty Q=(P\infty Q) O$.
\end{lemma}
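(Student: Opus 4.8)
The plan is to prove each of the two cases separately, in each case picking whichever of the two equivalent expressions for the twist product in Equation \eqref{eq:twist_decomp} makes the bookkeeping cleanest. The whole argument should reduce to linearity together with pulling a single fixed operator out of a finite sum, so the work is really in choosing the right form.

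First I would handle the case $\text{supp}(O)\subset M$ using the form $P\infty Q=\sum_j P_{M^c}^j Q P_M^j$. Because $O$ acts as the identity on $M^c$, right-multiplying $P$ by $O$ leaves the $M^c$-parts of the operator-Schmidt decomposition untouched and merely appends $O$ to the $M$-parts; that is, a valid decomposition of $PO$ across $M\mid M^c$ is $(PO)_{M}^j=P_M^j O$ and $(PO)_{M^c}^j=P_{M^c}^j$. Substituting this into the convenient form gives $(PO)\infty Q=\sum_j P_{M^c}^j Q P_M^j O$, and since the factor $O$ is the same in every summand it pulls out to the right, yielding exactly $(P\infty Q)O$.

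For the second case, $\text{supp}(O)\cap\text{supp}(Q)=\emptyset$, I would instead use the dual form $P\infty Q=\sum_k Q_M^k P Q_{M^c}^k$, which lets me treat $PO$ as a single operator being twisted against $Q$, so that $(PO)\infty Q=\sum_k Q_M^k P O Q_{M^c}^k$. The goal is again to slide $O$ to the far right past $Q_{M^c}^k$ and factor it out, giving $\sum_k Q_M^k P Q_{M^c}^k O=(P\infty Q)O$. The hard part will be justifying this commutation: it is \emph{not} enough that $O$ commutes with the full operator $Q$, since I need $[O,Q_{M^c}^k]=0$ for each individual tensor factor. This is where I would be most careful, and the support hypothesis is exactly what saves it. I would insist at the outset on an operator Schmidt decomposition of $Q$, for which each $Q_{M^c}^k$ is supported inside $\text{supp}(Q)\cap M^c$; then the disjointness $\text{supp}(O)\cap\text{supp}(Q)=\emptyset$ can be applied factor by factor to conclude $\text{supp}(O)\cap\text{supp}(Q_{M^c}^k)=\emptyset$ and hence $[O,Q_{M^c}^k]=0$ for every $k$. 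With the decomposition fixed this way, everything else is routine reindexing.
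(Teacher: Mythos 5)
Your proposal is correct and takes essentially the same route as the paper's proof: each case is handled by picking the corresponding form of the twist product in Eq.~\eqref{eq:twist_decomp} and pulling the common factor $O$ out of the sum. The only difference is that you explicitly justify the commutation $[O,Q_{M^c}^k]=0$ factor-by-factor by fixing a decomposition of $Q$ whose factors are supported inside $\text{supp}(Q)$ — a point the paper's proof leaves implicit — which is a welcome tightening rather than a different argument.
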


\begin{proof}
This is immediate from rewriting the twist product as in Eq. (\ref{eq:twist_decomp}).
If $\text{supp}(O)\cap \text{supp}(Q)=\emptyset $
then we have $(P O)\infty Q=\sum_{j}Q_{M}POQ_{M^c}=\sum_{j}Q_{M}PQ_{M^c}O=(P\infty Q) O$.
\noindent If $\text{supp}(O)\subset M$, we have 
$(P O)\infty Q=\sum_{j} P_{M^c}^{j} Q P_{M}^{j}O=(P \infty Q)O$
\end{proof}

\noindent From here onwards, we will specifically consider the twist product of two operators $P,Q\in\mathcal{B}(\mathcal{H}_{\Lambda})$ supported on two intersecting annuli  (see Fig. \ref{fig:operators}) and our choice of bipartition $\Lambda=M \cup M^c$ will separate the two points at which the annuli intersect. Note however that our proof techniques could straightforwardly be generalized to settings where both $P$ and $Q$ do not have annular support. The only requirement is for these operators is to intersect at two points that are sufficiently far apart from each other.
\\

The following proposition will be instrumental in proving Theorem \ref{thm2}:

\begin {proposition}\label{prop3}
Let $\left(\{P,Q\},\,\,\rho_{\Lambda},\,\Pi\right)$ be a $\params$-triple, where $P$ and $Q$ are constant-depth unitaries that are supported on 
intersecting annuli (see Fig. \ref{fig:operators}) of equal diameter $d_{P}$. Let $p:=\text{supp}(P)$ denote the annular support and assume that it has a thickness of $\tau_{p}=\bigO(1)$. 

Let $\{S_{i}\}$ be a set of 
discs such that $ \bigcup_{i} S_{i}\supseteq p(t+3D)$, where $3D+t+\tau_{p} \leq \text{rad}(S_{i}) \leq r-D$.
For every disc $S_{i}$, let $S'_{i}:=S_{i}(t+2D)$ be a disc concentric to $S_{i}$ such that $\bigcup_{i} S'_{i}\supseteq p\left(2t+5D\right)$.
Suppose that $\Pi_{\cP'}:=\prod_{i} \Pi_{S_{i}(D)}$ and $\Pi_{\cP''}=\prod_{i}\Pi_{ S'_{i}(D)}$, where $\cP'=\cP(t+2D)$ and 
let $k_{Q}$ be an upper bound on the number of discs $S_{i}(D)$ that touch one of the intersections of $P$ with $Q$. Further assume that the intersections of both annuli are apart by a distance $d_{\text{sep}} > 2\, \text{rad}\left(S'_{i}(D)\right)$.  Then 
\begin{align}
    \inftynorm{ P\infty Q \ket{\psi}_{\Lambda'} - P\Pi_{\cP'}\infty Q \ket{\psi}_{\Lambda'} }\leq 2\delta \,k_{Q} \, \alpha_{P} 
\end{align}
where $\alpha_{P}:=\sum_{j} \left\Vert P_{M}^{j}\right\Vert_{\infty}\left\Vert P_{M^c}^{j}\right\Vert_{\infty}$ for $P=\sum_{j}P_{M}^{j}\otimes P_{M^c}^{j}$.
\end{proposition}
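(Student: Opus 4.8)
The plan is to reduce the estimate to the handful of LMPs that sit at the single intersection of $P$ and $Q$ lying in $M^{c}$, and to control those using the approximate local invisibility of $Q$ through Corollary \ref{corollary1}. First I would split the discs making up $\Pi_{\cP'}=\prod_{i}\Pi_{S_{i}(D)}$ into the three families of Figure \ref{fig:operators}: $V_{1}$, the discs with $S_{i}(D)\subset M$; $V_{2}$, the discs with $\text{supp}(Q)\cap S_{i}(D)=\emptyset$; and $V_{3}$, the discs with $S_{i}(D)\subset M^{c}$ that meet $\text{supp}(Q)$. The separation hypothesis is what makes this classification exhaustive: the bipartition boundary can be kept far enough from both intersection points (possible precisely because $d_{\text{sep}}>2\,\text{rad}(S_{i}'(D))$ while every disc has radius less than $\tfrac12 d_{\text{sep}}$) that any disc meeting $\text{supp}(Q)$—which forces it to sit near one of the two intersections—lies entirely in $M$ or entirely in $M^{c}$ rather than straddling the cut. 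Consequently every disc outside $V_{3}$ is either contained in $M$ or disjoint from $\text{supp}(Q)$, so Lemma \ref{lem:clean} lets me pull each of its LMPs out of the twist product; writing $\Pi_{V_{3}}=\prod_{i\in V_{3}}\Pi_{S_{i}(D)}$ and collecting the remaining factors into $\Pi_{12}$ (all LMPs commute by Fact \ref{fact4}), this gives $(P\Pi_{\cP'})\infty Q=\big((P\Pi_{V_{3}})\infty Q\big)\Pi_{12}$. Applying this to $\ket{\psi}_{\Lambda'}$ and using $\Pi_{12}\ket{\psi}_{\Lambda'}=\ket{\psi}_{\Lambda'}$ (Fact \ref{fact:LMP_stab}) reduces the claim to bounding $\big\|(P\infty Q)\ket{\psi}_{\Lambda'}-\big((P\Pi_{V_{3}})\infty Q\big)\ket{\psi}_{\Lambda'}\big\|$.

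Next I would evaluate this difference explicitly. Since $V_{3}\subset M^{c}$, decomposing $P=\sum_{j}P_{M}^{j}\otimes P_{M^{c}}^{j}$ and using the form $A\infty Q=\sum_{j}A_{M^{c}}^{j}QA_{M}^{j}$ from Eq.~\eqref{eq:twist_decomp} yields $(P\Pi_{V_{3}})\infty Q=\sum_{j}P_{M^{c}}^{j}\Pi_{V_{3}}QP_{M}^{j}$, so the difference operator is $\sum_{j}P_{M^{c}}^{j}(\id-\Pi_{V_{3}})QP_{M}^{j}$. To bound its action on $\ket{\psi}_{\Lambda'}$ I insert the enlarged LMPs $\Pi_{V_{3}'}:=\prod_{i\in V_{3}}\Pi_{S_{i}'(D)}$ (a subproduct of $\Pi_{\cP''}$): by stabilization $\Pi_{V_{3}'}\ket{\psi}_{\Lambda'}=\ket{\psi}_{\Lambda'}$, and since $\Pi_{V_{3}'}$ commutes with $P_{M}^{j}$ (discussed below), $P_{M}^{j}\ket{\psi}_{\Lambda'}=\Pi_{V_{3}'}P_{M}^{j}\ket{\psi}_{\Lambda'}$, turning each summand into $P_{M^{c}}^{j}(\id-\Pi_{V_{3}})Q\Pi_{V_{3}'}P_{M}^{j}\ket{\psi}_{\Lambda'}$. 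Corollary \ref{corollary1} applied to the operator $Q$—which is also $\params$-locally invisible—over the at most $k_{Q}$ discs of $V_{3}$ gives $\inftynorm{(\id-\Pi_{V_{3}})Q\Pi_{V_{3}'}}\leq 2k_{Q}\delta$. The triangle inequality and submultiplicativity of the norm then close the argument:
\[
\left\|\sum_{j}P_{M^{c}}^{j}(\id-\Pi_{V_{3}})Q\Pi_{V_{3}'}P_{M}^{j}\ket{\psi}_{\Lambda'}\right\|\leq\sum_{j}\inftynorm{P_{M^{c}}^{j}}\,\inftynorm{(\id-\Pi_{V_{3}})Q\Pi_{V_{3}'}}\,\inftynorm{P_{M}^{j}}\leq 2\delta\,k_{Q}\,\alpha_{P}.
\]

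The main obstacle is not the norm bookkeeping, which is routine, but justifying the two support facts invoked above, both of which rest on the separation hypothesis. The first is the clean tripartition; the second, and more delicate, is the commutation $[P_{M}^{j},\Pi_{V_{3}'}]=0$. Here $\Pi_{V_{3}'}$ is supported on $\bigcup_{i\in V_{3}}S_{i}'(D)=\bigcup_{i\in V_{3}}S_{i}(t+3D)$, an enlargement of the $V_{3}$ discs by $t+3D$, and one must verify that this enlargement still lies in $M^{c}$, hence stays clear of $\text{supp}(P_{M}^{j})\subseteq p\cap M$. Since the $V_{3}$ discs cluster around the $M^{c}$-intersection, the separation condition $d_{\text{sep}}>2\,\text{rad}(S_{i}'(D))$—together with the radius bounds $3D+t+\tau_{p}\leq\text{rad}(S_{i})\leq r-D$ and the $\bigO(1)$ thickness $\tau_{p}$ of the annulus—is calibrated precisely so that these enlarged discs cannot reach the arc of $p$ crossing into $M$. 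Turning this geometric picture into rigorous containments for the chosen cut, and checking that the constants genuinely close, is the technical heart of the proof.
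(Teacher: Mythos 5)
Your proposal is correct and follows essentially the same route as the paper's proof: the same $V_1,V_2,V_3$ tripartition with Lemma \ref{lem:clean} and Fact \ref{fact:LMP_stab} to reduce to the $V_3$ discs, the same insertion of the enlarged LMPs $\Pi_{S_i'(D)}$ via stabilization, the same commutation $[P_M^j,\Pi_{V_3'}]=0$ justified by $d_{\text{sep}}>2\,\text{rad}(S_i'(D))$, and the same final application of Corollary \ref{corollary1} to $Q$ with H\"older and the triangle inequality to obtain $2\delta\,k_Q\,\alpha_P$.
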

\begin{proof}
We will prove this proposition via Lemma \ref{lem:clean}, Proposition \ref{prop2} and Corollary \ref{corollary1}. 
We shall consider decomposing $\cP'=V_{1}\cup V_{2}\cup V_{3}$ into three sets, that each correspond to the three distinct cases illustrated in Figure \ref{fig:setup}. Accordingly, we will consider the projection $\Pi_{\cP'}$, and decompose it with respect to this partitioning. We then write
\begin{align}
\Pi_{\cP'}=\prod_{S_{i}\in V_3}\Pi_{S_{i}(D)}\prod_{S_{i}\in V_{1}\cup V_{2}} \Pi_{S_{i}(D)}
\end{align}
where, with slight abuse of notation, $S_{i}\in V_{m}$ and later $S'_{i}\in V_{m}$ for $m\in\{1,2,3\}$, if  $S_{i}(D)\in V_{m}$.  
 Using Lemma \ref{lem:clean} and Fact \ref{fact:LMP_stab} we then have
\begin{align}
&P\Pi_{\cP'}\infty Q\ket{\psi}_{\Lambda'}=\left(P\prod_{S_{i}\in V_3}\Pi_{S_{i}(D)}\infty Q\right) \prod_{S_{i}\in V_{1}\cup V_{2}}\Pi_{S_{i}(D)}\ket{\psi}_{\Lambda'}=\left(P\prod_{S_{i}\in V_3}\Pi_{S_{i}(D)}\infty Q\right)\ket{\psi}_{\Lambda'}\\
&=\left(P\prod_{S_{i}\in V_3}\Pi_{S_{i}(D)}\infty Q\right)\prod_{S'_{i}\in V_3}\Pi_{S'_{i}(D)}\ket{\psi}_{\Lambda}.
\intertext{Therefore} 
&\left\Vert \left(P\Pi_{\cP'}\infty Q\right) \ket{\psi}_{\Lambda'}- P\infty Q\ket{\psi}_{\Lambda'}\right\Vert= \left\Vert \left(P\prod_{S_{i}\in V_{3}}\Pi_{S_{i}(D)}\infty Q\right) \ket{\psi}_{\Lambda'}- P\infty Q\ket{\psi}_{\Lambda'}\right\Vert
\intertext{and using Fact \ref{fact:LMP_stab} once again, we have }
&=\left\Vert \left(P\prod_{S_{i}\in V_{3}}\Pi_{S_{i}(D)}\infty Q\right)\prod_{S'_{i}\in V_{3}}\Pi_{S'_{i}(D)} \ket{\psi}_{\Lambda'} - (P\infty Q)\prod_{S'_{i}\in V_{3}}\Pi_{S'_{i}(D)}  \ket{\psi}_{\Lambda'}\right\Vert.
\intertext{We will now choose a partitioning $\Lambda=M\cup M^c$ such that the cut is chosen to be as far away from $\text{supp}(P)\cap\text{supp}(Q)\cap M^c$ as possible, but still separates it from $\text{supp}(P)\cap\text{supp}(Q)\cap M$. From  Eq.(\ref{eq:twist_decomp}) we get}
&=\left\Vert \sum_{j}P_{M^c}^{j}\prod_{S_{i}\in V_{3}}\Pi_{S_{i}(D)} Q P_{M}^{j}\prod_{S'_{i}\in V_{3}}\Pi_{S'_{i}(D)} \ket{\psi}_{\Lambda'}-\left(\sum_{j} P_{M^c}^{j} Q P_{M}^{j}\prod_{S'_{i}\in V_{3}}\Pi_{S'_{i}(D)}\right)      \ket{\psi}_{\Lambda'}\right\Vert\\
&\leq \sum_{j} \inftynorm{P_{M^c}^{j}\prod_{S_{i}\in V_{3}}\Pi_{S_{i}(D)} Q P_{M}^{j}\prod_{S'_{i}\in V_{3}}\Pi_{S'_{i}(D)}-P_{M^c}^{j} Q P_{M}^{j}\prod_{S'_{i}\in V_{3}}\Pi_{S'_{i}(D)} },\intertext{where the upper bound follows from definition of the operator norm and triangle inequality. Since $d_{\text{sep}} > 2\, \text{rad}\left(S'_{i}(D)\right)$, our choice of partitioning implies that $S'_{i}(D)\subset M^c $ if $S'_{i}\in V_3$. Therefore $\left[\prod_{S'_{i}\in V_{3}}\Pi_{S'_{i}(D)}, \,P_{M}^{j}\right]=0$ for all $j$, such that:}
&=\sum_{j} \inftynorm{P_{M^c}^{j}\prod_{S_{i}\in V_{3}}\Pi_{S_{i}(D)} Q \prod_{S'_{i}\in V_{3}}\Pi_{S'_{i}(D)} P_{M}^{j}-P_{M^c}^{j} Q \prod_{S'_{i}\in V_{3}}\Pi_{S'_{i}(D)} P_{M}^{j} }\\
&\leq \left(\sum_{j}\inftynorm{P_{M^c}^{j}}\inftynorm{P_{M}^{j}}\right)\inftynorm{\prod_{S_{i}\in V_{3}}\Pi_{S_{i}(D)} Q \prod_{S_{i}\in V_{3}}\Pi_{S'_{i}(D)}-Q \prod_{S'_{i}\in V_{3}}\Pi_{S'_{i}(D)}}
\leq 2\delta \,k_{Q}\,\alpha_{P} ,
\end{align}
where the last step follows from applying Corollary \ref{corollary1} and assuming that at most $k_{Q}$ discs from the set $\{S_{i}\}$ overlap with with $\text{supp}(Q)\cup M $.
\end{proof}

Next we prove an upper bound on our witness function 
\begin{align}
C(P,Q)_{\ket{\psi}}:=\left| \expval{P\infty Q }{\psi}- \expval{P }{\psi} \expval{Q}{\psi} \right|,
\end{align}
which can be understood as a ``robust'' version of Haah's witness function \cite{Haah_2016}. 

Upper bounding this approximate witness function will be instrumental in proving our circuit-depth lower bound. This will follow with the help of Proposition \ref{prop2} and \ref{prop3}. Our setup will be the same as in Figure \ref{fig:setup}. 

\begin{theorem}\label{thm2}
Let $\left(\{P,Q\}, \Pi, \rho_{\Lambda}\right)$ be a $\params$-triple, where $P$ and $Q$ are as defined in Proposition $\ref{prop3}$. 
Then for $k_{P}$ as defined in Proposition \ref{prop2} and $k_{Q}$ as defined in Proposition \ref{prop3}, we have that 
\begin{align}\label{eq:twist_corr_upper_bound}
\left| \expval{P\infty Q }{\psi}- \expval{P }{\psi} \expval{Q}{\psi} \right|\leq 2\delta \left(\alpha_{P}k_{Q}+\alpha_{Q}k_{P}\right),    
\end{align}
with $\alpha_{P}$ and $\alpha_{Q}$ being defined as in Proposition \ref{prop3}.
\end{theorem}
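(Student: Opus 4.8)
The plan is to establish the bound by a two-stage hybrid argument that gradually ``cleans'' the operator $P$ appearing in the twist product, charging the error at each stage to Propositions \ref{prop3} and \ref{prop2}. The one structural fact I will use repeatedly is that the twist product is linear in its first argument, which is immediate from the decompositions in Eq.~\eqref{eq:twist_decomp}. Working throughout with the purification $\ket{\psi}_{\Lambda'}$, I would bound $\big|\expval{P\infty Q}{\psi}-\expval{P}{\psi}\expval{Q}{\psi}\big|$ by inserting the hybrid $\bra{\psi}(P\Pi_{\cP'})\infty Q\ket{\psi}$ and controlling the two resulting differences separately.

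First I would pass from $P\infty Q$ to $(P\Pi_{\cP'})\infty Q$. Since $\ket{\psi}_{\Lambda'}$ is normalized, Cauchy--Schwarz together with Proposition \ref{prop3} gives
\begin{equation}
\Big|\expval{P\infty Q}{\psi}-\bra{\psi}(P\Pi_{\cP'})\infty Q\ket{\psi}\Big|\leq \Big\Vert\big(P\infty Q-(P\Pi_{\cP'})\infty Q\big)\ket{\psi}_{\Lambda'}\Big\Vert\leq 2\delta\,\alpha_{P}k_{Q},
\end{equation}
which already yields the first term of the claimed bound. Next I would clean $P\Pi_{\cP'}$ down to its expectation value. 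Setting $E:=P\Pi_{\cP'}-\expval{P}{\psi}\Pi_{\cP'}$, Proposition \ref{prop2} gives $\inftynorm{E}\leq 2\delta k_{P}$, and linearity of the twist product yields $\bra{\psi}(P\Pi_{\cP'})\infty Q\ket{\psi}=\expval{P}{\psi}\,\bra{\psi}\Pi_{\cP'}\infty Q\ket{\psi}+\bra{\psi}E\infty Q\ket{\psi}$. The error term is controlled using the representation $E\infty Q=\sum_{k}Q_{M}^{k}E\,Q_{M^{c}}^{k}$ from Eq.~\eqref{eq:twist_decomp}: submultiplicativity of the operator norm and $\big\Vert\ket{\psi}\big\Vert=1$ give $\big|\bra{\psi}E\infty Q\ket{\psi}\big|\leq \inftynorm{E}\sum_{k}\inftynorm{Q_{M}^{k}}\inftynorm{Q_{M^{c}}^{k}}=\alpha_{Q}\inftynorm{E}\leq 2\delta\,\alpha_{Q}k_{P}$, supplying the second term.

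It then remains to evaluate the main term exactly, i.e.\ to show $\bra{\psi}\Pi_{\cP'}\infty Q\ket{\psi}=\expval{Q}{\psi}$; this is the crux. Decomposing $\Pi_{\cP'}=\prod_{i}\Pi_{S_{i}(D)}$ along the partition $\cP'=V_{1}\cup V_{2}\cup V_{3}$ of Figure \ref{fig:setup}, each factor in $V_{1}\cup V_{2}$ is supported in $M$ or disjoint from $\text{supp}(Q)$, so Lemma \ref{lem:clean} pulls it out of the twist product and Fact \ref{fact:LMP_stab} then removes it against $\ket{\psi}$. This leaves $\bra{\psi}\Pi_{V_{3}}\infty Q\ket{\psi}$ with $\Pi_{V_{3}}=\prod_{S_{i}\in V_{3}}\Pi_{S_{i}(D)}$; the separation hypothesis $d_{\text{sep}}>2\,\text{rad}(S_{i}'(D))$ forces every disc in $V_{3}$ to lie entirely in $M^{c}$, so the twist product collapses to $\Pi_{V_{3}}\infty Q=\Pi_{V_{3}}Q$, and since $\bra{\psi}\Pi_{V_{3}}=\bra{\psi}$ by Fact \ref{fact:LMP_stab} we obtain exactly $\expval{Q}{\psi}$. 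Combining the three estimates through the triangle inequality gives the stated bound $2\delta(\alpha_{P}k_{Q}+\alpha_{Q}k_{P})$.

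The hard part will be this last exact evaluation, and more precisely the geometric bookkeeping underlying it: one must verify that a single disc family $\{S_{i}\}$ can be chosen so that the LMP product $\Pi_{\cP'}$ simultaneously meets the radius and covering hypotheses of both Proposition \ref{prop3} and Proposition \ref{prop2} (whose primed and unprimed disc conventions differ), that the cut $\Lambda=M\cup M^{c}$ can be placed so that no disc straddling it meets $\text{supp}(Q)$, and that the $V_{3}$ discs genuinely sit in $M^{c}$. By contrast, the analytic content reduces to routine uses of Cauchy--Schwarz, linearity of the twist product, and submultiplicativity of the operator norm.
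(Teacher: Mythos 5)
Your proposal is correct and takes essentially the same route as the paper's proof: the same insertion of the hybrid terms $\expval{P\Pi_{\cP'}\infty Q}{\psi}$ and $\expval{\Pi_{\cP'}\infty Q}{\psi}\expval{P}{\psi}$, with Proposition \ref{prop3} (via Cauchy--Schwarz) bounding the first difference, Proposition \ref{prop2} together with submultiplicativity of the operator norm bounding the second, and the identity $\expval{\Pi_{\cP'}\infty Q}{\psi}=\expval{Q}{\psi}$ annihilating the third. The only difference is one of detail: you spell out that last identity via Lemma \ref{lem:clean} and Fact \ref{fact:LMP_stab} (and correctly flag the geometric compatibility of the disc families across the two propositions as the delicate point), whereas the paper simply asserts it, the underlying computation being the one already carried out inside the proof of Proposition \ref{prop3}.
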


\begin{proof}
The proof follows via triangle inequality, as well as using Eq. (\ref{eq:twist_decomp}) and our bounds proved in Propositions \ref{prop2} and \ref{prop3} respectively.\\
Let $\cP'\supset p(D)$ be as defined in Proposition \ref{prop3}.
Adding and substracting the two terms $\expval{P\Pi_{\cP'}\infty Q}{\psi}$ and $ \expval{\Pi_{\cP'}\infty Q}{\psi}\expval{P}{\psi}$ to the LHS of Eq. (\ref{eq:twist_corr_upper_bound}) and applying triangle inequality, we get: 
\begin{align}
\left| \expval{P\infty Q }{\psi}- \expval{P}{\psi} \expval{Q}{\psi} \right|\nonumber 
&\leq \left|\expval{P\infty Q}{\psi}-\expval{P\Pi_{\cP'}\infty Q}{\psi}\right| \nonumber\\
& + \left|\expval{P\Pi_{\cP'}\infty Q}{\psi}-\expval{\Pi_{\cP'}\infty Q}{\psi}\expval{P}{\psi}\right|\nonumber\\
&+\left|\expval{\Pi_{\cP'}\infty Q}{\psi}\expval{P}{\psi}-\expval{ Q}{\psi}\expval{P}{\psi}\right|,\label{eq:corr_triangle_bound}\\
&\leq 2\delta\, k_{Q}\alpha_{P} +  \left|\expval{\left(P\Pi_{\cP'}-\Pi_{\cP'}\expval{P}{\psi}\right)\infty Q}{\psi}\right|,
\end{align}
where the third term vanishes since $\expval{\Pi_{\cP'}\infty Q}{\psi}=\expval{Q}{\psi}$ and where the first term can be upper bounded using Proposition \ref{prop3}. Using Cauchy-Schwarz on the second term and switching to operator norm it follows that 
\begin{align}
\left| \expval{P\infty Q }{\psi}- \expval{P}{\psi} \expval{Q}{\psi} \right|&\leq 2\delta\,k_{Q}\alpha_{P} + \inftynorm{P\Pi_{\cP'}-\Pi_{\cP'}\expval{P}{\psi}}\sum_{j}\inftynorm{Q_{M^c}^j}\inftynorm{Q_{M}^j}\intertext{where we have used that $\inftynorm{P\infty Q}= \inftynorm{\sum_{j} Q_{M}^{j} P Q_{M^c}^{j}}\leq \inftynorm{P} \sum_{j} \inftynorm{Q_{M}^{j}}  \inftynorm{Q_{M^c}^{j}} $. Bounding the second term using Proposition \ref{prop2}, we obtain }
&\leq 2\delta k_{Q}\alpha_{P} +2\delta k_{P}\alpha_{Q}=2\delta \left(\alpha_{P}k_{Q}+\alpha_{Q}k_{P}\right)
\end{align}
\end{proof}

\section{Circuit depth lower bound for low-energy states}\label{sec:low_energy_states}

We will now apply the tools we have developed in the previous sections in order to prove a circuit depth lower bound for low-energy states $\rho^{\epsilon}_{\Lambda}$ of of the toric code Hamiltonian in 2D with no assumptions on the ground space degeneracy.

We will prove our result by showing that the stabilizers of toric code  yield approximately locally invisible operators for low-energy states. Our starting point will be to show that for any such state $\rho^{\epsilon}_{\Lambda}$, one can find a region $R
\subset \Lambda$ on which the state has high overlap with its projection onto the ground space of $H_{R}$, i.e. the restriction of $H_{\Lambda}$ to region $R$.
This implies that any toric code stabilizer $s_{R}$ supported on region $R$ that is exactly locally invisible with respect to any of the states in the ground space of $H_{R}$, is approximately locally invisible with respect to low-energy states $\rho^{\epsilon}_{\Lambda}$ of $H_{\Lambda}$. 

While we consider the toric code for concreteness, our bound applies to all commuting projector models in 2D for which there exist local operators that stabilize any ground state and that are constant depth unitaries.

\subsection{Approximately locally invisible operators for low-energy states}
\label{sec:good_subsystem} 

 We now introduce the notion of \emph{good subsystem} and show how it can be used to find approximately locally invisible operators for low-energy states of certain classes of Hamiltonians. Intuitively, a ``good'' subsystem is a subsystem whose energy density within that subsystem is ``sufficiently low.'' 
 This Section contains two elementary but useful facts. The first fact is that, if a global state has an energy density of $\epsilon$, for any $\ell \leq L$, there is a subsystem of size $\ell \times \ell$ with local energy density at most $\epsilon$. We shall refer to such a subsystem as a $\epsilon$-good subsystem. The second fact is that an $\epsilon$-good subsystem has $\Omega(1)$ overlap with some state in the ground space of the subsystem, provided that $\epsilon$ is smaller than the inverse of the subsystem size.

\noindent Let us state the formal definition below.
\begin{definition}
A subsystem $R\subset \Lambda$ is $\epsilon$-good with respect to $\rho_{\Lambda}$ and a Hamiltonian $H_{\Lambda}$ if $\rho_{\Lambda}$ has an energy density of at most $\epsilon$ over $R$.
\end{definition} 


\begin{lemma}
\label{lemma:existence_good}
Let $\rho_{\Lambda}^{\epsilon}$ be a state with energy density $\epsilon$, i.e. $\Tr[H_{\Lambda}\rho^{\epsilon}]=\epsilon\abs{\Lambda}$. For any $\ell$ such that $2\leq \ell \leq L$, there is a subsystem $R$ of size $\ell \times \ell$ such that $R$ is $\epsilon$-good.
\end{lemma}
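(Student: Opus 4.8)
The plan is to prove this by a straightforward averaging (pigeonhole) argument over a family of $\ell \times \ell$ windows covering the lattice, crucially exploiting the fact that each local term of the toric code Hamiltonian is a projector and hence nonnegative. First I would make the target explicit: a subsystem $R$ of size $\ell\times\ell$ is $\epsilon$-good precisely when $\Tr[H_R \rho^{\epsilon}] \le \epsilon\abs{R} = \epsilon \ell^2$, where $H_R = \sum_{X\subset R} h_X$ collects only those terms whose $2\times 2$ support lies entirely inside $R$. So it suffices to exhibit a single window whose restricted energy is at most $\epsilon \ell^2$.

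The key step is to sum the restricted energies over an appropriate family $\mathcal{R}$ of $\ell\times\ell$ windows and swap the order of summation. Since each $h_X = (\id - s)/2$ is a projector, $\Tr[h_X \rho^{\epsilon}]\ge 0$ for every term, and $\sum_{R\in \mathcal{R}} \Tr[H_R \rho^{\epsilon}] = \sum_X n_X \Tr[h_X \rho^{\epsilon}]$, where $n_X$ counts how many windows in $\mathcal{R}$ contain $\text{supp}(h_X)$. Taking $\mathcal{R}$ to be the full set of $\ell\times\ell$ windows on the torus makes $n_X = (\ell-1)^2$ uniform across terms, so the double sum equals $(\ell-1)^2 \Tr[H_\Lambda \rho^{\epsilon}] = (\ell-1)^2 \epsilon \abs{\Lambda}$. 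Dividing by the number of windows, which equals $\abs{\Lambda}$ on the torus, shows the average restricted energy is $(\ell-1)^2 \epsilon < \ell^2 \epsilon$; hence by pigeonhole at least one window $R$ satisfies $\Tr[H_R \rho^{\epsilon}] \le (\ell-1)^2 \epsilon \le \epsilon \ell^2$, making it $\epsilon$-good. An equivalent and arguably cleaner route is to tile $\Lambda$ by $\sim (L/\ell)^2$ disjoint $\ell\times\ell$ blocks: each term then lies in at most one block, so $\sum_k \Tr[H_{R_k}\rho^{\epsilon}] \le \Tr[H_\Lambda \rho^{\epsilon}] = \epsilon\abs{\Lambda}$ by nonnegativity, and averaging over the $(L/\ell)^2$ blocks again produces a block of energy at most $\epsilon \ell^2$.

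The main obstacle is purely the boundary bookkeeping: for open or otherwise irregular boundary conditions the incidence count $n_X$ is no longer uniform and the number of usable windows is reduced, so I would instead upper bound $n_X \le (\ell-1)^2$ (again using nonnegativity of the summands) while lower bounding the window count, or simply use the disjoint-tiling version and absorb the $\bigO(1)$ rounding that arises when $\ell \nmid L$. In the periodic case (Figure \ref{fig:toric_lattice}) both counts are exact and the argument goes through for every $2 \le \ell \le L$ with no loss, which is why the averaging formulation is the robust one to state.
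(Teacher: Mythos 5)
Your proposal is correct and follows essentially the same route as the paper's proof: an averaging (pigeonhole) argument over the family of $\ell\times\ell$ windows, concluding that some window must have restricted energy at most $\epsilon\ell^2$. If anything, your version is slightly more careful than the paper's, which writes $H_{\Lambda}=\sum_{v\in\Lambda}H_{v_\ell}/\ell^2$ even though the true incidence count of a $2\times2$ term is $(\ell-1)^2$ rather than $\ell^2$; your explicit handling of $n_X$ and of boundary effects fixes this harmless imprecision.
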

\begin{proof}
Let $v_{\ell}$ be a $\ell \times \ell$ subsystem such that its bottom left corner is $v\in \Lambda$. We may write our Hamiltonian as:
\begin{equation}
    H_{\Lambda} = \sum_{v\in \Lambda} \frac{H_{v_{\ell}}}{\ell^2}.
\end{equation}
Thus, we can conclude that
\begin{equation}
    \text{Tr}\left[\rho_{\Lambda}^{\epsilon} \frac{H_{\Lambda}}{L^2}\right] = \mathbb{E}_{v\sim \Lambda} \left[\text{Tr}\left[ \rho_{\Lambda}^{\epsilon} \frac{H_{v_{\ell}}}{\ell^2} \right] \right], \label{eq:temp646}
\end{equation}
where $\mathbb{E}_{v\sim \Lambda}[\cdot]$ is the average over $v$ chosen uniformly randomly over $\Lambda$. Since the left hand side of Eq.~\eqref{eq:temp646} is $\epsilon$, we conclude that the mean of $\text{Tr}\left[ \rho_{\Lambda} \frac{H_{v_{\ell}}}{\ell^2} \right]$ ought to be $\epsilon$. If this quantity is strictly larger than $\epsilon$, we arrive at a contradiction. Therefore, there exists a $v\in \Lambda$ such that $    \text{Tr}\left[ \rho_{\Lambda}^{\epsilon} \frac{H_{v_{\ell}}}{\ell^2} \right] \leq \epsilon.$
\begin{equation}
    \text{Tr}\left[ \rho_{\Lambda}^{\epsilon} \frac{H_{v_{\ell}}}{\ell^2} \right] \leq \epsilon.
\end{equation}
\end{proof}
Having shown that for any low-energy  state $\rho^{\epsilon}$ there exists a local patch $R$ with energy density at most $ \epsilon$, we are now in position to show that any low-energy state is close to its projection onto the ground space of $H_{R}$ provided certain conditions on $H_{\Lambda}$ hold. While we can prove such a statement for general frustration-free Hamiltonians with a local gap (see Appendix \ref{app: ff_leakage_bound}), we will only consider frustration-free, commuting Hamiltonians for now.

We show how these conditions imply that the projection onto the ground space of $H_{R}$ approximately stabilizes the low-energy state, i.e., $\rho_\Lambda^{\epsilon}\approx \Pi_R^{\text{0}} \rho_\Lambda^{\epsilon} \Pi_R^{\text{0}}$.
\begin{proposition}\label{prop:leakage}
Suppose $R$ is $\epsilon$-good with respect to $\rho_{\Lambda}^{\epsilon}$. Then 
\begin{equation}
\purifieddistance \left(\rho_\Lambda^{\epsilon}, \frac{\Pi_{R}^{0} \rho_\Lambda^{\epsilon} \Pi_{R}^{0}}{\text{Tr}\left[\Pi_{R}^{0} \rho_\Lambda^{\epsilon} \Pi_{R}^{0}\right]} \right) \leq \sqrt{ |R|\epsilon}
  \end{equation}

\end{proposition}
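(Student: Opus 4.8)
The plan is to lower bound the fidelity $\fid\left(\rho_\Lambda^\epsilon, \sigma\right)$, where $\sigma := \frac{\Pi_R^0 \rho_\Lambda^\epsilon \Pi_R^0}{\Tr[\Pi_R^0\rho_\Lambda^\epsilon\Pi_R^0]}$ is the normalized projected state, and then convert to purified distance via $\purifieddistance(\rho,\sigma) = \sqrt{1-\fid(\rho,\sigma)}$. The conceptual heart of the argument is the clean inequality $\fid\left(\rho_\Lambda^\epsilon, \sigma\right) \geq \Tr[\Pi_R^0\rho_\Lambda^\epsilon]$, which reduces the whole problem to controlling the ``leakage'' $\Tr[(\id-\Pi_R^0)\rho_\Lambda^\epsilon]$ out of the local ground space.

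First I would establish the fidelity bound using Uhlmann's theorem. Fix any purification $\ket{\psi}$ of $\rho_\Lambda^\epsilon$ on the system together with an ancilla, and observe that $\frac{(\Pi_R^0\otimes\id)\ket{\psi}}{\left\Vert(\Pi_R^0\otimes\id)\ket{\psi}\right\Vert}$ is a purification of $\sigma$, since tracing out the ancilla returns $\sigma$. Because $\Pi_R^0$ is a projector, $\left\Vert(\Pi_R^0\otimes\id)\ket{\psi}\right\Vert^2 = \Tr[\Pi_R^0\rho_\Lambda^\epsilon]$, so the overlap of these two purifications equals $\Tr[\Pi_R^0\rho_\Lambda^\epsilon]/\sqrt{\Tr[\Pi_R^0\rho_\Lambda^\epsilon]} = \sqrt{\Tr[\Pi_R^0\rho_\Lambda^\epsilon]}$. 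Uhlmann's theorem then gives $\fid(\rho_\Lambda^\epsilon,\sigma) \geq \Tr[\Pi_R^0\rho_\Lambda^\epsilon]$, and hence $\purifieddistance(\rho_\Lambda^\epsilon,\sigma) \leq \sqrt{\Tr[(\id-\Pi_R^0)\rho_\Lambda^\epsilon]}$.

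Next I would bound the leakage by the local energy. Writing $H_R = \sum_\alpha h_\alpha$ as a sum of mutually commuting projectors (the terms $\frac{\id - s_i}{2}$), frustration-freeness gives $\Pi_R^0 = \prod_\alpha(\id - h_\alpha)$. Simultaneously diagonalizing the $h_\alpha$ yields the operator inequality $\id - \Pi_R^0 \leq \sum_\alpha h_\alpha = H_R$: on any joint eigenvector the left-hand side equals $0$ if all $h_\alpha=0$ and $1$ otherwise, while the right-hand side counts the number of excited terms, which is at least $1$ in that case. Taking the expectation in $\rho_\Lambda^\epsilon$ gives $\Tr[(\id-\Pi_R^0)\rho_\Lambda^\epsilon] \leq \Tr[H_R\rho_\Lambda^\epsilon]$. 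Finally, since $R$ is $\epsilon$-good the definition gives $\Tr[H_R\rho_\Lambda^\epsilon] \leq \epsilon\,|R|$, and chaining the three inequalities produces $\purifieddistance(\rho_\Lambda^\epsilon,\sigma) \leq \sqrt{\epsilon\,|R|}$.

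I expect the only genuinely subtle point to be the fidelity lower bound via Uhlmann, which is the step that crucially exploits that $\Pi_R^0$ is a projector and that fidelity is a \emph{maximum} over purifications. The operator inequality $\id - \Pi_R^0 \leq H_R$ is routine once the commuting frustration-free structure is invoked, and translating the $\epsilon$-good hypothesis into $\Tr[H_R\rho_\Lambda^\epsilon] \leq \epsilon\,|R|$ is immediate. One bookkeeping point worth checking is the normalization relating the number of local terms in $H_R$ to $|R|$, so that the energy-density hypothesis indeed delivers the factor $|R|$ rather than a different power of the subsystem size.
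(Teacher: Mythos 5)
Your proposal is correct and follows essentially the same route as the paper: bound the leakage $\Tr[(\id-\Pi_{R}^{0})\rho_\Lambda^{\epsilon}]$ by the local energy $\Tr[H_{R}\rho_\Lambda^{\epsilon}]\leq |R|\epsilon$, then convert the resulting high overlap $\Tr[\Pi_{R}^{0}\rho_\Lambda^{\epsilon}]\geq 1-|R|\epsilon$ into a purified-distance bound. The only difference is that the paper invokes its two ingredients as cited facts — the commutative union bound and the gentle measurement lemma in purified-distance form — whereas you re-derive them inline, via simultaneous diagonalization of the commuting projectors and via Uhlmann's theorem, respectively.
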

The proof of Proposition~\ref{prop:leakage} is based on two well-known facts: the quantum union bound \cite{Khabbazi_Oskouei_2019} and the gentle measurement lemma \cite{Wilde}. We state these facts below for the readers' convenience. 

\begin{fact}[Commutative union bound\cite{Khabbazi_Oskouei_2019}]\label{lem:union_bound}
Let $\rho \in \density$ and $\{\Pi_{i}\}$ be a set of commuting projectors. Then we have 
\begin{equation}
    1-\Tr[\left(\prod_{i}\Pi_{i}\right) \rho]\leq \sum_{i} \Tr[(\mathbbm{1}-\Pi_{i})\rho]. \label{eq:union_bound}
\end{equation}
\end{fact}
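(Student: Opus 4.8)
The plan is to reduce the trace inequality to an operator inequality and then pair it against $\rho\geq 0$. Concretely, I would first establish that, as operators on $\cH$,
\begin{equation}
\id-\prod_{i}\Pi_{i}\leq \sum_{i}(\id-\Pi_{i}).
\end{equation}
Granting this, the claim is immediate: the difference $\sum_{i}(\id-\Pi_{i})-\left(\id-\prod_{i}\Pi_{i}\right)$ is positive semidefinite, so its trace against $\rho\geq 0$ is nonnegative, and rearranging (using $\Tr[\rho]=1$, so that $1-\Tr[(\prod_{i}\Pi_{i})\rho]=\Tr[(\id-\prod_{i}\Pi_{i})\rho]$) yields exactly $1-\Tr[(\prod_{i}\Pi_{i})\rho]\leq \sum_{i}\Tr[(\id-\Pi_{i})\rho]$.

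To prove the operator inequality I would exploit commutativity. Since the $\Pi_{i}$ pairwise commute, they are simultaneously diagonalizable; fixing a common eigenbasis $\{\ket{v}\}$ in which $\Pi_{i}\ket{v}=p_{i}(v)\ket{v}$ with $p_{i}(v)\in\{0,1\}$, the left-hand operator has eigenvalue $1-\prod_{i}p_{i}(v)$ and the right-hand operator has eigenvalue $\sum_{i}(1-p_{i}(v))$ on $\ket{v}$. It therefore suffices to check the scalar inequality $1-\prod_{i}p_{i}\leq \sum_{i}(1-p_{i})$ for $p_{i}\in\{0,1\}$. This splits into two cases: if all $p_{i}=1$ both sides vanish; if some $p_{j}=0$ then the left side equals $1$, while the right side is a sum of nonnegative integers that contains the term $1-p_{j}=1$, hence is at least $1$. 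Thus the eigenvalue inequality holds on every common eigenvector, and the operator inequality follows.

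As an alternative that avoids invoking simultaneous diagonalization explicitly, I could induct on the number $n$ of projectors. Writing $Q:=\prod_{i=1}^{n-1}\Pi_{i}$, which is itself a projector by commutativity, and using the identity $\id-Q\Pi_{n}=(\id-Q)+Q(\id-\Pi_{n})$, the key observation is that $Q$ and $\id-\Pi_{n}$ commute, so $(\id-Q)(\id-\Pi_{n})$ is a product of commuting projectors and hence positive semidefinite. This gives $Q(\id-\Pi_{n})\leq \id-\Pi_{n}$ and therefore $\Tr[Q(\id-\Pi_{n})\rho]\leq \Tr[(\id-\Pi_{n})\rho]$ since $\rho\geq 0$. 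Combining this with the inductive hypothesis applied to $Q$ closes the induction; the base case $n=1$ is the trivial identity $1-\Tr[\Pi_{1}\rho]=\Tr[(\id-\Pi_{1})\rho]$.

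There is no serious obstacle here, since the statement is elementary, but the one point that genuinely requires care is the role of commutativity. It is precisely commutativity that both makes $\prod_{i}\Pi_{i}$ a projector and lets the problem collapse to the scalar inequality above; for non-commuting projectors the naive bound fails and one only recovers a weaker, constant-loaded version of the union bound. I would therefore make sure the write-up flags exactly where commutativity is used.
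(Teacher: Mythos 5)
Your proposal is correct, and in fact it goes beyond the paper: the paper does not prove Fact~\ref{lem:union_bound} at all, but imports it verbatim from \cite{Khabbazi_Oskouei_2019}, so there is no in-paper argument to compare against. Both of your routes are valid and self-contained. The diagonalization route is sound: finitely many pairwise commuting Hermitian projectors on a finite-dimensional space (the paper assumes $\dim(\mathcal{H}_v)<\infty$) are simultaneously diagonalizable with eigenvalues in $\{0,1\}$, the operator inequality
\begin{equation}
\id-\prod_{i}\Pi_{i}\leq \sum_{i}\left(\id-\Pi_{i}\right)
\end{equation}
reduces to the Boolean union bound $1-\prod_i p_i\leq\sum_i(1-p_i)$, and pairing a positive semidefinite difference against $\rho\geq 0$ with $\Tr[\rho]=1$ gives the claim. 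Your inductive variant is even leaner and isolates the role of commutativity in a single spot: the identity $\id-Q\Pi_n=(\id-Q)+Q(\id-\Pi_n)$ needs nothing, and the step $Q(\id-\Pi_n)\leq \id-\Pi_n$ holds because $(\id-Q)(\id-\Pi_n)$ is a product of two commuting projectors, hence itself a projector and positive semidefinite. Your closing remark is also accurate and well aligned with the paper: commutativity is precisely what makes $\prod_i\Pi_i$ a projector (so that the left-hand trace is even real as written), and in the non-commuting setting one indeed only recovers a constant-loaded version --- this is exactly Gao's union bound \cite{Gao_2015}, with constant $4$, which the paper invokes in Appendix~\ref{app: ff_leakage_bound} via the converse of the detectability lemma. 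The only cosmetic suggestion is to state explicitly that the product in the first route may be taken in any order (commutativity again), since the fact is stated for an unordered set $\{\Pi_i\}$.
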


\begin{fact}[Gentle measurement lemma~\cite{Wilde}]\label{lem:gentle_meas}
Let $\Pi$ be a projection and $\rho\in \density$, such that $\Tr[\Pi\rho]\geq 1-\epsilon$. For $\rho'=\frac{\Pi\rho \Pi}{\Tr[\Pi\rho \Pi]}$, the following bound holds:
\begin{equation}
    \purifieddistance(\rho, \rho')\leq \sqrt{\epsilon}.
\end{equation}
\end{fact}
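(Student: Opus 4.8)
The plan is to reduce the claim to a fidelity lower bound and then establish that bound by a direct application of Uhlmann's theorem with an explicit choice of purifications. Since the purified distance is defined by $\purifieddistance(\rho,\rho')=\sqrt{1-\fid(\rho,\rho')}$, the inequality $\purifieddistance(\rho,\rho')\le\sqrt{\epsilon}$ is equivalent to $\fid(\rho,\rho')\ge 1-\epsilon$. Writing $p:=\Tr[\Pi\rho\Pi]=\Tr[\Pi\rho]$ (using $\Pi^2=\Pi$ and cyclicity of the trace), I would actually prove the slightly stronger statement $\fid(\rho,\rho')\ge p$ and then invoke the hypothesis $p\ge 1-\epsilon$ to conclude. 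Note that $p>0$ whenever $\epsilon<1$, so the normalization defining $\rho'$ is legitimate, and if $\epsilon\ge 1$ the claim is vacuous since $\purifieddistance\le 1$ always.

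To establish $\fid(\rho,\rho')\ge p$, I would use the Uhlmann characterization $\fid(\rho,\rho')=\max|\braket{\phi}{\chi}|^2$, where the maximum is taken over all purifications $\ket{\phi}$ of $\rho$ and $\ket{\chi}$ of $\rho'$ on a common extended space $\mathcal{H}\otimes\mathcal{H}_E$; because the fidelity is a maximum, any single admissible pair of purifications already furnishes a lower bound. I would fix an arbitrary purification $\ket{\phi}$ of $\rho$, i.e. $\Tr_E[\dyad{\phi}{\phi}]=\rho$, and observe that $(\Pi\otimes\id)\ket{\phi}$ is an (unnormalized) purification of $\Pi\rho\Pi$, since $\Tr_E[(\Pi\otimes\id)\dyad{\phi}{\phi}(\Pi\otimes\id)]=\Pi\,\Tr_E[\dyad{\phi}{\phi}]\,\Pi=\Pi\rho\Pi$. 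Hence $\ket{\chi}:=(\Pi\otimes\id)\ket{\phi}/\sqrt{p}$ is a valid normalized purification of $\rho'=\Pi\rho\Pi/p$.

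Plugging this pair into the Uhlmann bound then gives $\fid(\rho,\rho')\ge|\braket{\phi}{\chi}|^2=\big|\expval{\Pi\otimes\id}{\phi}\big|^2/p=(\Tr[\Pi\rho])^2/p=p^2/p=p$, where the middle equality uses that tracing out the register $E$ turns $\expval{\Pi\otimes\id}{\phi}$ into $\Tr[\Pi\rho]=p$. Combining with the hypothesis $p\ge 1-\epsilon$ yields $\fid(\rho,\rho')\ge 1-\epsilon$, and therefore $\purifieddistance(\rho,\rho')=\sqrt{1-\fid(\rho,\rho')}\le\sqrt{\epsilon}$, as claimed.

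The only real subtlety, which I would treat as the main point to verify carefully, is confirming that $(\Pi\otimes\id)\ket{\phi}$ genuinely purifies $\Pi\rho\Pi$ and that the resulting overlap collapses cleanly to $\Tr[\Pi\rho]$; once that is in hand, the rest is a one-line consequence of Uhlmann's theorem and the definition of the purified distance. An alternative route, avoiding Uhlmann, would start from $\fid(\rho,\rho')=\onenorm{\rho^{1/2}\rho'^{1/2}}^2\ge|\Tr[\rho^{1/2}\rho'^{1/2}]|^2$ and manipulate $\rho'^{1/2}=(\Pi\rho\Pi)^{1/2}/\sqrt{p}$ directly, but this forces one to handle the square root of $\Pi\rho\Pi$ explicitly and is messier than the purification argument, so I would adopt the Uhlmann approach.
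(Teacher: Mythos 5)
Your proof is correct: the reduction to $\fid(\rho,\rho')\ge\Tr[\Pi\rho]$ via the explicit purification $(\Pi\otimes\id)\ket{\phi}/\sqrt{p}$ and Uhlmann's theorem is sound, including the edge cases ($p>0$ when $\epsilon<1$, vacuity when $\epsilon\ge 1$). The paper offers no proof of its own here --- it states the Fact and defers to Eq.~(9.202) of Wilde --- and your argument is essentially the standard purification-overlap derivation underlying that citation, so there is nothing to reconcile.
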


The gentle measurement lemma is usually stated in terms of the trace distance. Here we use a version that is more suitable for our purpose and is also tighter by a constant factor. This version of the lemma follows from Eq.(9.202) in Ref.~\cite{Wilde}.

Now we are in a position to prove Proposition~\ref{prop:leakage}.

\emph{Proof of Proposition~\ref{prop:leakage}.} Using the union bound (Fact~\ref{lem:union_bound}) to lower bound the energy of $\rho_{\Lambda}^{\epsilon}$ over the patch $R$, we have 
\begin{equation}
\begin{aligned}
    \text{Tr} \left[ \rho^{\epsilon}_\Lambda H_{R}\right]\geq 1-\Tr[\left(\prod_{s\in S_R}\frac{\mathbbm{1}+s}{2}\right)\rho^{\epsilon}_{\Lambda}].
    \end{aligned}
\end{equation}
As $R$ is $\epsilon$-good it follows that
\begin{equation}
|R| \epsilon \geq 1-\Tr[\left(\prod_{s\in S_R}\frac{\mathbbm{1}+s}{2}\right) \rho^{\epsilon}_{\Lambda}].
\end{equation}
Since $\prod_{s\in S_R} \frac{I+s}{2}=\Pi_{R}^{0}$, we conclude
\begin{equation}
    \Tr[\Pi_{R}^{0} \rho^{\epsilon}_{\Lambda}]\geq 1-|R|\epsilon.
\end{equation}
From Fact~\ref{lem:gentle_meas}, the claim follows.
\[
\pushQED{\qed} 
\qedhere
\popQED
\]     
This bound tells us that distance of $\rho^{\epsilon}_{R}$ to the local ground state  $\rho^{0}_{R}:=\frac{\Pi_{R}^{0} \rho_{R}^{\epsilon} \Pi_{R}^{0}}{\text{Tr}\left[\Pi_{R}^{0} \rho_{R}^{\epsilon} \Pi_{R}^{0}\right]}$ is constant, if the energy of $\rho_{R}^{\epsilon}$ is a constant. Therefore, a low-energy state $\rho^{\epsilon}_{\Lambda}$ actually provides a good approximation to a local ground state on any region $R$ with a constant energy $\abs{R}\epsilon<1$.

The bound given in Proposition \ref{prop:leakage} is key in allowing us to find approximately locally invisible operators for low-energy states $\rho^{\epsilon}_{\Lambda}$ of the toric code Hamiltonian. 

\begin{corollary}\label{coroll:approx_op_low_energy}
Let $\rho^{\epsilon}_{\Lambda}$ be a state of energy density at most $\epsilon$ with respect to the toric code Hamiltonian $H_{\Lambda}$. Suppose $R$ is an $\epsilon$-good subsystem. Then any any $s\in \mathcal{S}^{\text{TC}}_R$ is $\params$-approximately locally invisible with respect to $\rho^{\epsilon}_{\Lambda}$ for $\delta = 2\sqrt{|R|\epsilon}\,\,$ and $r=\bigO\left(\sqrt{\Lambda}\right)$ and $t=2$.
\end{corollary}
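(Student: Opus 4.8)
The plan is to deduce the approximate local invisibility of $s$ with respect to the physical low-energy state by sandwiching the argument between two facts already in hand: the \emph{exact} invisibility of $s$ with respect to the locally projected state, and the robustness statement of Lemma~\ref{lem:robustness_local_invisibility}. The natural auxiliary object is the projected state $\rho^{0}_{\Lambda}:=\locProj{\rho^{\epsilon}_{\Lambda}}$. This is well defined whenever $|R|\epsilon<1$, since the intermediate bound $\Tr[\Pi_R^0\rho^{\epsilon}_{\Lambda}]\geq 1-|R|\epsilon$ established inside the proof of Proposition~\ref{prop:leakage} forces $\Tr[\Pi_R^0\rho^{\epsilon}_{\Lambda}\Pi_R^0]>0$; in the complementary regime the target $\delta=2\sqrt{|R|\epsilon}\geq 1$ makes the statement vacuous, as the purified distance never exceeds $1$.

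First I would check that every $s\in\mathcal{S}^{\text{TC}}_R$ exactly stabilizes $\rho^{0}_{\Lambda}$. Since $s$ lies in the stabilizer group of the code restricted to $R$, it commutes with each generator appearing in the ground-space projector $\Pi_R^0$ of Eq.~\eqref{eq:loc_proj} and acts as the identity on the ground space of $H_R$, whence $s\Pi_R^0=\Pi_R^0=\Pi_R^0 s^{\dagger}$ (using that $\Pi_R^0$ is Hermitian). It follows immediately that $s\rho^{0}_{\Lambda}s^{\dagger}=\locProj{\rho^{\epsilon}_{\Lambda}}=\rho^{0}_{\Lambda}$. Because $s$ is a tensor product of Pauli operators forming a loop, it is a geometrically local unitary of constant depth, so Fact~\ref{fact:exact_invisibility} applies and certifies that $s$ is exactly $(r,t)$-locally invisible with respect to $\rho^{0}_{\Lambda}$ with $r=\bigO(\sqrt{\Lambda})$ and $t=2$, matching the loop example discussed after Fact~\ref{fact:exact_invisibility}.

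Next I would control the distance between the physical state and its projection. Proposition~\ref{prop:leakage}, applied to the $\epsilon$-good region $R$, gives $\purifieddistance(\rho^{\epsilon}_{\Lambda},\rho^{0}_{\Lambda})\leq\sqrt{|R|\epsilon}$. With this closeness estimate, the concluding step is to invoke Lemma~\ref{lem:robustness_local_invisibility} with the assignment $\rho_{\Lambda}\to\rho^{0}_{\Lambda}$ (the state for which $s$ is exactly invisible) and $\sigma_{\Lambda}\to\rho^{\epsilon}_{\Lambda}$ (its perturbation), where the lemma's perturbation parameter — which collides notationally with the energy density and so I would track by its value — equals $\sqrt{|R|\epsilon}$. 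Since $t=2\geq\text{depth}(s)$, the hypothesis of the lemma is satisfied, and its output is precisely that $s$ is $(2\sqrt{|R|\epsilon},r,t)$-approximately locally invisible with respect to $\rho^{\epsilon}_{\Lambda}$, i.e. $\delta=2\sqrt{|R|\epsilon}$, $r=\bigO(\sqrt{\Lambda})$, $t=2$, as claimed.

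The assembly itself is essentially bookkeeping, so the only place requiring care is the consistent matching of parameters across the three ingredients: verifying that the constant-depth loop structure of $s$ legitimately yields the value $t=2$ in Fact~\ref{fact:exact_invisibility}, and that this same $t$ simultaneously meets the depth constraint $t\geq\text{depth}(s)$ demanded by Lemma~\ref{lem:robustness_local_invisibility}. The genuine technical content has already been discharged upstream in Proposition~\ref{prop:leakage}, whose leakage bound is what converts the global energy-density hypothesis into the local closeness-to-ground-space statement that the invisibility machinery then consumes; this corollary merely packages that bound into the $\params$-invisibility language needed for the twist-product argument.
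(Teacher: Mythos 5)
Your proposal is correct and follows essentially the same route as the paper's own proof: define the projected state $\rho^{0}_{\Lambda}$, certify exact $(r,t)$-local invisibility of $s$ via Fact~\ref{fact:exact_invisibility}, bound $\purifieddistance(\rho^{\epsilon}_{\Lambda},\rho^{0}_{\Lambda})\leq\sqrt{|R|\epsilon}$ via Proposition~\ref{prop:leakage}, and conclude with Lemma~\ref{lem:robustness_local_invisibility}. Your additional checks (well-definedness of the projection when $|R|\epsilon<1$, the explicit verification that $s\rho^{0}_{\Lambda}s^{\dagger}=\rho^{0}_{\Lambda}$, and the matching of $t=2$ against $\text{depth}(s)$) are details the paper leaves implicit, but the argument is the same.
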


\begin{proof}
Let $\rho^{0}_{\Lambda}:=\frac{\Pi_{R}^{0} \rho_{\Lambda}^{\epsilon} \Pi_{R}^{0}}{\text{Tr}\left[\Pi_{R}^{0} \rho_{R}^{\epsilon} \Pi_{R}^{0}\right]}$.
As the support of $s$ is contained in $R$ it follows from Fact \ref{fact:exact_invisibility} that any $s$ is $(r, t)$-locally invisible with respect to $\rho^{0}_{\Lambda}$ for $r=\bigO\left(\sqrt{\Lambda}\right)$ and $t=\Omega(1)$. Since $s$ is a depth-1 unitary, one may choose $t=2$.
From Proposition $\ref{prop:leakage}$ and Lemma  \ref{lem:robustness_local_invisibility} our claim then follows.
\end{proof}
This is precisely what we need in order to apply our formalism developed in Sections \ref{sec:local_invisibility}-\ref{sec:twist_product} to low-energy states. This is the content of our main theorem.

\subsection{Main theorem}

Having found non-trivial $\params$-approximately locally invisible operators for low-energy states $\rho^{\epsilon}_{\Lambda}$ of the toric code Hamiltonian, we are now in a position to prove a circuit-depth lower bound for these states. 
\main*


\begin{proof}

Our proof follows from giving upper and lower bounds on $C(P,Q)_{\ket{\psi}}$ for $P$ and $Q$ chosen to be toric code stabilizers supported on the $\epsilon$-good subsystem $R$ of $\rho^{\epsilon}_{\Lambda}$. We will prove an upper bound in terms of the circuit depth $D$ required to prepare any low-energy state $\rho^{\epsilon}_{\Lambda}$ and a lower bound in terms of the energy of $\rho^{\epsilon}_{\Lambda}$ over the patch $R$.

Let $\ketexcited_{\Lambda'}$ be the purification of $\rho^{\epsilon}_{\Lambda}$. Let $R\subset\Lambda\subset\Lambda'$ be an $\epsilon$-good patch and let $ s_{R}^{X}, s_{R}^{Z} \in S^{\text{TC}}_{R} $, where the superscript denotes whether it is an $X$-type or $Z$-type stabilizer, which are $\params$-approximately invisible with respect to $\rho_{\Lambda}^{\epsilon}$ as shown in Corollary \ref{coroll:approx_op_low_energy}. Then
\begin{align}\label{eq:corr_lower_bound}
C\left(s_{R}^{X},s_{R}^{Z}\right)_{\ketexcited}=\abs{\expval{s_{R}^X \infty s_{R}^{Z}}{\psi^{\epsilon}}-\expval{s_{R}^{X}}{\psi^{\epsilon}}\expval{s_{R}^{Z}}{\psi^{\epsilon}}}\geq 2-3\sqrt{|R|\epsilon}+|R|\epsilon,
\end{align}
where $C\left(s_{R}^{X},s_{R}^{Z}\right)_{\ket{\psi^{0}}}=2$. This bound immediately follows from the fact that 
\begin{align}
\left|\expval{O_{R}}{\psi^{\epsilon}}-\expval{O_{R}}{\psi^{0}}\right|\leq  \left\Vert O_{R}\right\Vert_{\infty }\left\Vert\rho^{\epsilon}_{R}-\rho^{0}_{R}\right\Vert_{1},\end{align} 
for any $O_{R}\in \mathcal{B}(\mathcal{H}_{R})$, where $\ket{\psi^{0}}_{\Lambda'}$ is a purification of $\rho^{0}_{\Lambda'}$ as defined in the proof of Corollary \ref{coroll:approx_op_low_energy}. Since  $\left\Vert s^{X}_{R}\infty s^{Z}_{R}\right\Vert_{\infty}=\left\Vert s^{X}_{R}\right\Vert_{\infty}=\left\Vert s^{Z}_{R}\right\Vert_{\infty}=1$ and $\expval{s^{X}_{R}}{\psi^{\epsilon}}=\expval{s^{Z}_{R}}{\psi^{\epsilon}}=1$ and $\expval{s^{X}_{R}\infty s^{X}_{R}}{\psi^{\epsilon}}=-1 $ our lower bound in Eq.~\eqref{eq:corr_lower_bound}. From Proposition $\ref{prop:leakage}$ we have that $\left\Vert\rho^{\epsilon}_{R}-\rho^{0}_{R}\right\Vert_{1}\leq \sqrt{|R|\epsilon}$.
\\

Next, we prove an upper bound on $C\left(s_{R}^{X},s_{R}^{Z}\right)_{\ketexcited}$ via Theorem \ref{thm2}.
For our choices of $P=s^X_{R}$ and $Q=s^Z_{R}$, we have that $\delta=2\sqrt{\abs{R}\epsilon}$\, and $\alpha_{P}=\alpha_Q=1$ and therefore 
\begin{align}
C\left(s_{R}^{X},s_{R}^{Z}\right)_{\ketexcited}\leq 2\sqrt{\abs{R}\epsilon}(k_{P}+k_{Q})
\end{align}
To determine $k_{Q}$, we can choose the radius of the disc-shaped supports $S_{i}(D)$ of the LMPs in Proposition \ref{prop3} to be $\text{rad}(S_{i}(D))= 4D+t+\tau_{p}$, where $\tau_{p}$ denotes equals the thickness of the annular support of $s^x_{R}$ and $t$ denotes the local invisibility parameter.  Thus, $k_Q= 4\,\left(4D+t+\tau_{p}\right)$, since at most $4\,\,\text{rad}\left(S_{i}(D)\right)$ discs will overlap with one of the intersections of the two annuli formed by $s^X_{R}$ and $s^Z_{R}$.

For $k_{P}$ we can independently choose $\text{rad}\left(S_{i}(D)\right)=\mathcal{O}(\sqrt{R})$, such that $k_{P}=1$. This can be done, since $k_{P}$ and $k_{Q}$ bound independent terms in Eq.~\eqref{eq:corr_triangle_bound}. 
Given these choices of $k_{P}$ and $k_{Q}$, we have 
\begin{align}
C\left(s_{R}^{X},s_{R}^{Z}\right)_{\ketexcited}\leq 2\sqrt{\abs{R}\epsilon}\left(9+16\,D\,\right),
\end{align}
as $t=2$ and $\tau_{p}=1$ for the stabilizer $s^{x}_{R}$.
Putting both upper and lower bounds together,  we then have
\begin{align}
2-3\sqrt{|R|\epsilon}+|R|\epsilon\leq C\left(s_{R}^{X},s_{R}^{Z}\right)_{\ketexcited}&\leq 2\sqrt{\abs{R}\epsilon}\left(9+16\,D\,\right),\\
\frac{1}{16\sqrt{\abs{R}\epsilon}}-\frac{21}{32}+\frac{1}{32}\sqrt{\abs{R}\epsilon}&\leq D,
\end{align}
implying that $D=\Omega\left(1/\sqrt{\abs{R}\epsilon}\right)$. To ensure a non-trivial value for the local invisibility parameter we require $\sqrt{\abs{R}\epsilon}< 1/2$ where  $\epsilon\in [0,1)$ since we need that $\abs{R}>1$. This condition is satisfied for $R=\bigO(1/\epsilon^{\alpha})$ for $\alpha\in (0,1)$. 
We thus have
\begin{align} 
D=\Omega\left(1/\epsilon^{\frac{1-\alpha}{2}}\right)\,\, \,\,\,\,\,\,\text{for}\,\,\, \alpha\in (0,1) 
\end{align}\label{eq:circuit_bound}
Our result then follows by taking the minimum between the bound in Eq.~\eqref{eq:circuit_bound}and a circuit lower bound for preparing the exact ground state. 
We can obtain a lower bound for the exact ground state by invoking Corollary III.5 of Haah's work \cite{Haah_2016} and using the stabilizers as exactly locally invisible operators for the ground state. The lower bound for preparing the ground state then becomes $D=\Omega\left(d_{\text{sep}}\right)$, where $d_{\text{sep}}$ is the distance between the two points of intersection of the annuli (see Figure \ref{fig:setup}), which are now allowed to be supported on the whole lattice $\Lambda$. Since our lattice is 2D, we have that $d_{\text{sep}}=\bigO\left(\sqrt{\abs{\Lambda}}\right)$, such that $D=\Omega\left(\sqrt{\abs{\Lambda}}\right)$. 
\end{proof}

Some comments are in order. While we have stated our main theorem in terms of the toric code with arbitrary ground space dimension, our lower bound holds for any CSS stabilizer code with stabilizers that can be contained in an $\epsilon$-good subsystem $R$ and which intersect at only two points. A natural choice for this are codes with deformable and contractable loop-like stabilizers that can be made to fit onto any patch $R$, as can be found in the 2D surface or color codes \cite{Bombin_survey}. For such codes, 
one can always find depth-1 unitaries 
that can be supported on some local patch $R$ and made to intersect at two distant points.

Note that our lower bound does not depend on the explcit value of the twist pairing $\expval{s^{X}\infty s^{Z}}{\psi}$ and therefore applies to any model with constant-depth unitaries that leave any ground state invariant . Examples for this are the quantum double models defined for abelian groups and string-net models \cite{Kitaev2003,Levin_2005} and therefore also include Hamiltonians beyond the stabilizer formalism.

It is important to note that our bounds implicitly depend on the circuit depth of the approximately locally invisible unitaries $P$ and $Q$. In particular, $\alpha_{P}=\mathcal{O}\left(\exp(\text{depth}(P)) \right)$ for a general unitary and likewise for $\alpha_{Q}$. Our circuit depth lower bound technique is therefore suitable for models for which one can find operators $P$ and $Q$ of constant depth.

\section{Outlook}\label{sec:conclusion}

To summarize, we prove a rigorous circuit-depth lower bound for preparing low-energy states of a two-dimensional topologically ordered system. The main surprise is that the complexity, quantified in terms of the depth of the circuit, is lower bounded as a polynomial in $1/\epsilon$. This suggests a subtle difference between topologically ordered Hamiltonians and trivial Hamiltonians at finite temperature, and also shows a fundamental difficulty in approximating local quantities of these systems at low energy using low-depth circuits. 

Going further, we believe that our techniques can be generalized to the setting where our local Hamiltonian is defined on an arbitrary graph and the circuits that prepare its ground states or low-energy states admit all-to-all connectivity. We hope that this might pave the way to finding Hamiltonians that satisfy the NLTS theorem with no requirement on the degeneracy of the ground space.

\vspace{1em}

\textit{Note}: During the completion of this work, we
became aware of independent work by Cong et al. \cite{Cong_22}, which also derives a circuit depth lower bound for the toric code. Our results agree where they overlap.

\section*{Acknowledgements}
AT would like to thank Dom Williamson for helpful comments. IK thanks Robert Koenig for a helpful discussion. This research was supported in part by the National Science Foundation under Grant No. NSF PHY-1748958. We thank the Kavli Institute for Theoretical Physics for hospitality while part of this work was completed. AT was supported by the Sydney Quantum Academy, Sydney, NSW, Australia.

\printbibliography

\appendix

\section{Leakage bound for locally gapped frustration-free  Hamiltonians}\label{app: ff_leakage_bound}
Here we prove a generalization of our leakage bound from Section \ref{sec:good_subsystem} to non-commuting but frustration-free systems that are locally gapped. While we do not use this result for our main result we include it here since it may be of independent interest. Our proof will follow analagously, but we will make a slight modification by using Gao's union bound \cite{Gao_2015} for non-commutative projections and apply it to the so-called detectability lemma operator \cite{anshu2016},  instead of the ground space projector itself.

\noindent We will now briefly state a version of the detectability lemma and its converse which cater to our needs and whose proofs can be found in \cite{anshu2016}. The detectability lemma defines an approximate ground space projector (AGSP) for any gapped frustration-free  Hamiltonian in terms of a product of the local Hamiltonian terms in a particular staggered ordering  (see \cite{aharonov_detectability} for more details) .  We will use the detectability lemma and its converse together with the gentle measurement lemma to bound the distance of our low-energy state to the local ground space. 
We will now briefly state these lemmas without going into much detail and we will defer the proofs to their respective references. 

\begin{fact}(Detectability lemma)\label{fact:detectability}
Let $H=\sum_{i}\mathbbm{1}-\Pi_{i}$ be a frustration-free local Hamiltonian with gap $\gamma$, for a set of projections $\{\Pi_{i}\}$ and let $\Pi^{0}$ denote the projection onto the ground space of $H$. Then there exists an ordered product $\text{DL}(H):=\prod_{i}\Pi_{i}$ , s.t. for any $\ket{\psi}$  we have that
\begin{align}\label{eq:detectability_lemma}
\norm{\text{DL}(H)(\mathbbm{1}-\Pi^{0})\ket{\psi}}^{2}\leq \left(\frac{\gamma}{g^2}+1\right)^{-1}
\end{align}

\noindent where $g\in\mathbbm{N}^{+}$ is a constant for any geometrically local Hamiltonians in any dimension and $q\in \mathbbm{N}^{+}$.
\end{fact}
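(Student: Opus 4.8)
The plan is to recognize the statement as the \emph{detectability lemma} and to prove it by turning the additive spectral-gap lower bound on the energy into a multiplicative contraction of an ordered product of local projections; the constant $g$ will be the number of commuting ``layers'' into which the terms $\{\id-\Pi_i\}$ can be grouped. The origin of the result is \cite{aharonov_detectability}, and the streamlined argument I would follow is that of \cite{anshu2016}.

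First I would use geometric locality to color the terms. Since the interaction graph has bounded degree in any fixed dimension, a greedy coloring partitions $\{\Pi_i\}$ into $g=\bigO(1)$ layers $L_1,\dots,L_g$ whose members have pairwise disjoint supports and therefore commute within a layer. Setting $A_\ell:=\prod_{i\in L_\ell}\Pi_i$, each $A_\ell$ is again a projection, and the commuting union bound (the operator version of Fact~\ref{lem:union_bound}) gives
\begin{equation}
\id-A_\ell\preceq\sum_{i\in L_\ell}\left(\id-\Pi_i\right),
\end{equation}
so that summing over layers yields $\sum_\ell(\id-A_\ell)\preceq H$. I would then \emph{define} the detectability operator as the ordered product $\text{DL}(H):=A_gA_{g-1}\cdots A_1$, one factor per layer.

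Next I would reduce to the excited subspace and set up a telescoping identity. Frustration-freeness gives $\Pi_i\Pi^0=\Pi^0$ and hence $\text{DL}(H)\,\Pi^0=\Pi^0$, which is exactly why the claim is phrased with $(\id-\Pi^0)$: it suffices to bound $\norm{\text{DL}(H)\ket{\phi}}^2$ for a normalized $\ket{\phi}$ with $\Pi^0\ket{\phi}=0$. Writing $B_0:=\id$ and $B_\ell:=A_\ell\cdots A_1$ for the partial products and using that each $A_\ell$ is a projection, one obtains
\begin{equation}
\norm{\text{DL}(H)\ket{\phi}}^2=1-\sum_{\ell=1}^{g}\norm{\left(\id-A_\ell\right)B_{\ell-1}\ket{\phi}}^2 .
\end{equation}
The gap supplies $\expval{H}{\phi}\geq\gamma$, and the target inequality is equivalent to the lower bound $\sum_\ell\norm{(\id-A_\ell)B_{\ell-1}\ket{\phi}}^2\geq\tfrac{\gamma/g^2}{1+\gamma/g^2}$ on the total norm decrease.

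The main obstacle is this last quantitative step. The decrease at layer $\ell$ is measured on the \emph{intermediate} state $B_{\ell-1}\ket{\phi}$ rather than on $\ket{\phi}$, so the energy bound $\sum_\ell(\id-A_\ell)\preceq H$ cannot be substituted term by term. The resolution of \cite{anshu2016} is to control how strongly the earlier projections can conceal a violation that a later layer would otherwise detect, and then to recombine the contributions of the $g$ layers by Cauchy--Schwarz; it is precisely this averaging over $g$ layers that degrades $\gamma$ to $\gamma/g^2$ and yields the stated bound. In the writeup I would present the coloring and the telescoping identity in full and invoke \cite{anshu2016} for the contraction estimate, since that estimate is the genuinely non-trivial ingredient.
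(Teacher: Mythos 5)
The paper never proves this Fact at all: it is imported with the proof explicitly deferred to \cite{aharonov_detectability} and \cite{anshu2016} (``we will defer the proofs to their respective references''), so there is no internal argument to compare against, and your proposal --- which supplies scaffolding and delegates the hard step to the same reference --- is correct and consistent with the paper's treatment. Your scaffolding is sound: the bounded-degree coloring giving $g=\bigO(1)$ layers of mutually commuting projections, the observation that each $A_\ell$ is itself a projection, the operator inequality $\sum_\ell(\id-A_\ell)\preceq H$, the frustration-free identity $\text{DL}(H)\,\Pi^{0}=\Pi^{0}$ justifying the reduction to $\Pi^{0}\ket{\phi}=0$, and the telescoping identity (each orthogonal projection $A_\ell$ splits $\norm{B_{\ell-1}\ket{\phi}}^2$ into $\norm{A_\ell B_{\ell-1}\ket{\phi}}^2+\norm{(\id-A_\ell)B_{\ell-1}\ket{\phi}}^2$) are all correct. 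Two caveats are worth stating plainly. First, given your telescoping identity, the lower bound you leave to \cite{anshu2016} is \emph{equivalent} to the lemma itself, so your writeup establishes no more than the paper's bare citation does --- acceptable here, since the paper does exactly the same, but the division into ``routine part'' and ``cited part'' is somewhat cosmetic. Second, the argument in \cite{anshu2016} does not actually run through your telescoping decomposition: it considers the normalized post-measurement state $\text{DL}(H)\ket{\phi}/\norm{\text{DL}(H)\ket{\phi}}$, notes it stays in the excited space (since $\Pi^{0}$ commutes with the analysis and $\Pi^{0}\text{DL}(H)\ket{\phi}=\Pi^{0}\ket{\phi}=0$), upper bounds its energy by roughly $g^{2}\left(\norm{\text{DL}(H)\ket{\phi}}^{-2}-1\right)$ via a layer-by-layer Cauchy--Schwarz argument, and then applies the spectral gap from below; solving for $\norm{\text{DL}(H)\ket{\phi}}^{2}$ yields $(\gamma/g^{2}+1)^{-1}$ directly. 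So if you cite \cite{anshu2016} for ``the contraction estimate,'' what you will find there is this energy bound on the post-$\text{DL}$ state rather than a lower bound on your sum of per-layer norm losses; quoting their result wholesale closes the proof immediately, whereas grafting it onto your telescoping identity would need a little extra translation.
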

\noindent 
 Since $H$ is frustration-free, $\text{DL}(H)$ exactly stabilizes the ground space, i.e. $\text{DL}(H)\Pi^{0}=\Pi^{0}$ and we may therefore equivalently write Equation \ref{eq:detectability_lemma} as $\inftynorm{\Pi^{0}-\text{DL}(H)}\leq \left(\frac{\gamma}{g^2}+1\right)^{-\frac{1}{2}} $.
\noindent Moreover, one may apply Gao's union bound \cite{Gao_2015} to $\text{DL}(H)$ to in order prove a converse to the detectability lemma. 

\begin{fact}(Converse of the detectability lemma)
Let $\rho\in \mathcal{H}$ be an arbitrary state and let $\text{DL}(H)$ be the detectability lemma operator for a local Hamiltonian $H=\sum_{i}\mathbbm{1}-\Pi_{i}$ as defined in Fact \ref{fact:detectability}. 
\begin{align}
\Tr[\text{DL}^{\dagger}(H)\text{DL}(H)\rho]\geq 1-4\Tr[H\rho]
\end{align}
\end{fact}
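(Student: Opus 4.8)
The plan is to recognize the left-hand side as the success probability of a sequence of projective measurements and then apply a non-commutative union bound. Writing $\text{DL}(H)=\Pi_{N}\cdots\Pi_{1}$ for the ordered product of local ground-space projectors appearing in Fact~\ref{fact:detectability}, I would first use cyclicity of the trace together with $\Pi_{i}^{2}=\Pi_{i}$ to observe that
\begin{align}
\Tr[\text{DL}^{\dagger}(H)\,\text{DL}(H)\,\rho]=\Tr\!\left[\Pi_{N}\cdots\Pi_{1}\,\rho\,\Pi_{1}\cdots\Pi_{N}\right],
\end{align}
which is exactly the probability that the sequential measurements $\Pi_{1},\dots,\Pi_{N}$, performed in this order on $\rho$, all return the accepting outcome. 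Hence the claim is equivalent to the failure-probability bound $1-\Tr[\text{DL}^{\dagger}(H)\text{DL}(H)\rho]\leq 4\sum_{i}\Tr[(\mathbbm{1}-\Pi_{i})\rho]$, and since $H=\sum_{i}(\mathbbm{1}-\Pi_{i})$ the right-hand sum is precisely $\Tr[H\rho]$.

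With this reformulation in hand, I would invoke Gao's quantum union bound for sequential projective measurements~\cite{Gao_2015}, which states that for any projectors $\{\Pi_{i}\}$ and any state $\rho$ the failure probability of the sequence is at most a universal constant times the sum of the individual failure probabilities $\Tr[(\mathbbm{1}-\Pi_{i})\rho]$. Taking this constant to be $4$ and identifying $\sum_{i}\Tr[(\mathbbm{1}-\Pi_{i})\rho]$ with $\Tr[H\rho]$ then yields the stated bound immediately. Note that only the sequence of projectors matters, not the specific staggered grouping used to define $\text{DL}(H)$, so the union bound applies verbatim; in particular this route sidesteps any appeal to the gap $\gamma$ and works for an arbitrary state $\rho$, as claimed.

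Should one prefer a self-contained argument rather than citing Gao, I would start from the telescoping identity, valid for any unit vector $\ket{\phi}$ (taking $\ket{\phi}$ to be a purification of $\rho$ and averaging over the purifying register at the end),
\begin{align}
1-\norm{\text{DL}(H)\ket{\phi}}^{2}=\sum_{k=1}^{N}\norm{(\mathbbm{1}-\Pi_{k})\,\Pi_{k-1}\cdots\Pi_{1}\ket{\phi}}^{2},
\end{align}
obtained by inserting $\mathbbm{1}=\Pi_{k}+(\mathbbm{1}-\Pi_{k})$ one layer at a time. The main obstacle is the genuinely non-commutative step that follows: bounding each term $\norm{(\mathbbm{1}-\Pi_{k})\Pi_{k-1}\cdots\Pi_{1}\ket{\phi}}^{2}$ by the unperturbed quantity $\norm{(\mathbbm{1}-\Pi_{k})\ket{\phi}}^{2}$ up to a constant, i.e.\ controlling how the preceding projectors distort the $k$-th failure amplitude. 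This is exactly where Gao's argument and the loss of the sharp constant (the factor $4$) arise; when the $\Pi_{i}$ commute the distortion disappears and one recovers the commutative union bound of Fact~\ref{lem:union_bound} with constant $1$.
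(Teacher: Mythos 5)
Your proof is correct and takes essentially the same route as the paper: the paper treats this fact as imported from the literature, noting that it follows by applying Gao's union bound \cite{Gao_2015} to the sequential measurement defined by $\text{DL}(H)$, which is exactly your main argument (rewrite $\Tr[\text{DL}^{\dagger}(H)\text{DL}(H)\rho]$ via cyclicity as the success probability of sequentially measuring the $\Pi_i$, apply Gao's bound with its constant $4$, and identify $\sum_i\Tr[(\mathbbm{1}-\Pi_i)\rho]=\Tr[H\rho]$). Your supplementary telescoping identity correctly isolates where the non-commutative difficulty (and the factor $4$) enters, but it is not needed beyond the citation-based argument.
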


The converse will be useful to us as it will allow us to make a statement about the actual distance of $\rho^{\epsilon}_{R}$ to the ground space of $H_{R}$ only in terms of $\Tr[H_{R}\rho^{\epsilon}_{R}]$ and the local gap $\gamma_{R}$.

\begin{proposition}
Let  $H_{\Lambda}=\sum_{X\subset\Lambda}\mathbbm{1}-\Pi_{X}$ be a frustration-free, non-commuting local Hamiltonian and 
suppose $R\subset \Lambda$ is $\epsilon$-good for a low-energy state $\rho_{\Lambda}^{\epsilon}$. Suppose $\Pi^{0}_{R}$ is the projection onto the ground space of $H_{R}$. Then 
\begin{align}
\purifieddistance \left(\rho_\Lambda^{\epsilon}, \frac{\Pi_{R}^{0} \rho_\Lambda^{\epsilon} \Pi_{R}^{0}}{\text{Tr}\left[\Pi_{R}^{0} \rho_\Lambda^{\epsilon} \Pi_{R}^{0}\right]} \right)&\leq \sqrt{4\abs{R}\epsilon+\left(\frac{\gamma_{R}}{g^2}+1\right)^{-1}},
\end{align}
where $\gamma_{R}$ denotes the local gap of $H_{\Lambda}$, which we assume to be a constant.

\end{proposition}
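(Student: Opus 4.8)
The plan is to mirror the commutative leakage bound of Proposition~\ref{prop:leakage}, but with the commutative union bound and the bare ground-space projector replaced by the detectability-lemma operator $\text{DL}(H_R)$ together with its converse. Concretely, I would aim to establish the lower bound $\Tr[\Pi_{R}^{0}\rho_{\Lambda}^{\epsilon}] \geq 1 - 4\abs{R}\epsilon - \left(\frac{\gamma_R}{g^2}+1\right)^{-1}$, after which a single application of the gentle measurement lemma (Fact~\ref{lem:gentle_meas}), with ``leakage parameter'' equal to $4\abs{R}\epsilon + (\gamma_R/g^2+1)^{-1}$, delivers the stated purified-distance bound immediately.

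First I would apply the converse of the detectability lemma to $\rho_{\Lambda}^{\epsilon}$ with the restricted Hamiltonian $H_R$. Since $R$ is $\epsilon$-good we have $\Tr[H_R\rho_{\Lambda}^{\epsilon}] \leq \abs{R}\epsilon$, so the converse yields $\Tr[\text{DL}^{\dagger}(H_R)\text{DL}(H_R)\rho_{\Lambda}^{\epsilon}] \geq 1 - 4\abs{R}\epsilon$. The remaining task is to convert this statement about $\text{DL}^{\dagger}\text{DL}$ into one about $\Pi_{R}^{0}$ while losing only a quantity of order $(\gamma_R/g^2+1)^{-1}$, rather than its square root; this is the step I expect to be the main obstacle, since a naive perturbative estimate of $\text{DL}^{\dagger}\text{DL} - \Pi_R^0$ produces cross terms of order $\inftynorm{\text{DL}(H_R)-\Pi_R^0} = (\gamma_R/g^2+1)^{-1/2}$, which would ruin the bound.

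The key observation that resolves this is an \emph{exact} cancellation of the cross terms. Writing $\Delta := \text{DL}(H_R) - \Pi_{R}^{0}$, frustration-freeness means $\Pi_{R}^{0}$ is the simultaneous $+1$ eigenprojector of every local projector $\Pi_i$ appearing in $\text{DL}(H_R)=\prod_i \Pi_i$, i.e. $\Pi_i \Pi_{R}^{0} = \Pi_{R}^{0}\Pi_i = \Pi_{R}^{0}$. Multiplying through the product gives $\text{DL}(H_R)\Pi_{R}^{0} = \Pi_{R}^{0}\,\text{DL}(H_R) = \Pi_{R}^{0}$, hence $\Delta\Pi_{R}^{0} = \Pi_{R}^{0}\Delta = 0$. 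Expanding $\text{DL}^{\dagger}(H_R)\text{DL}(H_R) = (\Pi_{R}^{0}+\Delta)^{\dagger}(\Pi_{R}^{0}+\Delta)$ then collapses to the identity $\text{DL}^{\dagger}(H_R)\text{DL}(H_R) = \Pi_{R}^{0} + \Delta^{\dagger}\Delta$, with no surviving cross terms. Because $\Delta^{\dagger}\Delta \succeq 0$ and $\inftynorm{\Delta} = \inftynorm{\text{DL}(H_R)-\Pi_{R}^{0}} \leq (\gamma_R/g^2+1)^{-1/2}$ by the detectability lemma (Fact~\ref{fact:detectability}), we get $0 \leq \Tr[\Delta^{\dagger}\Delta\,\rho_{\Lambda}^{\epsilon}] \leq (\gamma_R/g^2+1)^{-1}$.

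Combining the two steps gives $\Tr[\Pi_{R}^{0}\rho_{\Lambda}^{\epsilon}] = \Tr[\text{DL}^{\dagger}\text{DL}\,\rho_{\Lambda}^{\epsilon}] - \Tr[\Delta^{\dagger}\Delta\,\rho_{\Lambda}^{\epsilon}] \geq 1 - 4\abs{R}\epsilon - (\gamma_R/g^2+1)^{-1}$, and the gentle measurement lemma closes the argument. The only genuine subtlety is the exact vanishing of the cross terms, which hinges on the two-sided relation $\Pi_i \Pi_{R}^{0} = \Pi_{R}^{0}\Pi_i = \Pi_{R}^{0}$ guaranteed by frustration-freeness; this is precisely what upgrades the naive square-root-scale error into the squared quantity $(\gamma_R/g^2+1)^{-1}$ that appears under the root in the final bound. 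In the commuting limit $\text{DL}(H_R)=\Pi_{R}^{0}$ and this extra term disappears, recovering Proposition~\ref{prop:leakage} up to the constant factor of $4$ inherited from the converse detectability estimate.
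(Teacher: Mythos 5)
Your proposal is correct and shares the paper's overall skeleton: apply the converse of the detectability lemma to obtain $\Tr[\text{DL}^{\dagger}(H_R)\text{DL}(H_R)\,\rho_{\Lambda}^{\epsilon}]\geq 1-4\abs{R}\epsilon$, convert this into a lower bound on $\Tr[\Pi_{R}^{0}\rho_{\Lambda}^{\epsilon}]$, and finish with the gentle measurement lemma (Fact~\ref{lem:gentle_meas}). Where you differ is the conversion step, and your version is in fact the more careful one. The paper bounds $\inftynorm{\text{DL}^{\dagger}(H_{R})\text{DL}(H_{R})-\Pi^{0}_{R}}$ by the chain $\inftynorm{\text{DL}^{\dagger}(H_{R})}\cdot\sup_{\ket{\psi}}\norm{\text{DL}(H_{R})(\mathbbm{1}-\Pi_{R}^{0})\ket{\psi}}$ and asserts this is at most $\left(\frac{\gamma_{R}}{g^2}+1\right)^{-1}$; but since Fact~\ref{fact:detectability} bounds the \emph{squared} norm, that chain as written only delivers $\left(\frac{\gamma_{R}}{g^2}+1\right)^{-1/2}$, which would weaken the proposition. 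Your identity $\text{DL}^{\dagger}(H_R)\text{DL}(H_R)=\Pi_{R}^{0}+\Delta^{\dagger}\Delta$, with the cross terms killed by $\Pi_{R}^{0}\Delta=\Delta\Pi_{R}^{0}=0$, is precisely what restores the full power: $\inftynorm{\text{DL}^{\dagger}\text{DL}-\Pi_{R}^{0}}=\inftynorm{\Delta}^{2}\leq\left(\frac{\gamma_{R}}{g^2}+1\right)^{-1}$. (Equivalently, one can observe $\Delta=(\mathbbm{1}-\Pi_{R}^{0})\,\text{DL}(H_R)\,(\mathbbm{1}-\Pi_{R}^{0})$ and apply the detectability bound once to $\text{DL}$ and once to $\text{DL}^{\dagger}$.) A minor bonus of your route: since $\Delta^{\dagger}\Delta\succeq 0$, the error in replacing $\text{DL}^{\dagger}\text{DL}$ by $\Pi_{R}^{0}$ is one-sided, i.e. $\Tr[\Pi_{R}^{0}\rho_{\Lambda}^{\epsilon}]=\Tr[\text{DL}^{\dagger}\text{DL}\,\rho_{\Lambda}^{\epsilon}]-\Tr[\Delta^{\dagger}\Delta\,\rho_{\Lambda}^{\epsilon}]$, which is exactly the form needed before invoking gentle measurement. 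In short: same strategy, but your handling of the key operator-norm step repairs a square-root-sized slip in the paper's appendix derivation while arriving at the identical final bound.
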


\begin{proof} 
Our proof strategy is to give a lower bound to $\Tr[\Pi^{0}_{R}\rho_{R}^{\epsilon}]$ by using the fact that we may use the converse of the detectability lemma and our low-energy assumption to give a lower bound to $\Tr[\text{DL}^{\dagger}(H_{R})\text{DL}(H_{R})\rho_{R}^{\epsilon}]$ and the fact that $\text{DL}(H_{R})$ is an approximation to $\Pi^{0}_{R}$. Our estimate for $\Tr[\Pi^{0}_{R}\rho_{R}^{\epsilon}]$ then allows to invoke the gentle measurement lemma. 
We thus have
\begin{align}
\inftynorm{\text{DL}^{\dagger}(H_{R})\text{DL}(H_{R})-\Pi^{0}_{R}}&=\sup_{\norm{\ket{\psi}}=1} \norm{\text{DL}^{\dagger}(H_{R})\text{DL}(H_{R})(\mathbbm{1}-\Pi_{R}^{0})\ket{\psi}}\\
&\leq\sup_{\norm{\ket{\psi}}=1}\inftynorm{\text{DL}^{\dagger}(H_{R})}\norm{(\text{DL}(H_{R})(\mathbbm{1}-\Pi_{R}^{0})\ket{\psi}}
\leq \left(\frac{\gamma_{R}}{g^2}+1\right)^{-1},
\end{align}
where the last line follows from the definition of the operator norm, the fact that $\inftynorm{\text{DL}^{\dagger}(H_{R})}\leq 1$ since it is a product of projections and the use of the detectability lemma. With regards to an estimate of $\Tr[\Pi^{0}_{R}\rho^{\epsilon}_{\Lambda}]$, we therefore have 
\begin{align}
\left|\Tr[\Pi^{0}_{R}\rho^{\epsilon}_{\Lambda}]-\Tr[\text{DL}^{\dagger}(H_R)\text{DL}(H_R)\rho^{\epsilon}_{\Lambda}]\right|\leq \left(\frac{\gamma_{R}}{g^2}+1\right)^{-1}.
\end{align}
Using our low-energy assumption and the converse of the detectability lemma gives

\begin{align}
\Tr[\Pi^{0}_{R}\rho^{\epsilon}_{\Lambda}]\geq 1-\left(4\abs{R}\epsilon+\left(\frac{\gamma_{R}}{g^2}+1\right)^{-1}\right).
\end{align}
From the gentle measurement lemma, our proposition then follows.\\

\end{proof}

\end{document}